\definecolor{brickred}{rgb}{0.8, 0.25, 0.33}
\DeclareRobustCommand{\gobblefive}[5]{}
\newcommand*{\SkipTocEntry}{\addtocontents{toc}{\gobblefive}}
\DeclareMathOperator{\Proj}{Proj} 
\renewcommand{\vec}[1]{\mathbf{#1}}
\DeclareMathOperator{\Bl}{Bl}
\DeclareMathOperator{\D}{D}
\DeclareMathOperator{\Eg}{E}
\newcommand{\C}{\mathbb{C}} 
\newcommand{\Z}{\mathbb{Z}}
\newcommand{\BF}{\mathbb{F}}
\newcommand{\Pj}{\mathbb{P}}
\DeclareMathOperator{\Id}{Id}
\DeclareMathOperator{\Div}{Div}
\newtheorem{maintheorem}{Theorem}
\newtheorem{theorem}{Theorem}[section]
\newtheorem{proposition}[theorem]{Proposition}
\newtheorem{corollary}[theorem]{Corollary}
\newtheorem{lemma}[theorem]{Lemma}
\theoremstyle{definition}
\newtheorem{definition}[theorem]{Definition}
\newtheorem{notation}[theorem]{Notation}
\theoremstyle{remark} 
\newtheorem{remark}[theorem]{Remark}
\newtheorem{conjecture}{Conjecture}
\numberwithin{equation}{section}
\theoremstyle{plain}
\newcommand{\p}{\mathbb{P}}
\newcommand{\pain}[1]{\operatorname{P}_{\mathrm{#1}}}
\newcommand{\defeq}{\vcentcolon=}
\definecolor{dark green}{rgb}{0.0, 0.5, 0.0}
\title{The Painlev\'e equivalence problem for a constrained 3D system}
\author[Galina Filipuk]{Galina Filipuk$^{1}$}
\address{$^1$Institute of Mathematics, University of Warsaw, ul. Banacha 2, 02-097, Warsaw, Poland. ORCID: 0000-0003-2623-5361.  (Corresponding author)}
\email{filipuk@mimuw.edu.pl}
\author[Michele Graffeo]{Michele Graffeo$^{2}$}
\address{$^{2}$Scuola Internazionale Studi Superiori Avanzati, Via Bonomea, 265, 34136 Trieste. ORCID: 000-0002-7973-6023.}
\email{mgraffeo@sissa.it}
\author[Giorgio Gubbiotti]{Giorgio Gubbiotti$^{3}$} 
\address{$^{3}$Dipartimento di Matematica ``Federigo Enriques'' Universit\`a degli 
Studi di Milano, Via Cesare Saldini 50, 20133, Milano and INFN Sezione di Milano,
Via Giovanni Celoria 16, 20133, Milano. ORCID: 0000-0003-0423-3736.}
\email{giorgio.gubbiotti@unimi.it}
\author[Alexander Stokes]{Alexander Stokes$^{1,4}$}
\address{$^{4}$ Waseda Institute for Advanced Study, Waseda University, Nishiwaseda Building,
Nishiwaseda 1-21-1, Shinjuku-ku, Tokyo 169--0051 Japan. ORCID: 0000-0001-6874-7141.
}
\email{stokes@aoni.waseda.jp}
\subjclass{34M55
, 
14E05
,
14H70
,
37J35
}
\keywords{
Painlev\'e equations; regularisation; resolution of singularities; space of initial conditions; Hamiltonian systems.
}
\begin{document}

\begin{abstract}
    In this paper we propose a geometric approach to study Painlev\'e equations appearing as constrained systems of three first-order ordinary differential equations.   We illustrate this approach on a system of three first-order differential
    equations arising in the theory of semi-classical orthogonal polynomials. We show that it 
    can be restricted to a system of two first-order differential equations
    in two different ways on an invariant hypersurface.
    We build the space of initial conditions for each of these restricted systems and 
    verify that they exhibit the Painlev\'e property from a geometric perspective.
    Utilising the Painlev\'e identification algorithm 
    we also relate this system to the Painlev\'e VI equation and we build its
    global Hamiltonian structure.
    Finally, we prove that the autonomous limit of the original system is Liouville
    integrable, and the level curves of its first integrals are elliptic curves, which 
    leads us to conjecture that the 3D system itself also possesses the Painlev\'e property
 without the need to restrict it to the invariant hypersurface.
\end{abstract}

\maketitle

\setcounter{tocdepth}{1}

{\hypersetup{linkcolor=black}\tableofcontents}
\section{Introduction}

At the turn of the 20\textsuperscript{th} century Paul Painlev\'e and his school
carried out a program aiming to classify all second-order differential equations 
in the complex domain admitting at most movable 
poles of solutions~\cite{Painleve1900,Painleve1902,Fuchs1906,Gambier1910}\footnote{To be more 
precise, Painlev\'e and his school considered equations of the form 
$x''(t)=F(x(t),x'(t);t)$ where $F$ is polynomial in $x'(t)$, rational in $x(t)$, 
and analytic in $t$.}. The main outputs of this program were six equations, nowadays known 
as the \emph{Painlev\'e equations} and denoted $\pain{I}$, $\pain{II}$, \ldots, $\pain{VI}$,
whose solutions are transcendental functions not yet defined at that time. 
We refer to the book of Edward L. Ince~\cite{InceBook} for an extended explanation of the 
original work of Painlev\'e and his followers.

During the 20\textsuperscript{th} century Painlev\'e equations and their discrete counterparts appeared 
in many problems, both theoretical and applied. In particular, they play a prominent 
r\^ole in the theory of semi-classical orthogonal polynomials, see \cite{walterbook} 
and the numerous references therein. These reasons keep fueling the interest in Painlev\'e 
equations  nowadays.

One of the main advancements in the theory of Painlev\'e equations was made in the late
70's by Kazuo Okamoto who proposed a completely new approach by means of the construction of the \emph{space of initial conditions} for each of $\pain{I},\ldots,\pain{VI}$, which allows Painlev\'e's
results to be understood in algebro-geometric terms~\cite{Okamoto french}. 
After the discrete analogues of Painlev\'e equations were introduced, see~\cite{fokasitskitaev1991,Grammaticos1991,RGH91}, 
this approach was extended to a unified framework for Painlev\'e equations and their discrete analogues by Hidetaka Sakai \cite{Sakai}. We call this approach to Painlev\'e equation
the (Okamoto-Sakai) \emph{geometric theory of Painlev\'e equations}. 
It allows one to partially bypass the classical
complex analytic setting, by associating to a second-order differential equation (or equivalently a 
system of two first-order differential equations) a complex rational 
surface~\cite{BeauvilleBook} obtained from a suitable compactification of the affine plane via a sequence of blow ups.  The main idea behind this association is that  Painlev\'e-type
equations are related to the so-called generalised Halphen
surfaces, see \cite{Sakai}, with zero-dimensional anti-canonical linear system whose unique member consists
of a configuration of $(-2)$-curves. This configuration identifies the equation-type uniquely. Moreover,  the dynamics 
is \emph{vertical} on the $(-2)$-curves. For instance, the ``most degenerate'' 
Painlev\'e equation, i.e. $\pain{I}$, is associated to a rational surface 
with a configuration of $(-2)$-curves intersecting according to the $\Eg_8^{(1)}$ Dynkin diagram,
while the ``least degenerate''
Painlev\'e equation, i.e. $\pain{VI}$, is associated to the Dynkin diagram $\D_4^{(1)}$, see \Cref{fig:Dynkin D4}, through its configuration of $(-2)$-curves. 

Since its introduction the geometric theory of Painlev\'e equations has been a valuable
tool to describe properties of Painlev\'e equations
such as Lax pairs, B\"acklund transformation symmetries, particular rational and/or hypergeometric solutions, and asymptotics,
see the survey \cite{KNY} and references therein for a complete account of these
developments. 
It is remarkable that the geometric theory of Painlev\'e 
equations solves (to a certain extent) a version of the so-called Painlev\'e equivalence problem \cite{clarksonopenproblems}.
Given a system either known or suspected to be equivalent via
birational transformation to one of $\pain{I},\pain{II},\ldots,\pain{VI}$, this problem consists in determining which one, and finding the transformation to the standard form of the relevant Painlev\'e equation explicitly.
The identification procedure was first formulated in the discrete 
case in \cite{hypergeometric} and then adapted to the differential case in \cite{Stud}.
It is based on constructing the minimal space of initial conditions for the system under consideration and matching it with that of a standard form of a Painlev\'e equation, which consists in building an identification between two generalised Halphen surfaces and their
configurations of $(-2)$-curves. 
This algorithm was applied successfully
to identify different Hamiltonian forms of Painlev\'e equations \cite{Hams,P3hams}, and 
recently extended to the case of Hamiltonian systems with the quasi-Painlev\'e property 
in~\cite{tommarta}, using the results of \cite{tomgalina}.

In this paper we are interested in a system of three first-order differential equations coming from the 
work of Chao Min and Yang Chen \cite{MinChen}. Therein the polynomials orthogonal with
respect to a measure, called the \textit{degenerate Jacobi unitary measure}, were studied. In particular,
in \cite{MinChen} several equations satisfied by their recurrence coefficients, and certain auxiliary functions, were derived, see \Cref{sec:backgroundmaterial} for more details.
It is worth mentioning that the results of \cite{MinChen} extend the results of \cite{ChenZhang, DaiZhang} 
where special cases of the same weight were considered, and connections with particular cases of  $\pain{VI}$ were presented.

We will show in \Cref{sec:backgroundmaterial} that the results in \cite{MinChen}
relate the recurrence  coefficients and the auxiliary functions with the solutions of the following system of three non-linear ordinary differential equations 
in three dependent variables $\vec{x}(t)=(x,y,z)(t)$:

\begin{equation}
    \begin{aligned}
    (t-1) t x' &=x (\kappa+\beta  -\alpha  t-\beta  t+2 t y-2 (t-1) z-t)+(t-1) x^2-\kappa  (\beta +2 z),
    \\
    (\kappa -1) t^2 x y' &
    \begin{aligned}[t]
        =& 
    t x \left(n (n +\gamma)(x(t-1)+1) +txy \left( y  +2n+\gamma \right)+ ( \kappa-\beta -1)y(1-x)\right) 
    \\
    &-z\left(\beta   (\kappa - 2) \kappa + x  \left( (1-t)\beta + \left(t(2  y+  
       2  n   + \gamma )+\beta \right)(x(t-1)+1) \right)\right)
       \\
       &- z^2 \left((\kappa -2) \kappa -((t-1)x+1)^2+1\right), 
    \end{aligned}
    \\
    (t-1)z'&=t y',
    \end{aligned}
    \label{syst3D}%
\end{equation} 
where $\alpha$, $\beta$, $\gamma$ are complex constants, and $\kappa=\alpha +\beta +\gamma +2 n+1$,
such that $\vec{x}(t)$ satisfy the following non-linear, time-dependent constraint:
\begin{equation}
    S_t := \Set{h_t:=h(x,y,z;t)=0}\subset\C_t^3, 
    \label{eq:St}
\end{equation} 
where 
\begin{equation}\label{eq:h}
\begin{aligned}
    h_t &=  t x y (2 n (\kappa +(t-1) x)+t x (\gamma +y-2 z)+(x-\kappa ) ( 2 z-\alpha -\gamma)) \\
&\quad +(\kappa +(t-1) x) \left(x
   \left(n t (\gamma +n)-z (\beta +2 n t+\gamma  t)+(t-1) z^2\right)+\kappa  z (\beta +z)\right).
\end{aligned}
\end{equation}
That is, the solutions of the system~\eqref{syst3D} have to satisfy the additional condition:
\begin{equation} \label{hypersurfaceconstraint}
     h_t(\vec{x}(t);t)\equiv 0, \quad \forall t \in B,
\end{equation}
for a domain $B\subset \C$.

Note that it is not evident if the hypersurface $S_t$ admits a rational parametrisation,
so that we can obtain a restriction of the system~\eqref{syst3D} to a system of two 
rational first-order differential equations. In this paper, we will show that
it is possible to obtain such a restriction in at least two different ways:
one deduced from results in \cite{MinChen}, and another through a purely 
algebro-geometric analysis of the hypersurface. To be more precise, we will prove the 
following statement about the system~\eqref{syst3D}.

\begin{maintheorem}[\Cref{Th1,prop:darboux,cor:secondsystemtosystfg,prop:hamsys,prop:sys3d}]
    The system~\eqref{syst3D}, complemented with the invariant hypersurface 
    condition~\eqref{eq:St} restricts to a 2D system, admitting 
    a space of initial conditions whose $(-2)$-curves intersect according to the Dynkin diagram $\D_4^{(1)}$. In addition, we show:
    \begin{itemize}
        \item the restricted system can be birationally mapped to Painlev\'e-VI in 
            a Hamiltonian form, and inherits a Hamiltonian structure on its space of initial conditions; 
        \item the polynomial defining the hypersurface $S_t$~\eqref{eq:h} is a 
            time-dependent Darboux polynomial for the system~\eqref{syst3D}, 
            i.e. the hypersurface $S_t$ is a \emph{Darboux surface}, see \Cref{ssec:darb};
        \item the system~\eqref{syst3D} admits a Liouville--Poisson integrable      
            autonomous limit such that the fibration defined by the integrals is given by elliptic curves.
    \end{itemize}
    \label{thm:main}
\end{maintheorem}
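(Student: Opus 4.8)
The plan is to establish the four assertions separately, following the cited propositions, and to begin with invariance since it underlies the restriction. First I would verify that $h_t$ is a time-dependent Darboux polynomial for \eqref{syst3D}: computing the total derivative
$\tfrac{d}{dt}h_t = \partial_t h_t + x'\,\partial_x h_t + y'\,\partial_y h_t + z'\,\partial_z h_t$
along the flow (using the rational right-hand sides of \eqref{syst3D}) and checking that it equals a cofactor times $h_t$, i.e.\ that $\tfrac{d}{dt}h_t \in (h_t)$ in the relevant ring after clearing the denominators $(t-1)t$, $(\kappa-1)t^2x$ and $(t-1)$. This single computation simultaneously shows that $S_t$ is invariant, so solutions starting on $S_t$ remain there, and that $S_t$ is a Darboux surface in the sense of \Cref{ssec:darb}.

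Next, to obtain the restricted 2D system I would use $h_t=0$ to eliminate one dependent variable. Two routes are available: one inheriting natural coordinates from the orthogonal-polynomial quantities of \cite{MinChen}, and a purely geometric one; the resulting systems are birationally equivalent, so either suffices for the surface analysis. The obstacle flagged before the statement is that $h_t$ is quadratic in $z$, so elimination is not automatically rational. However, a direct inspection of \eqref{eq:h} shows that the coefficient of $z^2$ is exactly $(\kappa+(t-1)x)^2$, a perfect square. Rational parametrisability therefore reduces to verifying that the $z$-discriminant of $h_t$ is itself a square in $\C(x,y,t)$---equivalently, that the double cover $S_t\to\A^2_{x,y}$ splits---after which I read off an explicit rational section $z=z(x,y,t)$ and substitute to get a rational first-order system in $(x,y)$.

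For the space of initial conditions I would follow the Okamoto--Sakai construction: compactify the restricted phase space, say in $\PcrossP$, locate the base points of the associated foliation, and perform the iterated blow-ups that regularise the dynamics. I expect eight blow-ups, after which the anti-canonical class decomposes into irreducible $(-2)$-curves; computing their self- and mutual intersections and matching the resulting graph to \Cref{fig:Dynkin D4} identifies the configuration as $\D_4^{(1)}$. The genuinely delicate point is correctly locating every base point, including those infinitely near, and confirming that the dynamics is vertical along each exceptional curve---this is the step I expect to be the main obstacle. With this generalised Halphen surface in hand, the identification algorithm of \cite{hypergeometric,Stud} furnishes an explicit birational map to a Hamiltonian form of $\pain{VI}$, and pulling back the canonical symplectic form equips the restricted system with a global Hamiltonian structure on its space of initial conditions.

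Finally, for the autonomous limit I would remove the explicit $t$-dependence via the appropriate scaling limit and exhibit a Poisson structure together with a Casimir $C$ and a Hamiltonian $H$ (the latter inherited from the structure built above), so that $\{C,H\}=0$ and Liouville--Poisson integrability follows. To show the common level sets are elliptic curves I would intersect the two integral hypersurfaces, compute the genus of the resulting plane curve $\{C=c,\,H=h\}$, and verify it is one by reducing to Weierstrass form and checking that the discriminant is generically nonzero. This elliptic fibration is precisely the structure motivating the conjecture that the full 3D system possesses the Painlev\'e property without restriction to $S_t$.
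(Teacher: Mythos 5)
Your treatment of the Darboux polynomial (direct computation) and of the autonomous limit (first integrals plus a Poisson structure plus a genus-one computation on the common level sets) is in the same spirit as the paper's. The fatal problem is the step on which everything else depends: the rational restriction to a 2D system. You correctly note that the coefficient of $z^2$ in \eqref{eq:h} is the perfect square $(\kappa+(t-1)x)^2$, but your reduction of the problem to ``verify that the $z$-discriminant of $h_t$ is a square in $\C(x,y,t)$'' cannot succeed, because that discriminant is provably \emph{not} a square. If it were, completing the square would factor $h_t$ into two factors linear in $z$ over $\C(t)(x,y)$, hence (by Gauss's lemma) over $\C(t)[x,y,z]$, and $S_t$ would be reducible --- contradicting the fact that $S_t$ is an irreducible surface (the paper exhibits mutually inverse birational maps $\Psi,\Phi$ between $S_t$ and $\C^2$, and an irreducible variety cannot be birational to a reducible one; moreover $\overline{S}_t$ has only isolated singularities, so $h_t$ is reduced). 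Equivalently: the projection $S_t\dashrightarrow\A^2_{x,y}$ is a genuinely non-split double cover, and rationality of $S_t$ does not imply that this particular projection admits a rational section --- compare $\{z^2=xy\}$, which is rational while its projection to $(x,y)$ does not split. You have conflated rationality of the surface with splitness of one chosen projection. Consequently there is no section $z=z(x,y;t)$ to ``read off,'' and the restricted system in the variables $(x,y)$ that your surface analysis and $\pain{VI}$ identification are built on does not exist.

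This is precisely why both of the paper's parametrisations take a different second coordinate. In \Cref{sec:firstpar} the coordinates are $(\tilde{x},\tilde{y})=(x,x')$, where $x'$ is rational on $S_t$ and jointly linear in $(y,z)$, combined with the relation \eqref{eq:rnRnlin}; in \Cref{sec:hypersurf} one first blows up a non-Cartier divisor on a compactification of $S_t$ to resolve its singularities, after which $y$ becomes a rational function \eqref{eq:y0par} of $x$ and the blow-up coordinate $v=1/\bigl(x(ty-tz+z)-\kappa z\bigr)$. A secondary inaccuracy: your surface construction (``compactify in $\Pj^1\times\Pj^1$, blow up the base points, expect eight blow ups'') glosses over difficulties the paper had to confront. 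In the first parametrisation ten blow ups are required, followed by two contractions to reach a minimal (Sakai) surface; in the second, the $\Pj^1\times\Pj^1$ compactification produces an apparent singularity whose naive blow-up generates a seemingly infinite cascade of indeterminacy points (\Cref{app:apparentsingularity}), which is why the paper compactifies to $\BF_1$ instead, see \Cref{rem:whyf1}. So ``blow up every indeterminacy point'' is not a valid algorithm; one must verify that infinitely many solutions pass through each proposed centre, exactly the Okamoto-style criterion your plan mentions but does not operationalise.
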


\begin{figure}
    \centering
    \begin{tikzpicture}[thick,scale=2]
        \def\pt{2};
        \node[draw,circle] (a0) at ($({-sqrt(\pt)/2},{sqrt(\pt)/2})$) {$0$};
        \node[draw,circle] (a1) at ($({-sqrt(\pt)/2},{-sqrt(\pt)/2})$) {$1$};
        \node[draw,circle] (a2) at (0,0) {$2$}; 
        \node[draw,circle] (a4) at ($({sqrt(\pt)/2},{sqrt(\pt)/2})$) {$3$};
        \node[draw,circle] (a5) at ($({sqrt(\pt)/2},{-sqrt(\pt)/2})$) {$4$};
        \draw[-] (a0) -- (a2) -- (a4);
        \draw[-] (a1) -- (a2);
        \draw[-] (a2) -- (a5); 
    \end{tikzpicture}
    \caption{The Dynkin diagram of $\D_4^{(1)}$.}
    \label{fig:Dynkin D4}
\end{figure}
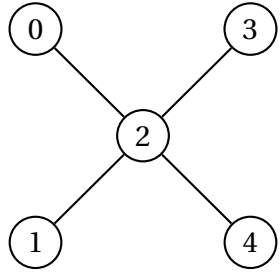

Before presenting the plan of the paper, we give some comments on this main result.
First, we observe that as stated in~\Cref{thm:main} (and previously),  we will reduce the 3D system~\eqref{syst3D} 
to two 2D systems. In previous papers, see \cite{hypergeometric, Appl, ITSF, Stud, prepr}, the 
Painlev\'e identification procedure was applied to 2D systems coming from orthogonal polynomials 
theory with respect to various semi-classical weights\footnote{The two components of the system usually represent
coefficients from the three-term recurrence for the orthogonal polynomials, or   some auxiliary functions 
defined using  the method of ladder operators, see, for instance, \cite[Chap. 4]{walterbook} and the 
references therein.}. So, to the best of our knowledge, this is the first time the geometric theory of Painlev\'e 
equations is applied to recognize a higher-dimensional system subject to a constraint as
a Painlev\'e equation.
However, it is worth mentioning that the relation between the degenerate Jacobi weight and $\pain{VI}$ is not new \cite{MinChen}. 
The idea of this paper is to develop the tools to identify Painlev\'e equations appearing as constrained higher-dimensional systems.
We also remark that asking that the solutions of the system~\eqref{syst3D} lie on the hypersurface $S_t$ 
is \emph{stronger} than the existence of the Darboux polynomial.

Finally, we underline that many questions about the full 3D system~\eqref{syst3D} remain open. 
From our viewpoint, the main open question is whether the 
system in its full generality  possesses the Painlev\'e property. As stated 
in~\Cref{thm:main}, we prove that there exists an autonomous limit which is 
multi-Hamiltonian and it defines an elliptic fibration. So, from the known relationship 
between elliptic fibrations and Painlev\'e equations, see 
e.g.~\cite{tsuda,duistermaatbook,carsteatakenawaelliptic,sakaiODEsonrationalellipticsurfaces}, this gives a strong 
indication for the following conjecture.

\begin{conjecture}
    The 3D system~\eqref{syst3D} possesses the Painlev\'e property, and it is
    possible to construct a space of initial conditions for it whose fibres are 
    rational threefolds with zero-dimensional anti-canonical linear system.
    \label{conj:3d}
\end{conjecture}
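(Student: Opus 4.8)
The plan is to lift the surface-level construction underlying \Cref{thm:main} one dimension higher, treating the full flow~\eqref{syst3D} as a non-autonomous foliation on a compactification of $\C^3$ and building a three-dimensional space of initial conditions by regularising that foliation. Concretely, I would first choose a compactification $X_0$ of $\C^3$ adapted to the degrees appearing in the right-hand sides of~\eqref{syst3D} — the natural candidates being $\Pj^3$ or a smooth toric threefold such as $(\Pj^1)^{\times 3}$ — and extend the rational vector field attached to the system to $X_0$, recording its locus of indeterminacy, that is the base points and base curves where the foliation fails to be regular and where distinct solution curves meet after the time-rescaling. The regularisation then proceeds, exactly as in the two-dimensional setting, by a finite sequence of blow-ups $X_N \to \cdots \to X_1 \to X_0$, now centred at \emph{points and curves}, chosen so as to separate the solution curves and render the extended foliation holomorphic away from a collection of inaccessible (vertical) divisors; removing those divisors should produce the conjectured fibration over the base $B$ whose generic fibre is a smooth rational threefold.

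The autonomous limit established in \Cref{thm:main} is the essential organising device for the more structural half of the argument. Its Liouville--Poisson integrability furnishes two functionally independent first integrals whose common level sets are the elliptic curves of the fibration; these curves play the same rôle in dimension three that the fibres of the elliptic pencil — and hence the configuration of $(-2)$-curves of $\D_4^{(1)}$ in \Cref{fig:Dynkin D4} — play for the surfaces underlying the restricted systems. I would use the polar behaviour of the integrals at infinity to identify an explicit candidate for the anti-canonical divisor of $X_N$, built from the proper transforms of the components of $X_0 \setminus \C^3$ together with the proper transform of the Darboux surface $S_t$ defined in~\eqref{eq:St}, and then compute $h^0(X_N, -K_{X_N})$ via the adjunction and blow-up formulae in order to verify that the anti-canonical system is zero-dimensional (ideally $h^0 = 1$, so that the anti-canonical member is rigid), which is the three-dimensional analogue of the generalised Halphen condition.

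With a regular foliation on the total space in hand, the Painlev\'e property follows in the usual geometric manner: away from the removed vertical divisors every solution of~\eqref{syst3D} is an integral curve of a holomorphic foliation transverse to the $t$-fibration, so it continues single-valuedly across each movable singularity, the singularities themselves being precisely the codimension-two crossings of the vertical divisors where the curve passes through the exceptional locus. For the genuinely three-dimensional solutions — those \emph{not} confined to the invariant surface $S_t$, where \Cref{thm:main} already gives the result — the autonomous computation provides the matching leading-order local model: near a movable singularity the flow is asymptotic to motion on one of the elliptic level curves, hence locally meromorphic, and global single-valuedness is then upgraded from this local model by the regularity of the extended foliation.

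The hard part will be the three-dimensional regularisation itself, and with it the verification of the zero-dimensional anti-canonical system. In contrast to the surface case there is no \emph{a priori} classification of the relevant threefolds to match against, and resolving the indeterminacy of the foliation may force blow-ups along curves whose centres, order, and attendant intersection theory on the resulting threefold are considerably more delicate than the point blow-ups used for the restricted systems; controlling the canonical class through such a sequence, and proving that no further base locus survives, is where the real work lies. A secondary but genuine obstacle is that Liouville integrability of the \emph{autonomous} limit does not by itself imply the Painlev\'e property of the non-autonomous system: one must still exclude movable branch points, essential singularities, and movable natural boundaries for the off-$S_t$ solutions, which is precisely the gap the geometric construction is designed to close and the reason the statement is recorded here as \Cref{conj:3d} rather than as a theorem.
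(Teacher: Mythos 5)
The statement you are addressing is recorded in the paper as \Cref{conj:3d}, i.e.\ as a \emph{conjecture}: the paper contains no proof of it, and neither does your proposal. What the paper offers in support is exactly the evidence you cite --- the Liouville--Poisson integrable autonomous limit of \Cref{prop:sys3d}, whose level curves form an elliptic fibration, together with the known relationship between elliptic fibrations and Painlev\'e equations --- plus, in the Conclusions, an informal discussion of the very programme you outline: the trade-off between compactifying to $\left(\Pj^1\right)^{\times 3}$ versus $\Pj^3$, the fact that the indeterminacy locus now contains curves as well as points (cf.\ \Cref{fig:sinrD}), and the complication that minimal models of threefolds are in general singular. In that sense your proposal is faithful to the authors' intended route, and your final paragraph correctly identifies why the statement remains a conjecture rather than a theorem: no one has carried out the three-dimensional regularisation, and integrability of the autonomous limit does not by itself transfer the Painlev\'e property to the non-autonomous flow.

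One step in your sketch would fail even granting the regularisation. In the third paragraph you claim that once the extended foliation is regular away from the vertical divisors, ``the Painlev\'e property follows in the usual geometric manner.'' This is precisely the circularity the paper warns against in \Cref{rem:inaccessible}: removing indeterminacies does not alone guarantee a uniform foliation; one must additionally prove that solutions never reach the would-be inaccessible divisors, and in the two-dimensional theory this is usually established either by invoking the Painlev\'e property itself or by a separate analytic argument in the style of Takano's defining manifolds. Your proposed substitute --- that near a movable singularity the flow is ``asymptotic to motion on one of the elliptic level curves'' of the autonomous limit, and that meromorphy can be upgraded from this local model --- does not work as stated: the autonomous limit is a degeneration obtained by rescaling both variables and parameters (\eqref{xyz scaling}, \eqref{par scaling}), so it controls leading-order behaviour only in that scaling regime, and a non-autonomous perturbation of an elliptic local model can perfectly well introduce movable branching (this is exactly what the Painlev\'e/Kovalevskaya analysis is designed to detect, and what fails for generic deformations). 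Closing this gap --- excluding movable branch points, essential singularities, and natural boundaries for solutions off the Darboux surface $S_t$ --- is the genuine open content of \Cref{conj:3d}, and neither the paper nor your proposal resolves it.
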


Note that, a crucial difference in the surfaces providing spaces of initial conditions between the autonomous and the non-autonomous
cases is the following. 
In the autonomous case (differential or difference equations solved by elliptic functions) the linear system associated to the anti-canonical
divisor of the surface constructed after blow up has \emph{positive dimension}, while 
in the non-autonomous case (Painlev\'e or discrete Painlev\'e equations) it has \emph{dimension zero}. See 
again~\cite{CarsteaTakenawa2019,GraffeoGubbiotti2023} for examples in the discrete setting.

We defer the study of the 3D system to future research, because the geometry of 
higher  dimensional differential equations seems to be much harder in comparison with the well-known two-dimensional one.
Though some examples of differential equations in higher dimensions have been studied geometrically \cite{kimurauniformfoliation,kimuragarnier,kimuradegengarnier,Suzukigarnier,taharaA4,sasanocoupledP5,sasanocoupledP2}, the construction of a space of initial conditions relies strongly on knowing that the system under consideration has the Painlevé property. 
We mention that also in the corresponding discrete 
case going beyond dimension two greatly increases the difficulty of problem and the involved computations, see e.g.~\cite{CarsteaTakenawa2019,GraffeoGubbiotti2023,takenawaFST,takenawagarnier}.

\subsection*{Outline of the paper}

In \Cref{sec:backgroundmaterial} we present a derivation of the system~\eqref{syst3D} 
and the foundations of the Okamoto-Sakai geometric theory of Painlev\'e equations. Moreover, we introduce the notion of a space of initial conditions in algebro-geometric language and we recall the notion of symplectic and Hamiltonian atlases for it. 
In \Cref{sec:firstpar} we consider a system of two first-order differential equations 
giving the flow of \eqref{syst3D} restricted to the hypersurface \eqref{eq:St} using a 
parametrisation derived from the results of Min-Chen \cite{MinChen}, construct a 
space of initial conditions for it, and build an  identification with the standard 
$\pain{VI}$ equation.  In \Cref{sec:hypersurf} we show that the hypersurface \eqref{eq:St} 
is in fact the zero locus of a (time dependent) Darboux polynomial~\cite{Goriely2001} 
for the system~\eqref{syst3D}. Then, using ideas from 
resolution of singularities we obtain another 
parametrisation of the hypersurface,  and we construct a space of initial conditions for 
the restriction of the system~\eqref{syst3D} in 
this parametrisation. Also, we  play again the identification game   compactifying the rational parametrisation to the Hirzebruch  surface 
$\mathbb{F}_1=\Bl_p\Pj^2$.  In the subsequent  
\Cref{sec:hamiltonianhyper} we end our study of the restricted system 
by discussing its Hamiltonian structure in both parametrisations. In \cref{sec:autlim} we 
prove, in the  same spirit as~\cite{alves} does for the $\pain{VI}$ equation, that  
system~\eqref{syst3D}  admits a Liouville--Poisson integrable  autonomous  limit, with 
the particular properties  described above. We give some 
conclusions and an outlook as complete as possible  on the prospective future 
developments in  \Cref{sec:concl}. In \Cref{appendix:standardP6}, we review some results on the geometry of $\pain{VI}$. Then, in \Cref{app:sigma} we highlight some analogy between  system \eqref{syst3D} and the $\sigma$-form of $\pain{VI}$. In \Cref{app:apparentsingularity}, we explain issues which can arise in the construction of a space of initial conditions from a non-optimal choice of compactification, namely the appearance of apparent singularities which should not be blown up. Finally, in \Cref{app:clim} we present two autonomous exponential limits of system \eqref{syst3D}.

\section{Background material} \label{sec:backgroundmaterial}

In this section, we give the basic tools we need in the paper. First, in \Cref{sss:minchen} we show how to derive the system~\eqref{syst3D}
using the results in~\cite{MinChen}, and in \Cref{subsec:Painleveq} we provide a general introduction to
the geometric theory of Painlev\'e equations. 
Later, in \Cref{appendix:eltransf} we recall the main properties of the intersection pairing on smooth projective surfaces and how Hirzebruch surfaces are related by elementary transformations. Then, in \Cref{subsec:spacesofinitialconditions,subsec:haplhen} we introduce the notion of space of initial conditions, and we explain Okamoto's models in terms of Sakai surfaces. Finally, in \Cref{subsec:HAMILTONIANSTRUC} we recall the notion of symplectic and Hamiltonian atlases necessary to prove the existence of global Hamiltonian structures of Painlev\'e equations on their spaces of initial conditions.

We will keep our discussion
to a minimum, and we will introduce additional tools as needed to complete
specific steps of the proof of~\Cref{thm:main} later.

\begin{notation}
    From now on, throughout the paper for the sake of simplicity, we will refer 
    to systems of first-order differential equations simply as ``systems''. Moreover, we sometimes
    omit the explicit dependence on the independent variable $t$, e.g.\ $x=x(t)$,
    when no possibility of confusion arises.
\end{notation}

\subsection{The 3D model}
\label{sss:minchen}

In this subsection we shall present the system of three first-order differential equations which will be the main object of study of this paper, and explain how it can be derived from various relations in \cite{MinChen}.

Consider the space $L^2([0,1],d\mu)$ where:
\begin{equation}
    d\mu = w(x,t)dt,
\end{equation}
with the following weight function:
\begin{equation}\label{eq:weight}
    w(x,t)=x^{\alpha} (1-x)^{\beta}|x-t|^{\gamma}(A+B \theta(x-t)), \quad x,\,t\in [0,\,1],\quad \alpha, \,\beta,\,\gamma>0,
\end{equation} 
where $\theta(x)$ is the \textit{Heaviside step function}, and $A\geq 0,\,A+B\geq 0$,
and $t\in\mathbb{C}$ are parameters. The orthogonal polynomials with respect to this weight
are called the \emph{degenerate Jacobi unitary polynomials}, and are
denoted by $P_n(z)$, where the non-negative integer $n\ge 0$ is the degree of $P_n(z)$, see~\cite{MinChen}.

Using the theory of ladder operators, see again~\cite[Chap. 4]{walterbook}, one defines 
two functions  $A_n(z)$ and $B_n(z)$ such that they satisfy the so-called \emph{lowering operator 
equation}:
\begin{equation}
\label{lowering}
\left(\frac{d}{dz}+B_{n}(z)\right)P_{n}(z)=\beta_{n}A_{n}(z)P_{n-1}(z),    
\end{equation}
and the \emph{raising operator equation}:
\begin{equation}
\left(\frac{d}{dz}-B_{n}(z)-\mathrm{v}_{0}'(z)\right)P_{n-1}(z)=-A_{n-1}(z)P_{n}(z),    
\end{equation}
where
\begin{equation}
    \mathrm{v}_{0}(x)=-\alpha\ln x-\beta\ln(1-x),\;\;\alpha,\;\beta>0.
\end{equation}

The expansions as $z\to \infty$ of the functions $A_n$ and $B_n$ are governed by six auxiliary 
functions $R_n(t),$ $r_n(t),$ $x_n(t)$, $y_n(t)$, $\alpha_n(t),$ $\beta_n(t)$, where 
$\alpha_n(t)$ and $\beta_n(t)$  are recurrence coefficients for the sequence of the associated 
monic orthogonal polynomials. This is the content of the following theorem.

\begin{theorem}[ {\cite[Theorem 2.4]{MinChen}} ]
    As $z\rightarrow\infty$, $A_{n}(z)$ and $B_{n}(z)$ have the following series expansions:
    \begin{equation}
        \begin{aligned}
            \label{anz1}
            A_{n}(z)&
            \begin{aligned}[t]
            &=\frac{\gamma+(t-1)R_{n}-t x_{n}}{z^2}+\frac{\gamma(t+\alpha_{n})+(t^2-1)R_n-t^2x_{n}}{z^3}
            \\
            &+\frac{\gamma(t^2+\alpha_{n}^2+t\alpha_{n}+\beta_{n}+\beta_{n+1})+(t^3-1)R_n-t^3x_{n}}{z^4}+O\left(\frac{1}{z^5}\right),    
            \end{aligned}
            \\ 
            B_{n}(z)&
            \begin{aligned}[t]
            &=-\frac{n}{z}+\frac{(t-1) r_{n}-ty_{n}-nt}{z^2}+\frac{\gamma\beta_{n}-nt^2+(t^2-1) r_n-t^2 y_{n}}{z^3}
            \\
            &+\frac{\gamma\beta_{n}(t+\alpha_{n}+\alpha_{n-1})+t^3(r_{n}-y_{n}-n)-r_n}{z^4}+O\left(\frac{1}{z^5}\right).                
            \end{aligned}
        \end{aligned}
    \end{equation}
    \label{thm:minchen0}
\end{theorem}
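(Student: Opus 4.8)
The plan is to obtain both expansions directly from the Cauchy-transform representations of the ladder-operator coefficients $A_n$ and $B_n$ that underlie the relations \eqref{lowering}, by expanding the Cauchy kernels as $z\to\infty$ and reducing the resulting moments using orthogonality and the three-term recurrence. First I would write, with $h_n=\int_0^1 P_n^2\,w\,dx$ the squared norm,
\begin{gather*}
A_n(z)=\frac{1}{h_n}\int_0^1\frac{\Phi(z)-\Phi(y)}{z-y}\,P_n^2(y)\,w(y,t)\,dy,\\
B_n(z)=\frac{1}{h_{n-1}}\int_0^1\frac{\Phi(z)-\Phi(y)}{z-y}\,P_n(y)P_{n-1}(y)\,w(y,t)\,dy,
\end{gather*}
where $\Phi=\mathrm{v}'$ is the derivative of the potential $\mathrm{v}=-\log w$ of the weight \eqref{eq:weight}. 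The key preliminary step is to split $\Phi$ into its regular part $\mathrm{v}_0'(x)=-\alpha/x+\beta/(1-x)$, the interior-singular part $-\gamma/(x-t)$ coming from $|x-t|^{\gamma}$, and the distributional part localised at $x=t$ coming from the Heaviside jump $\theta(x-t)$. Correspondingly $R_n,r_n$ are (up to the factor $\gamma$) the Cauchy transforms of $P_n^2$ and $P_nP_{n-1}$ at the point $t$, while $x_n,y_n$ are the jump-localised quantities at $y=t$.

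Next I would expand the Cauchy kernels geometrically, using $1/(z-y)=\sum_{k\ge 0}y^k/z^{k+1}$ for the regular part and $1/(z-t)=\sum_{k\ge 0}t^k/z^{k+1}$ for the singular and jump parts. The regular contribution is then expressed through the moments $\mu_j=h_n^{-1}\int_0^1 y^j P_n^2\,w\,dy$ and $\nu_j=h_{n-1}^{-1}\int_0^1 y^j P_nP_{n-1}\,w\,dy$, whereas the singular and jump contributions produce $R_n,x_n$ (respectively $r_n,y_n$) multiplied by geometric series in $t/z$. Orthogonality gives $\mu_0=1$ and $\nu_0=0$, and the three-term recurrence $zP_n=P_{n+1}+\alpha_nP_n+\beta_nP_{n-1}$ reduces every moment to a polynomial in $\alpha_n,\beta_n$: for instance $\mu_1=\alpha_n$, $\mu_2=\alpha_n^2+\beta_n+\beta_{n+1}$, and $\nu_1=\beta_n$. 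Collecting equal powers of $1/z$ and combining the moment terms with the geometric series in $t$ then assembles precisely the stated coefficients; the $\gamma(t^2+\alpha_n^2+t\alpha_n+\beta_n+\beta_{n+1})$ term in $A_n$, for example, is a direct manifestation of $\mu_2$, and the $\gamma\beta_n$ term in $B_n$ of $\nu_1$.

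The hard part will be the careful bookkeeping of the three singular sources together with the endpoints, carried consistently to order $1/z^4$. The distributional derivative of $\theta(x-t)$ must be treated so as to produce the $x_n,y_n$ contributions with the correct pole at $z=t$, and the endpoint behaviour at $x=0$ and $x=1$ generates the $1/z$ and $1/(z-1)$ pieces that are responsible both for the shifts turning the naive $t^kR_n$ into the stated $(t^k-1)R_n$ and for the cancellation of the would-be $O(1/z)$ term of $A_n$, so that its expansion genuinely begins at order $1/z^2$. Verifying these cancellations and matching the coefficients through $1/z^4$, while keeping the contributions of $R_n,r_n,x_n,y_n$ and of $\alpha_n,\beta_n$ correctly separated, is the main computational obstacle; once the integral representations and the recurrence are in place, however, the derivation is a finite, if lengthy, expansion.
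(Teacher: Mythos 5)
Before anything else, note that the paper you are being compared against does not prove this statement at all: it is quoted from \cite{MinChen}, and the auxiliary functions $R_n$, $r_n$, $x_n$, $y_n$ are never even defined in the present paper --- their definitions live in \cite{MinChen}, and any proof must start from them. Your overall plan (ladder-operator integral representations, a splitting of $(\log w)'$ by singularity type, geometric expansion of the kernels, and reduction of moments via orthogonality and the three-term recurrence) is the right general shape. However, your first pillar, the starting representation with the full potential $\Phi=\mathrm{v}'$, $\mathrm{v}=-\log w$, produces the wrong functions and cannot yield the stated expansion. The Heaviside factor contributes nothing (see below), so
\begin{equation*}
\frac{\Phi(z)-\Phi(y)}{z-y}=\frac{\alpha}{zy}+\frac{\beta}{(1-z)(1-y)}+\frac{\gamma}{(z-t)(y-t)},
\end{equation*}
and your $A_n(z)$ would equal $\frac{c_0}{z}+\frac{c_1}{z-1}+\frac{c_t}{z-t}$ with $z$-independent residues $c_0,c_1,c_t$ (weighted Cauchy transforms of $P_n^2w$). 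Its coefficient of $z^{-k-1}$ is $c_1+c_t t^k$ for $k\ge 1$, so the combination $[z^{-4}\ \text{coeff.}]-(1+t)\,[z^{-3}\ \text{coeff.}]+t\,[z^{-2}\ \text{coeff.}]$ vanishes identically; but for the expansion in the statement this combination equals $\gamma\left(\alpha_n^2-\alpha_n+\beta_n+\beta_{n+1}\right)$, generically nonzero. The terms containing $\alpha_n,\beta_n,\beta_{n+1}$ (and $\alpha_{n-1}$ in $B_n$) can only come from the non-rational piece
\begin{equation*}
\frac{\gamma}{z-t}\cdot\frac{1}{h_n}\int_0^1\frac{P_n^2(y)\,w(y,t)}{z-y}\,dy,
\end{equation*}
whose expansion is the product of $\sum_k t^k z^{-k-1}$ with the moment series $1/z+\alpha_n/z^2+(\alpha_n^2+\beta_n+\beta_{n+1})/z^3+\cdots$ and reproduces exactly the blocks $\gamma$, $\gamma(t+\alpha_n)$, $\gamma(t^2+t\alpha_n+\alpha_n^2+\beta_n+\beta_{n+1})$. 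This piece is present precisely when one subtracts only the regular part of the potential, i.e.\ uses the kernel $\frac{\mathrm{v}_0'(z)-\mathrm{v}_0'(y)}{z-y}+\frac{\gamma}{(y-t)(z-y)}$; that choice is the one consistent with the paper's raising operator, which involves $\mathrm{v}_0'$ and not $\mathrm{v}'$. This also shows that your attribution of the moment terms to the regular part is backwards: every $\alpha_n,\beta_n$-term in the statement carries the prefactor $\gamma$, so they all originate from the $|x-t|^{\gamma}$ factor.

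Your dictionary for the auxiliary functions is also untenable. Since $\gamma>0$, the weight \eqref{eq:weight} vanishes at $x=t$, so any ``jump-localised'' quantity produced by $\frac{d}{dx}\theta(x-t)=\delta(x-t)$ is proportional to $P_n(t)P_k(t)\,t^{\alpha}(1-t)^{\beta}\cdot |0|^{\gamma}=0$: the jump contributes nothing whatsoever to $A_n$ or $B_n$, so $x_n,y_n$ cannot be quantities of that kind. Likewise, if $R_n,r_n$ were (up to $\gamma$) the point-$t$ Cauchy transforms, the expansions would contain blocks $t^kR_n$ rather than $(t^k-1)R_n-t^kx_n$. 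Matching the stated coefficients against the correct representation forces $x_n,y_n$ to be the residues at $z=0$, namely the $\alpha$-weighted transforms $\frac{\alpha}{h_n}\int_0^1\frac{P_n^2\,w}{y}\,dy$ and $\frac{\alpha}{h_{n-1}}\int_0^1\frac{P_nP_{n-1}\,w}{y}\,dy$, and $R_n,r_n$ to be the corresponding $\beta$-weighted residues at $z=1$; the point-$t$ Cauchy transforms are then the combinations $R_n-x_n$ and $r_n-y_n-n$, the relations among the three families being the integration-by-parts sum rules, which are also what produce the cancellation of the $1/z$ term of $A_n$ and the leading $-n/z$ of $B_n$ that you correctly anticipated. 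In summary: the method you outline is the right one in spirit, but both the integral representation you start from and your identification of $R_n,r_n,x_n,y_n$ are incorrect, and the bookkeeping you describe would not assemble the stated coefficients.
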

 
There are  numerous difference (with respect to $n$) and differential-difference relations 
among the auxiliary functions $R_n(t)$, $r_n(t)$, $y_n(t)$ and the recurrence coefficients 
$\alpha_n(t)$, $\beta_n(t)$. The relations relevant to our discussion are 
summarised in the following theorem which collects some of the results in \cite{MinChen}.

\begin{theorem}
    The variables $\alpha_n$ and $\beta_n$ satisfy the following
    Toda-like equations:
    \begin{equation}
        \begin{aligned}
            t\alpha_{n}' &=\alpha_{n}+r_n-r_{n+1},
            \\
            t\beta_{n}' &=\beta_{n}(2-R_{n}+R_{n-1}).
        \end{aligned}
        \label{eq:todalike}
    \end{equation}
    Furthermore, the recurrence coefficients $\alpha_{n}$ and $\beta_{n}$ have the 
    following expressions in terms of $r_{n}, y_{n}$ and $R_{n}$:
    \begin{equation}
        \label{eq:prop31minchen}
        \begin{aligned} 
        (\kappa+1)\alpha_{n} &=2(t-1)r_{n}-2ty_{n}-(t-1)R_{n}+t\alpha+(t-1)\beta+t,
        \\
        \kappa(\kappa-2)\beta_{n}
            &
        \begin{aligned}[t] 
            &=\left[ty_{n}-(t-1)r_{n}\right]^{2}
            -(t-1)(2nt+\gamma t+\beta)r_{n}
            \\
            &+t\left[(t-1)(2n+\gamma)-\alpha\right]y_{n}
            +n(n+\gamma)(t^2-t),
        \end{aligned}
        \end{aligned}
    \end{equation}
    and the following relation between the variables $r_n$ and $y_n$ holds:
    \begin{equation}
        (t-1)r_n'(t)=ty_{n}'(t).
        \label{eq:rnyn}
    \end{equation}
    Finally,  the coefficient $y_n$ satisfies:
    \begin{equation}
        \begin{aligned}
        y_{n}=&\frac{1}{2 t R_n}\Big\{t(t-1) R_n'-(t-1) R_n^2
        \\
        &+\left[2 (t-1) r_n+(\alpha +\beta +1)t-\kappa -\beta\right]R_n + \kappa (2r_n+\beta)\Big\},
        \end{aligned}
        \label{eq:ynsol}
    \end{equation}
    and there is a polynomial equation relating $r_n$ with
    $R_n$ and $R_n'$ of the form:
    \begin{equation}
        \mathcal{N}(R_n,R_n')r_n - \mathcal{D}(R_n,R_n')=0.
        \label{eq:rnRnlin}
    \end{equation}
    \label{thm:minchen}
\end{theorem}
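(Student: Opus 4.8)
The plan is to derive each stated relation from the ladder-operator structure of the degenerate Jacobi unitary polynomials, following the computations of \cite{MinChen}. The backbone is the set of compatibility (supplementary) conditions that the coefficient functions $A_n(z)$ and $B_n(z)$ must satisfy in order for the lowering equation \eqref{lowering} and the raising equation to be mutually consistent. Substituting the explicit expansions of \Cref{thm:minchen0} into these conditions and matching coefficients of equal powers of $1/z$ yields a finite family of algebraic identities among $R_n$, $r_n$, $x_n$, $y_n$, $\alpha_n$ and $\beta_n$, and all of the displayed relations are extracted from this family.

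First I would establish the Toda-like equations \eqref{eq:todalike}, which record the $t$-dependence of the recurrence coefficients: they follow from differentiating $\alpha_n$ and $\beta_n$ with respect to the deformation parameter $t$ and re-expressing the resulting derivatives of the orthonormality data through $r_n$ and $R_n$, i.e.\ the standard Toda-flow computation specialised to the weight \eqref{eq:weight}. Next, the closed forms \eqref{eq:prop31minchen} for $(\kappa+1)\alpha_n$ and $\kappa(\kappa-2)\beta_n$ are obtained by solving the algebraic system generated by the supplementary conditions: matching the lowest-order coefficients gives linear relations that can be solved for $\alpha_n$, while a quadratic combination produces the expression for $\beta_n$.

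The remaining three relations are then consequences of this setup. The differential relation \eqref{eq:rnyn} arises from the $t$-evolution of the auxiliary functions, either by differentiating one of the coefficient identities of \Cref{thm:minchen0} or, equivalently, by imposing compatibility of \eqref{eq:prop31minchen} with the Toda flow \eqref{eq:todalike}. Equation \eqref{eq:ynsol} is obtained by isolating $y_n$ in a differential-difference relation coupling $R_n$, $R_n'$ and $r_n$ and clearing the common factor $2tR_n$; and finally \eqref{eq:rnRnlin} follows by eliminating $y_n$ between \eqref{eq:ynsol} and the remaining algebraic identities, which leaves a relation that is polynomial in $R_n$, $R_n'$ and linear in $r_n$, so that the coefficients $\mathcal{N}$ and $\mathcal{D}$ can be read off explicitly.

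The main obstacle here is bookkeeping rather than anything conceptual: one must track the normalisation conventions, the signs, and the precise order in $1/z$ at which each identity first appears, since a single misalignment propagates through the elimination and corrupts the final forms \eqref{eq:ynsol} and \eqref{eq:rnRnlin}. In practice the safest route is to quote the relevant propositions of \cite{MinChen} directly and then verify by substitution that the relations stated here are algebraically equivalent to theirs, rather than re-deriving the compatibility conditions from scratch.
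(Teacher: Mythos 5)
Your proposal is correct and ends up exactly where the paper does: the paper offers no derivation of this theorem at all, but proves it purely by citation, mapping each displayed relation to a specific result of \cite{MinChen} (the Toda-like equations to Eqs.~(3.5) and (3.2), the expressions for $\alpha_n,\beta_n$ to Prop.~3.1, the relation $(t-1)r_n'=ty_n'$ to Eq.~(3.6), and the last two relations to Eqs.~(3.21)--(3.22), with the cumbersome polynomials $\mathcal{N},\mathcal{D}$ left to p.~9180 there). Your sketch of the ladder-operator/compatibility-condition derivation is a reasonable summary of what happens \emph{inside} \cite{MinChen}, but it is neither carried out nor needed; your closing recommendation to quote Min--Chen directly and check algebraic equivalence is precisely the paper's approach.
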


\Cref{thm:minchen} summarises the results of~\cite[Sec. 3]{MinChen}.
In particular, equations \eqref{eq:todalike} are~\cite[Eqs. (3.5) and (3.2)]{MinChen},
equations~\eqref{eq:prop31minchen} are the content of~\cite[Prop. 3.1]{MinChen},
relation \eqref{eq:rnyn} is~\cite[Eq. (3.6)]{MinChen}, while 
equation~\eqref{eq:ynsol} is~\cite[Eq. (3.21)]{MinChen}, and the final relation~\eqref{eq:rnRnlin}
is~\cite[Eq. (3.22)]{MinChen}. We omit the explicit expressions of the two
polynomials $\mathcal{N}$, $\mathcal{D}$ because they are too cumbersome,
see~\cite[p. 9180]{MinChen}.

From this we get the following result.

\begin{proposition}
    Relabel the variables in \Cref{thm:minchen} as:
    \begin{equation}
        \label{vars xyz}
        x(t)=R_n(t), \;z(t)=r_n(t), \; y(t)=y_n(t).
    \end{equation}
    Then, the relations in~\Cref{thm:minchen} imply that these three variables satisfy the 
    system~\eqref{syst3D} subject to the condition~\eqref{eq:St}.
\end{proposition}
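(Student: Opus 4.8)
The plan is to derive each of the three equations of~\eqref{syst3D}, together with the constraint~\eqref{eq:St}, directly from the relations collected in \Cref{thm:minchen}, after performing the relabelling~\eqref{vars xyz}. Two of the four ingredients are essentially immediate. The third equation of~\eqref{syst3D}, namely $(t-1)z'=ty'$, is nothing but~\eqref{eq:rnyn} rewritten in the new variables. For the first equation, I would observe that~\eqref{eq:ynsol} is \emph{linear} in $R_n'$, so one can simply solve it for $t(t-1)R_n'$; substituting $x=R_n$, $y=y_n$, $z=r_n$ and expanding, one checks that the resulting right-hand side coincides term by term with that of the first equation of~\eqref{syst3D}. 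This step is pure linear algebra in $R_n'$ and involves no elimination.

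The constraint~\eqref{eq:St} is where the first genuine computation enters. Both~\eqref{eq:ynsol} and the polynomial relation~\eqref{eq:rnRnlin} involve $R_n'$, and the idea is to eliminate it. Concretely, I would take the expression for $t(t-1)R_n'$ obtained above (equivalently, the first equation of~\eqref{syst3D}) and substitute it into~\eqref{eq:rnRnlin}. Since $\mathcal N$ and $\mathcal D$ are polynomials in $R_n$ and $R_n'$, this substitution produces, after clearing the denominator $t(t-1)$, a polynomial relation among $R_n$, $r_n$, $y_n$ and $t$ with no derivatives. The claim is that this relation is precisely $h_t=0$ with $h_t$ as in~\eqref{eq:h}, up to a nonzero factor; verifying this requires the explicit (cumbersome) forms of $\mathcal N$, $\mathcal D$ from~\cite{MinChen}.

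It remains to obtain the second equation of~\eqref{syst3D}, which governs $y'=y_n'$. Write $F_1$ and $F_2$ for the right-hand sides of the first and second equations of~\eqref{syst3D}, so that the first equation reads $t(t-1)x'=F_1$, and let $\partial_x h,\partial_y h,\partial_z h,\partial_t h$ denote the partial derivatives of $h_t$, the last one with respect to the explicit time dependence. The key observation is that, along the Min--Chen functions, the constraint $h_t\equiv 0$ holds identically in $t$, so its total derivative vanishes:
\[
\partial_x h\,x' + \partial_y h\,y' + \partial_z h\,z' + \partial_t h = 0.
\]
I would multiply by $t(t-1)$, substitute $t(t-1)x'=F_1$ and $t(t-1)z'=t^2 y'$ (the latter from the third equation), and solve the resulting linear relation for $y'$, obtaining $y'$ as a rational function of $(x,y,z,t)$ with denominator $t(t-1)\partial_y h + t^2\partial_z h$. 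The final task is to confirm that this expression agrees with $F_2/\big[(\kappa-1)t^2 x\big]$; since both sides are only required to hold on $S_t$, it suffices to verify, after cross-multiplying and removing an overall factor of $t$, the polynomial congruence
\[
(\kappa-1)t\,x\,\bigl(\partial_x h\,F_1 + t(t-1)\,\partial_t h\bigr) + F_2\,\bigl((t-1)\,\partial_y h + t\,\partial_z h\bigr) \equiv 0 \pmod{h_t},
\]
that is, modulo the ideal generated by $h_t$.

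The main obstacle is computational rather than conceptual: Steps two through four reduce the proposition to verifying two large polynomial identities — the identification of the eliminant of~\eqref{eq:rnRnlin} and~\eqref{eq:ynsol} with $h_t$, and the congruence above — both involving the unwieldy polynomials $\mathcal N$, $\mathcal D$ and the cubic-in-$z$ right-hand side $F_2$. I would carry these out in a computer algebra system, reducing modulo $h_t$ by a Gröbner basis or pseudo-division. Finally, I note that this route uses only~\eqref{eq:rnyn},~\eqref{eq:ynsol} and~\eqref{eq:rnRnlin}; the Toda-type equations~\eqref{eq:todalike} and the $\beta_n$-expression in~\eqref{eq:prop31minchen} are not required. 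In fact, differentiating~\eqref{eq:prop31minchen} with the help of~\eqref{eq:todalike} would reintroduce the $n$-shifted quantities $r_{n+1}$, $R_{n-1}$ that the present approach is designed to avoid, so those relations instead serve as independent consistency checks.
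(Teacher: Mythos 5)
Your treatment of three of the four ingredients coincides with the paper's own proof: the third equation of \eqref{syst3D} is \eqref{eq:rnyn} after relabelling, the first equation is \eqref{eq:ynsol} solved for $t(t-1)R_n'$, and the constraint \eqref{eq:St} is obtained by eliminating $R_n'$ between \eqref{eq:ynsol} and \eqref{eq:rnRnlin}. Where you genuinely depart from the paper is the second equation. The paper solves \eqref{eq:prop31minchen} for $\alpha_n,\beta_n$, substitutes these and their derivatives into the Toda-like equations \eqref{eq:todalike}, solves the result for $r_n'$ and $y_n'$, and then eliminates $R_n'$ via \eqref{eq:ynsol}. You instead differentiate the constraint $h_t\equiv 0$ along the solution, insert the first and third equations, and solve the resulting relation, which is linear in $y'$. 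The congruence you then need,
\begin{equation*}
    (\kappa-1)t\,x\left(\partial_x h\,F_1+t(t-1)\,\partial_t h\right)+F_2\left((t-1)\,\partial_y h+t\,\partial_z h\right)\equiv 0 \pmod{h_t},
\end{equation*}
with $F_1,F_2$ the right-hand sides of the first and second equations of \eqref{syst3D}, is exactly the numerator identity expressing that the vector field of \eqref{syst3D} is tangent to the family $S_t$, i.e.\ it is equivalent to the Darboux property of \Cref{prop:darboux}. So your route trades the Toda-type computation for a computation the paper performs anyway, and it has the real advantage of never involving the shifted quantities $r_{n+1}$, $R_{n-1}$ appearing in \eqref{eq:todalike}, whose elimination the paper's proof leaves implicit.

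There is, however, a gap in your derivation of the second equation: solving the differentiated constraint for $y'$ divides by $(t-1)\partial_y h+t\,\partial_z h$ (your $t(t-1)\partial_y h+t^2\partial_z h$, up to a factor of $t$), and nothing in your argument excludes that this polynomial vanishes identically along the Min--Chen solution. What your congruence actually yields along the solution is
\begin{equation*}
    \left((t-1)\,\partial_y h+t\,\partial_z h\right)\cdot\left(y'-\frac{F_2}{(\kappa-1)t^2x}\right)=0,
\end{equation*}
so by analyticity either the second equation of \eqref{syst3D} holds, or the solution lies for all $t$ in the locus $\{h_t=0\}\cap\{(t-1)\partial_y h+t\,\partial_z h=0\}$, in which case the differentiated constraint carries no information about $y'$ and your method is silent. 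You must rule out the second alternative. A concrete way to do so: a short computation gives $(t-1)\partial_y h+t\,\partial_z h=-2\kappa t^2xy+(\text{terms free of }y)$, so the degenerate locus imposes a second algebraic relation on $(x,y,z;t)$ independent of $h_t=0$; substituting both into the first and third equations produces an overdetermined system whose inconsistency can be checked with the same computer algebra you already invoke. Without this (or an equivalent non-degeneracy argument) the proof is incomplete; with it, your approach is sound, though no less computational than the paper's.
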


\begin{proof}
    Solve~\eqref{eq:prop31minchen} with respect to  $\alpha_n(t),$ $\beta_{n}(t)$ 
    and substitute them and their derivatives into the Toda-type 
    equations~\eqref{eq:todalike}. Solve the resulting equations with respect to $r_n'(t)$ and 
    $y_n'(t)$ in terms of $R_n(t),$ $r_n(t),$ $y_n(t)$ and $R_n'(t)$. Substituting $R_n'(t)$ from~\eqref{eq:ynsol} and using equation~\eqref{eq:rnyn} gives  the desired equations. Finally, the constraint  can be derived from \eqref{eq:ynsol} and \eqref{eq:rnRnlin} by eliminating 
    $R_n'$ and substituting \eqref{vars xyz}. 
\end{proof}    

\begin{remark} 
    We remark that in \cite[Prop. 3.2, Thm. 3.4, and Thm. 3.5]{MinChen} it was 
    proved by a direct computation that 
    the functions $r_n(t)$ and $y_n(t)$ are related to the $\sigma$-form of 
    $\pain{VI}$, see \cite{JimboMiwaII} and \Cref{app:sigma}, and the function $R_n(t)$ is related to 
    a particular solution of $\pain{VI}$.
    From our perspective, it is more  natural 
    to consider the system \eqref{syst3D}, together with the invariant hypersuface 
    condition~\eqref{eq:St}. As anticipated in the 
    Introduction we consider the restriction of system \eqref{syst3D} to a two-dimensional one and relate it to $\pain{VI}$ via a geometric argument. Moreover, we will comment in \Cref{app:sigma}, on how to relate system~\eqref{syst3D} to the $\sigma$-form 
    of $\pain{VI}$.
\end{remark}

\subsection{Painlev\'e equations}
\label{subsec:Painleveq}

The celebrated six Painlev\'e equations $\pain{I}$, $\pain{II}$, \ldots, $\pain{VI}$ 
naturally arise in numerous problems in mathematics and mathematical physics. 
They have many remarkable properties, including Hamiltonian structures, symmetries forming affine Weyl groups, and relation with isomonodromic deformation of linear differential equations \cite{painlevehandbook,riemannhilbert,GLS,noumibook}.
Recall that the only singularities of solutions of the Painlev\'e equations which are movable, i.e. their locations in the complex plane depend on initial conditions,
are poles. 
This is usually referred to as the \emph{Painlev\'e property}, and $\pain{I}$, \ldots, $\pain{VI}$ 
are characterised by the fact that they have this property but are not solvable in general in terms of elementary functions or classical special functions which satisfy linear differential equations.  

Recall that the sixth Painlev\'e equation in its standard scalar form is given by:
\begin{equation}
    \begin{aligned}
        f'' &= \frac{1}{2}\left(\frac{1}{f}+\frac{1}{f-1}+\frac{1}{f-t}\right)\left(f'\right)^2
-\left(\frac{1}{t}+\frac{1}{t-1}+\frac{1}{f-t}\right)f'   
        \\
        &+\frac{f(f-1)(f-t)}{t^2 (t-1)^2 }\left(A+B \frac{t}{f^2}+C\frac{t-1}{(f-1)^2}+D \frac{t(t-1)}{(f-t)^2}\right),        
    \end{aligned}
    \label{P6}
\end{equation}
where $A, \, B,\,C,\, D$ are arbitrary (complex) parameters.
The Painlev\'e property of equation \eqref{P6} means  that any (locally defined) solution $f(t)$ can be continued to a single-valued meromorphic function on the universal covering space of $B = \C \setminus  \Set{0,1 }$, where the need to take the universal cover arises because solutions may be branched about fixed singularities.

One of the standard forms of $\pain{VI}$ as a system of two first-order equations appears in the survey \cite{KNY} of Kajiwara, Noumi and Yamada (KNY):
\begin{equation} \label{systfg}
    \begin{aligned}
f'&= \frac{(f-1) f (f-t)}{t(t-1)} \left(\frac{2 g}{f} - \frac{a_0-1}{f-t}-\frac{a_3}{f-1}-\frac{a_4}{f}\right), \\
g'& = - \frac{ f^2(g+a_2)(g+a_1+a_2) - t g(g-a_4)}{t(t-1)f},
    \end{aligned}
\end{equation}
where $a_0,\dots,a_4$ are complex parameters subject to the single constraint 
\begin{equation}
    a_0 + a_1 + 2 a_2 + a_3 + a_4 =1.
\end{equation}

The two forms \eqref{P6} and \eqref{systfg} are equivalent, via elimination of $g$, up to the following identification of parameters: 
\begin{equation}
    A = \frac{a_1^2}{2}, \quad B = - \frac{a_4^2}{2}, \quad C = \frac{a_3^2}{2}, \quad D = \frac{1- a_0^2}{2}.
\end{equation}
The system \eqref{systfg} is of Hamiltonian form 
\begin{equation} \label{systfgham}
    \begin{gathered}
        \frac{ f'}{f} = \frac{\partial H^{\operatorname{KNY}}_{\rm{VI}}}{\partial g}, \quad \frac{ g'}{f} = - \frac{\partial H^{\operatorname{KNY}}_{\rm{VI}}}{\partial f}, 
    \end{gathered}
\end{equation}
with Hamiltonian given by 
\begin{equation} \label{systf H - intro}
        H_{\rm{VI}}^{\operatorname{KNY}}= \frac{ (f-1)(f-t)g}{t(t-1)} \left( \frac{g}{f} - \frac{a_0-1}{f-t} - \frac{a_3}{f-1} -\frac{a_4}{f} \right) + \frac{a_2 (a_1+a_2)(f-t)}{t(t-1)}.
\end{equation} 

\subsection{Intersection form and elementary transformations}
\label{appendix:eltransf}

Given a smooth projective surface $S$ denote by $\Div(S)$ its divisor group, i.e. the free abelian group generated by its irreducible codimension-one subvarieties. In this subsection we recall the main properties of the intersection product on $\Div(S)$, see \cite{BeauvilleBook} for more details. Moreover, we will briefly explain how elementary transformations relate different Hirzebruch surfaces, see \cite[Example V-5.7.1]{Hartshorne2013}.

Recall \cite[Theorem V-1.1]{Hartshorne2013} that there is a unique symmetric bilinear form
\[
\begin{tikzcd}[row sep =tiny]
    \Div(S)\times \Div(S)\arrow[r] &\Z\\
    (C,D)\arrow[r,mapsto] & C.D,
\end{tikzcd}
\]
named \textit{intersection pairing}, such that:
\begin{itemize}
    \item if $C,D\subset S$ are non-singular curves meeting transversally then $C.D=|C\cap D|$;
    \item it only depends on linear equivalence classes, i.e. if $C,D,D'\in\Div(S)$ are divisors then $C.D=C.D'$ whenever $D\sim D'$. 
\end{itemize} 

The following lemma is a direct consequence of the defining properties of the intersection pairing.
\begin{lemma}\label{lemma:eltransf}
Let $\varepsilon:\Bl_p S\rightarrow S$ be the blow up of a smooth projective surface $S$  centered at a point $p\in S$. Then, the exceptional curve $L=\varepsilon^{-1}(p)$ is a smooth rational curve of self-intersection $L^2=-1$.

Let also $p\in C\subset S$ be an irreducible curve passing through $p$, and smooth at $p$.
Consider its proper transform $D=\overline{\varepsilon^{-1}(C\setminus \Set{p})}$. Then, we have $C^2=D^2-1$. 
\end{lemma}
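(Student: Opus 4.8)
I would establish the two assertions in turn, the first being essentially local and the second a formal consequence of the intersection-theoretic facts recalled at the start of this subsection.

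For the first assertion I would work in the standard local model of the blow-up. Choosing local coordinates centred at $p$, the surface $\Bl_p S$ is covered near the exceptional locus by the chart $\Set{((x,y),[u:v]) : xv=yu}\subset\A^2\times\Pj^1$, and $L=\varepsilon^{-1}(p)$ is the fibre $\Set{(0,0)}\times\Pj^1$, which is manifestly a smooth rational curve. To obtain $L^2=-1$ I would pass to the two standard affine charts of this model and read off the self-intersection from the transition data, or equivalently identify the normal bundle $\mathcal{O}_L(L)\cong\mathcal{O}_{\Pj^1}(-1)$, whose degree is exactly $L^2=-1$.

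For the second assertion the key is the decomposition of the \emph{total} transform. Since $C$ is irreducible and smooth at $p$, it has multiplicity one there, so $\varepsilon^* C = D + L$ in $\Div(\Bl_p S)$, where $D$ is the proper transform. The formalism of the intersection pairing then supplies two facts: by compatibility of $\varepsilon^*$ with the pairing (the projection formula) one has $C^2=(\varepsilon^* C)^2$, and a pulled-back class meets the exceptional curve trivially, i.e. $\varepsilon^* C\cdot L=0$. Expanding by bilinearity, $C^2=(D+L)^2=D^2+2\,D\cdot L+L^2$, so the computation reduces to determining $D\cdot L$. Using the smoothness of $C$ at $p$, its proper transform $D$ meets $L$ transversally in the single point determined by the tangent direction of $C$ at $p$, whence $D\cdot L=1$; equivalently $D\cdot L=\varepsilon^* C\cdot L-L^2=0-(-1)=1$. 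Substituting $L^2=-1$ from the first assertion together with $D\cdot L=1$, the cross-term and the exceptional self-intersection combine to shift $D^2$ by one, giving $D^2=C^2-1$, which is the asserted relation between the self-intersections of $C$ and its proper transform $D$.

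The main obstacle is the local analysis underpinning the decomposition $\varepsilon^* C=D+L$ and the transversality count $D\cdot L=1$: one must verify, from the smoothness of $C$ at $p$ and the defining equations of the blow-up, that $D$ is smooth along $L$ and crosses it exactly once, with no residual contribution of $L$ in the total transform. Once these local facts are in hand, the remainder is a purely formal manipulation of the bilinear pairing and its compatibility with $\varepsilon^*$, both of which were recorded above.
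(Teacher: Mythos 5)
Your argument is correct and complete, and it is in fact more than the paper provides: the paper gives no proof of \Cref{lemma:eltransf} at all, simply declaring it ``a direct consequence of the defining properties of the intersection pairing''. Your route --- the local model of the blow up with $\mathcal{O}_L(L)\cong\mathcal{O}_{\Pj^1}(-1)$ for the first assertion, then the decomposition $\varepsilon^*C=D+L$ (valid since $C$ has multiplicity one at $p$), the projection-formula identities $(\varepsilon^*C)^2=C^2$ and $\varepsilon^*C.L=0$, and the purely formal evaluation $D.L=\varepsilon^*C.L-L^2=1$ --- is exactly the standard argument the authors are implicitly invoking, so the two approaches coincide in substance.

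There is, however, one discrepancy you should be aware of, and it is not on your side. Your computation yields $C^2=(D+L)^2=D^2+2\,D.L+L^2=D^2+1$, i.e.\ $D^2=C^2-1$, which is the correct classical fact (a line in $\Pj^2$ has self-intersection $1$ and its proper transform after blowing up a point on it has self-intersection $0$). The lemma as printed in the paper reads $C^2=D^2-1$, which has the sign reversed; that this is a typo rather than an unusual convention is confirmed by the paper's own applications, e.g.\ the self-intersections recorded in \Cref{fig:eltransf}, where the fibre of $\BF_k$ through the blown-up point (self-intersection $0$) acquires self-intersection $-1$ and the section $C_-$ drops from $-k$ to $-k-1$, both instances of $D^2=C^2-1$. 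So your closing claim that $D^2=C^2-1$ ``is the asserted relation'' matches the intended statement and its uses throughout the paper, but not the literal displayed equation, which should read $C^2=D^2+1$.
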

Recall the definition of Hirzebruch surface.
\begin{definition}[Hirzebruch surface]
   A \textit{Hirzebruch surface} is the datum of a smooth projective surface $S$ and a morphism $S\to \Pj^1$ with all fibres isomorphic to $\Pj^1$.
\end{definition}
\begin{remark} 
Hirzebruch surfaces are naturally indexed by non-negative integers. Under this identification, the $k$-th  Hirzebruch  surface is the projectivisation of the rank-two vector bundle $\mathcal{O}_{\Pj^1}\oplus\mathcal{O}_{\Pj^1}(k)$, i.e.
\[
\BF_k\cong \Proj_{\Pj^1}(\mathcal{O}_{\Pj^1}\oplus\mathcal{O}_{\Pj^1}(k)).
\]
For instance, we have $\BF_0=\Pj^1\times\Pj^1$ and $\BF_1=\Bl_p\Pj^2$.
    Moreover, the fibration $\BF_k\to\Pj^1$ has two sections $C_+,C_-$ such that $C_+^2=-C_-^2=k$. More details on Hirzebruch surfaces can be found in  \cite[Chapter V]{Hartshorne2013}.
    \label{rem:sechirz}
\end{remark}
Any two Hirzebruch surfaces are birational to each other. The birational tranformations relating them are sequences of  the so-called \textit{elementary transformations}.

\Cref{fig:eltransf}   describes the elementary transformation relating $\BF_k$ and $\BF_{k+1}$. Each curve is labeled by its self-intersection. Notice that at each step self-intersections are computed via  \Cref{lemma:eltransf}.
\begin{figure}
    \centering
    \begin{tikzpicture}[basept/.style={circle, draw=black!100, fill=black!100, thick, inner sep=0pt,minimum size=1.2mm},scale=0.8]    
     \begin{scope}[xshift = -4cm, yshift=+1cm]
        \node at (0,2.5) {$\mathbb{F}_k$};
        \draw[thick] (-2,-2)--(-2,2) node [midway, left] {$0$}
            (2,-2)--(2,2) node [midway, right] {$0$}
            (-2.5,1.5)--(2.5,1.5) node [midway, above] {$k$}
            (-2.5,-1.5)--(2.5,-1.5) node [midway, below] {$-k$}
      ;
      
	   \node at (-2,1.5) [basept,label={[xshift=-16pt, yshift = -3 pt] \small{$(\infty,0)$}}] {};
    \node at (2,1.5) [basept,label={[xshift=18pt, yshift = -3 pt] \small{$(\infty,\infty)$}}] {};
    \node at (2,-1.5) [basept,label={[xshift=18pt, yshift = -16 pt] \small{$(0,\infty)$}}] {};
    \node at (-2,-1.5) [basept,label={[xshift=-14pt, yshift = -16 pt] \small{$(0,0)$}}] {};
    \end{scope}

    \begin{scope}[xshift = +4cm, yshift=+1cm]
    \node at (0,2.5) {$\mathbb{F}_{k+1}$};
    \draw[thick] (-2,-2)--(-2,2) node [midway, left] {$0$}
            (2,-2)--(2,2) node [midway, right] {$0$}
            (-2.5,1.5)--(2.5,1.5) node [midway, above] {$k+1$}
            (-2.5,-1.5)--(2.5,-1.5) node [midway, below] {$-k-1$}
      ;
      
	   \node at (-2,1.5) [basept,label={[xshift=-16pt, yshift = -3 pt] \small{$(\infty,0)$}}] {};
    \node at (2,1.5) [basept,label={[xshift=18pt, yshift = -3 pt] \small{$(\infty,\infty)$}}] {};
    \node at (2,-1.5) [basept,label={[xshift=18pt, yshift = -16 pt] \small{$(0,\infty)$}}] {};
    \node at (-2,-1.5) [basept,label={[xshift=-14pt, yshift = -16 pt] \small{$(0,0)$}}] {};
    \end{scope}
    
    \draw [->] (1.5,-2.75)--(2.5,-1.75) node[pos=0.5, below right] {$\text{Bl}_{(\infty,\infty)}\BF_{k+1}$};

    \begin{scope}[xshift = 0cm, yshift=-5.5cm]
    \draw[thick] (-2,-2)--(-2,2) node [midway, left] {$0$}
            (-2.5,1.5)--(2.5,1.5) node [midway, above] {$k$}
            (-2.5,-1.5)--(2.5,-1.5) node [midway, below] {$-k-1$}
      ;
    \draw[thick,red] (2,2)  .. controls (2,0.8) .. (1.3,-0.2) node [midway, right] {$-1$};
    \draw[thick,red] (2,-2)  .. controls (2,-0.8) .. (1.3,0.2) node [midway, right] {$-1$};
      
    \end{scope}
 \draw [->] (-1.5,-2.75)--(-2.5,-1.75) node[pos=0.5, below left] {$\text{Bl}_{(0,\infty)}\BF_{k}$};
\end{tikzpicture}
    \caption{The elementary transformation relating $\BF_k$ and $\BF_{k+1}$.}
    \label{fig:eltransf}
\end{figure}
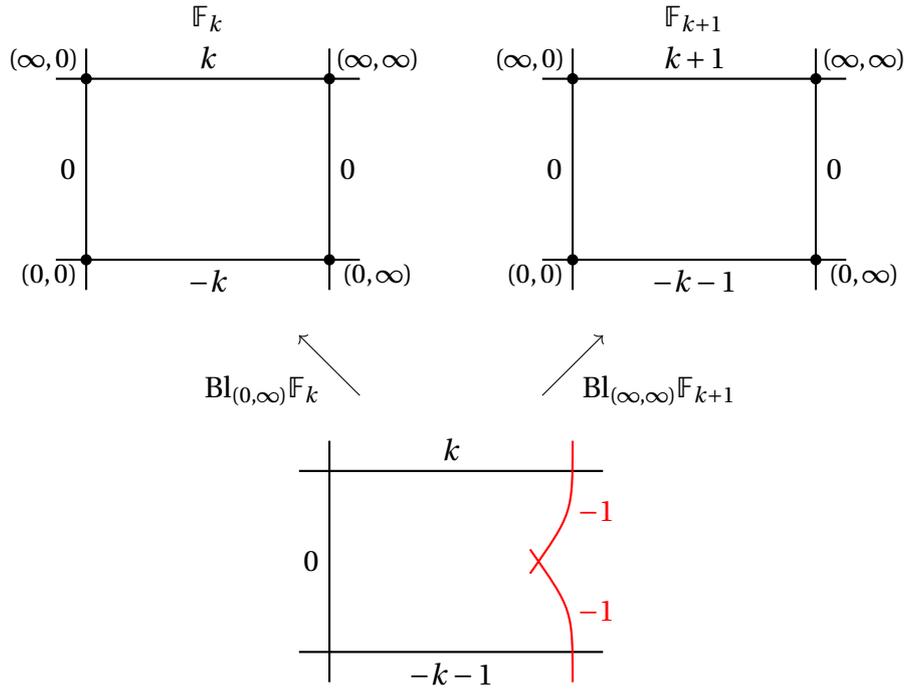

\subsection{Spaces of initial conditions and the Painlev\'e property}
\label{subsec:spacesofinitialconditions}

The fact that the sixth Painlev\'e equation \eqref{P6} has the Painlev\'e property is 
closely related to the existence of a space of initial conditions, as defined and constructed 
for $\pain{I}$, \ldots, $\pain{VI}$ by Okamoto \cite{Okamoto french}. His construction consists 
in considering an equivalent system of two first-order differential equations and then, through a 
combination of compactification and birational transformations, constructing an augmented phase space of which 
the flow of the system defines a uniform foliation.
Okamoto worked with polynomial first-order systems with meromorphic $t$-dependence of coefficients. Since we will be working with rational systems we give a formal definition of a space of initial conditions in algebro-geometric language. 
Before that, recall the notion of uniform foliation as used in the context of Painlev\'e equations \cite{Okamoto french, takano1, kimurauniformfoliation}. 

\begin{definition}[Uniform foliation]
    Consider a triple $(E,\pi,B)$ consisting of a complex manifold $E$ with a surjective holomorphic map $\pi : E\rightarrow B$ onto some domain $B\subset \C$.
    A uniform foliation of $(E,\pi,B)$ is a nonsingular foliation $\mathcal{F}$ of $E$ into complex one-dimensional analytic subsets called leaves such that
    \begin{itemize}
        \item each leaf of $\mathcal{F}$ intersects every fibre $E_t=\pi^{-1}(t)$, for $t\in B$, transversally;
        \item any path $\ell$ in $B$ with starting point $t_*  \in B$ and any point $p\in E_{t_* }$, can be lifted to the leaf passing through $p$.
    \end{itemize} 
\end{definition}
Consider a non-autonomous system of two first-order differential equations 
\begin{equation} 
    \label{systqpdef}
    q' = F(q,p;t), 
    \quad 
    p' = G(q,p;t),
\end{equation} 
where $F,G$ are rational in $q,p$ with coefficients being rational functions of $t$ 
regular on some domain $B\subset \C$. Regarding $q,p$ as coordinates on $\C^2$, the 
system~\eqref{systqpdef} defines a rational vector field on $\C^2 \times B$. We will sometimes refer to $B$ as the \textit{independent variable space} for the system \eqref{systqpdef}.

\begin{definition}[Indeterminacy point]
    A point $b=(q_*,p_*)$ in $\C^2 \times \Set{t_*}$ such that both the numerator and denominator of $F$ or $G$
    in equation \eqref{systqpdef} vanish is said to be an \emph{indeterminacy point}.
\end{definition}

If equation \eqref{systqpdef} admits a one-parameter family of local holomorphic solutions $(q_{\mu}(t),p_{\mu}(t))$, parametrised by $\mu\in \C$, all passing through the same point at $t=t_*$, i.e. $(q_{\mu}(t_*),p_{\mu}(t_*))=(q_*,p_*)$, then $(q_*,p_*)$ is an indeterminacy point.

\begin{remark}\label{rem:blowupseparate}
    Since we are working with smooth projective surfaces, the intersections of the zero loci of  the numerator and the denominator of $F$ or $G$
    in~equation \eqref{systqpdef} consist of a finite number of points. In order to regularise the system the main strategy consists in resolving these indeterminacy points and separating the solution curves passing through them to achieve a nonsingular foliation.
    The correct tool for this is the blow up of the surface at the intersection points. Indeed, roughly speaking this procedure decreases the tangency multiplicity of the two curves at meeting points. For instance, transversal curves are separated by blow ups.  
\end{remark}

\begin{definition}[Space of initial conditions] \label{def:spaceofinitialconditions}
    A \textit{space of initial conditions}, or \textit{space of initial values}, for the system \eqref{systqpdef} is a
    commutative diagram
    \begin{equation}\label{eq:spaceofinitialconditions}
        \begin{tikzcd}
     D\arrow[r,hook]\arrow[drr," \pi_D "']&   X \arrow[rr,dashed," \varphi_X"]\arrow[dr," \pi_X "] && \C^2\times B\arrow[dl,"\pi_B"]\\
        &&B
    \end{tikzcd}
    \end{equation}
    where
    \begin{itemize}
    \item $\pi_X$ is a proper morphism, $\varphi_X$ is a birational map and $\pi_B$ is the natural projection,
    \item $X$ is a smooth variety and the fibres $X_t$ of $\pi_X$ over each $t\in B$ are complex smooth rational projective surfaces,
        \item the restrictions $(\varphi_X)|_{X_t}$ are birational for all $t\in B$,
        \item $D\hookrightarrow X$ is a closed immersion, and the restriction $D_t \subset X_t$ defines 
            a non-empty codimension 1 subvariety\footnote{Actually, this is a Cartier divisor, see \Cref{subsec:parametrisation2}.},
        \item the restriction to $E=X\setminus D$ of the pullback via $\varphi_X$ of the 
            system \eqref{systqpdef} defines a uniform foliation of $(E,\pi_X|_{E},B)$, regarding
            $E$ as a complex manifold.
    \end{itemize} 
\end{definition}

\begin{notation} \label{notationSOIC}
   Whenever no confusion arises we will encode the data in \eqref{eq:spaceofinitialconditions} in a pair $(X,D)$ omitting the morphisms. Sometimes, with abuse of notation we will also refer to the variety $X$ as a space of initial conditions. 
   In what follows $D=\sum_{i} m_i D_i$ will be a divisor, but by abuse of notation we will sometimes use the same symbol to indicate the underlying subvariety $D=\cup_i D_i$.
\end{notation}

\begin{definition}\label{def:minimalSOIC}
   We say that a space of initial conditions is minimal if  any birational projective morphism to another space of initial conditions is an isomorphism.
\end{definition}

\begin{remark} \label{rem:inaccessible}
        In a space of initial conditions $X$, the pullback of the system is devoid
    of indeterminacy points on $X\setminus D$. 
    However it is important to note that achieving this kind of regularisation of a system does not alone guarantee that it defines a uniform foliation, and it must be shown that solutions do not reach $D$, usually by invoking the Painlev\'e property of the equation, though to some extent this can be done from knowledge of $(X,E)$ \cite{definingmanifolds}. 
\end{remark}

\begin{remark}\label{rem:MINIMAL}
    We also remark, the condition of being minimal for a space of initial conditions says that there are no curves that can be contracted without introducing new singularities.
    In the discrete setting, that is when dealing with iterations of 
    birational maps, one requires that the pullbacks
    to $X$ of the original map and its iterates has no divisorial contractions 
    in the indeterminacy locus. In other terms one asks that the map is algebraically stable, 
    see for instance~\cite{GraffeoGubbiotti2023,CarsteaTakenawa2019}.
\end{remark} 

Let us explain Okamoto's construction of a space of initial conditions for $\pain{I},\ldots,
\pain{VI}$ and how it relates to \Cref{def:spaceofinitialconditions}. For each Painlev\'e equation, 
Okamoto considered an equivalent system of two first-order differential equations, see equation \eqref{systqpdef}, with $F$ and $G$ polynomials in $q,p$ with coefficients analytic in $t$ on a domain $B\subset \C$ obtained removing the locations of fixed singularities of the Painlev\'e equation\footnote{For example $B =\C \setminus\Set{0,1}$ in the case $\pain{VI}$.}.
The systems in question are the polynomial Hamiltonian forms of the Painlev\'e equations which were provided by Okamoto \cite{OkamotoHams1, OkamotoHams2}, though they appeared much earlier in the work of Malmquist \cite{malmquist}.

As for a non-autonomous system of two first-order differential equations the phase space of \eqref{systqpdef} can be initially taken to be $\C^2 \times B$.
The Painlev\'e property means that for any path $\ell$ in $B$ with starting point $t_*  \in B$, the solution of any initial value problem for system \eqref{systqpdef} at $t=t_* $ can be meromorphically continued in $\C^2 \times B$ along $\ell$. 
 
The system \eqref{systqpdef} defines a nonsingular foliation of $\C^2 \times B$. 
However the presence of movable singularities of solutions means that solutions may not stay in $\C^2 \times B$ and the foliation is not uniform. 
This gives rise to the need to compactify the fibres of the phase space.

By compactifying $\C^2$ to some projective rational surface $S$ 
(common choices include $\p^2$, $\p^1 \times \p^1$ or more generally Hirzebruch 
surfaces $\mathbb{F}_{k}$ for any $k\ge 0$) one extends the system \eqref{systqpdef} to the 
trivial bundle $S \times B$. In general, the flow of the system does not define 
a nonsingular foliation since there may now be families of infinitely many solutions 
passing through a common point in the fibre over some $t\in B$, i.e.\ there can be families 
of infinitely many solutions with movable poles at the same location. 
Such families of solutions are parametrised by a free coefficient in the Laurent expansion 
about such a movable pole, classically known as a \emph{resonant parameter}, see for instance~\cite[\S IV]{Weissetal1983}.
To remedy this, one performs a sequence of blow ups of the fibre over $t \in B$ (of possibly $t$-dependent points).
Geometrically, this procedure separates the families of solutions and desingularises the foliation, see \Cref{rem:blowupseparate}. Finally, in order to reach a minimal space of initial condition, as Okamoto did, one might need to contract some rational curves. For instance in \Cref{sec:firstpar}, we have this necessity, while in \Cref{sec:hypersurf} we do not need further contractions.

This leads to a rational surface $X_t$ on which the system extended from \eqref{systqpdef} defines regular initial value 
problems everywhere away from a collection $D_t=\bigcup_i D_{t,i}$ of curves. 
It can be shown using the Painlev\'e property of system \eqref{systqpdef} that analytic continuations of the solutions of these initial value problems will never reach these curves.
Removing from $X_t$ the subvariety $D_t$ yield the fibre $E_t$ of the complex 
analytic bundle $\pi \colon E \rightarrow B$. Moreover, the flow of the system defines a 
uniform foliation of $E$.
In particular, any solution to an initial value problem for the 
system extended from \eqref{systqpdef} at $t_* \in B$ can be holomorphically continued in $E$ 
along any path $\ell$ with starting point $t_* $. 
In this sense each fibre $E_t$ parametrises the set of 
solutions and is called a space of initial conditions. 
\begin{remark}
With regards to classical terminology, note that the terms ``space of initial conditions'', 
``initial value space'' or ``space of initial values'' are traditionally used, following Okamoto, to refer to a fibre $E_t = X_t\setminus D_t$ of $\pi_X|_E$ over $t\in B$, i.e. the surface $X_t$ with divisor $D_t$ removed.  
The divisor $D_t$ or its components in the fibre over $t$ are usually called \emph{vertical leaves} of the foliation of $X$ with respect to the projection of $X$ to $B$, or \emph{inaccessible divisors} since leaves passing through points in $E_t$ will not reach them.  
   We will sometimes refer to $X_t$ as the \textit{compact surfaces} and $E_t=X_t\setminus D_t$ as the \textit{open surfaces}.
In the case of Painlev\'e equations, the total space $E$ is called a \emph{defining manifold} and has the structure of a complex analytic fibre bundle over $B$, but this is not algebraic and the isomorphisms between different fibres comes from the flow of the Painlev\'e equation, which is in general transcendental.

\end{remark} 
\subsection{Generalised Halphen surfaces}\label{subsec:haplhen}

For each Painlev\'e equation, a minimal space of initial conditions is a family of generalised Halphen surfaces, as defined by Sakai \cite{Sakai}.
A generalised Halphen surface is a smooth complex projective rational surface $X$ with an effective anti-canonical divisor of canonical type, i.e. $D \in |- \mathcal{K}_{X}|$ whose decomposition into irreducible components $D = \sum_{i} m_i D_i$ is such that the intersection pairing $\mathcal{K}_X . [D_i] = 0$ vanishes for all $i$, see \Cref{appendix:eltransf}. 
A generalised Halphen surface $X$ has anti-canonical linear system $|- \mathcal{K}_{X}|$ of dimension equal to either zero or one.
In the latter case $X$ is a rational elliptic surface, with the anti-canonical linear system providing its elliptic fibration, see \cite{SchuttShioda}.
In the former case there is a unique effective anti-canonical divisor $D$, and we call such $X$ a \emph{Sakai surface}.

As anticipated in the introduction there is a strong relation between affine root systems and Sakai surfaces.  
For a Sakai surface, the intersection graph of the irreducible components of the anti-canonical divisor $D$ is an affine Dynkin diagram of type $\mathrm{ADE}$.
In the cases of Sakai surfaces associated with Painlev\'e differential equations, the irreducible components are the curves removed from the fibre in the last step of Okamoto's construction, see \Cref{subsec:spacesofinitialconditions} and \Cref{notationSOIC}.
We list the corresponding types of Dynkin diagrams in Table \ref{table:surfacetypes}. 

\renewcommand{\arraystretch}{1.5}
\begin{table}[htb]
\centering
    \begin{tabular}{  | c | c | c c c  | c | c | c |}
    $\pain{I}$ & $\pain{II}$ & ~& $\pain{III}$ &~& $\pain{IV}$ & $\pain{V}$ & $\pain{VI}$ \\ \hline   
   $E_8^{(1)}$ & $E_7^{(1)}$ & $D_8^{(1)}$ & $D_7^{(1)}$ & $D_6^{(1)}$ & $E_6^{(1)}$ & $D_5^{(1)}$ & $D_4^{(1)}$  
    \end{tabular}
    \caption{Dynkin diagrams from intersection graphs of irreducible components of $D$ on Sakai surfaces for differential Painlev\'e equations}
   \label{table:surfacetypes}
   \end{table}
   \renewcommand{\arraystretch}{1}

\begin{remark}\label{rem:SAKAIMINIMAL} 
    Sakai surfaces are minimal as spaces of initial conditions for Painlev\'e equations, see \Cref{def:minimalSOIC}. 
Indeed, in order to contract a curve $C$ on a space of initial conditions without introducing new indeterminacies the curve $C$  has to be a vertical leaf of the foliation, i.e. an inaccessible divisor. On Sakai surfaces associated with differential Painlev\'e equations the inaccessible divisors are all $(-2)$-curves, which cannot be contracted on smooth points, see \cite[Theorem 5.7 and Remark V-5.7.2]{Hartshorne2013}.
\end{remark}

\subsection{Hamiltonian structures of Painlev\'e equations}\label{subsec:HAMILTONIANSTRUC}

We start this subsection by fixing our notation.

\begin{notation}
    Throughout this subsection we denote by $E$ a complex analytic fibre bundle 
    $\pi \colon E \rightarrow B$ over a domain $B \subset \C$, with fibre $E_t$ over 
    $t \in B$ equipped with a holomorphic symplectic form $\omega_t$. Moreover, we will 
    denote by $d$ the exterior derivative on the total space of $E$, and by $d_t$ the exterior 
    derivative on the fibre $E_t$, so that $d_t t = 0$.    
\end{notation}

For each Painlev\'e equation, the equivalent system, in the form    \eqref{systqpdef}, 
considered by Okamoto is of Hamiltonian form, i.e.
\begin{equation} \label{hamsystem}
    q' = \frac{\partial H}{\partial p}, \quad p' = - \frac{\partial H}{\partial q},
\end{equation}
for some Hamiltonian function $H=H(q,p;t)$ which is polynomial in $q,p$ with coefficients analytic in $t\in B$.
The Hamiltonian form of the system \eqref{hamsystem} extends to a global Hamiltonian structure 
of the system on $E$. Roughly speaking, this means that  $E$ admits an atlas such that 
the system is of Hamiltonian form in all charts.
However, the system \eqref{hamsystem} is non-autonomous and the gluing of $E$ is $t$-dependent 
 since the construction involves blow ups of $t$-dependent points. As a consequence,  a global Hamiltonian structure 
does not automatically follow from the Hamiltonian form \eqref{hamsystem} in the chart $(q,p;t)$ and, further, it only 
exists in certain atlases, as we will explain below.

Suppose that in some local coordinates $(q,p)$ for the fibre $E_t$ the symplectic form is written as $\omega_t = d_t q \wedge d_t p$.
Then, further suppose that  in this chart we have a non-autonomous system of differential equations of Hamiltonian form \eqref{hamsystem}.
We can extend this system to $E$, and its Hamiltonian structure is preserved if we have an atlas of canonical coordinates in the sense of the following definition. 
\begin{definition}[Symplectic atlas for $E$] \label{def:symplecticatlas} 
 A symplectic atlas for $E$ is the datum of an atlas $ \mathcal{U}$ for $E$, and a choice of coordinates $(x_U,y_U;t)$, for  each $U\in\mathcal{U}$, such that the symplectic form $\omega_t$ is written on each $U\in\mathcal{U}$ as
    \begin{equation}
        \omega_t = d_t x_U \wedge d_t y_U.
    \end{equation}   
\end{definition}

\begin{remark}
    Notice that, as a consequence of \Cref{def:symplecticatlas},  
    the local coordinates of a symplectic atlas are canonical coordinates 
    for $\omega_t$. They are  also called Darboux coordinates.
\end{remark}

Consider a chart $U_1 \in\mathcal{U}$ of a  symplectic atlas $\mathcal{U}$. A system of differential equations of Hamiltonian form
\begin{equation}
    x_1' = \frac{\partial H_1}{\partial y_1}, 
    \quad 
    y_1' = - \frac{\partial H_1}{\partial x_1}   
\end{equation}
in the chart $U_1$ transforms under the (possibly $t$-dependent) gluing
\begin{equation} \label{gluing}
  (x_1,y_1;t)\mapsto (x_2(x_1,y_1;t), y_2(x_1,y_1;t);t)  
\end{equation}
to be of Hamiltonian form 
\begin{equation}
    x_2' = \frac{\partial H_2}{\partial y_2},
    \quad
    y_2' = - \frac{\partial H_2}{\partial x_2},
\end{equation}
in any other chart $U_2\in \mathcal{U}$, where the Hamiltonians $H_1(x_1,y_1;t)$ and   $H_2(x_2,y_2;t)$ are related under \eqref{gluing} by  
\begin{equation} \label{eq:hamiltonianscorrection}
    \frac{\partial H_2}{\partial x_1} = \frac{\partial H_1}{\partial x_1} - \left(\frac{\partial x_2}{\partial x_1}\frac{\partial y_2}{\partial t}-\frac{\partial y_2}{\partial x_1}\frac{\partial x_2}{\partial t}\right), \quad 
    \frac{\partial H_2}{\partial y_1} = \frac{\partial H_1}{\partial y_1} - \left(\frac{\partial x_2}{\partial y_1}\frac{\partial y_2}{\partial t}-\frac{\partial y_2}{\partial y_1}\frac{\partial x_2}{\partial t}\right).
\end{equation} 
The key difference from the autonomous case is that the Hamiltonians 
$H_1$ and $H_2$ will not necessarily coincide under the gluing if it has non-trivial $t$-dependence, as can be seen in equation \eqref{eq:hamiltonianscorrection}.
The global Hamiltonian structure of a system of differential
equations on $E$ is not provided by a single Hamiltonian function. 
Instead, it is provided 
by a collection of Hamiltonians, one in each chart $U$ of a symplectic atlas 
$\mathcal{U}$, which define a common two-form on $E$ written as
\begin{equation}
    \Omega = d x_U \wedge d y_U + d H_U \wedge d t.
\end{equation}
 
It is possible to relax the requirement that the atlas consists of canonical coordinates 
such that the system is still of Hamiltonian form with respect to $\omega_t$ in each chart.
To explain this we make the following definition.

\begin{definition}[Local Hamiltonian structure of a system of ODEs on $E$]
    Let $(x,y;t)$ be a local chart of $E$  in which the symplectic form $\omega_t$ on $E_t$ 
    is written as 
\begin{equation}
\omega_t = F(x,y;t) d_t x \wedge d_t y,
\end{equation} 
where $F$ is a rational function in $(x,y;t)$.
    We say that a system of ODEs on $E$ has a local Hamiltonian structure with 
    respect to $\omega_t$ in the chart $(x,y;t)$ if there exists a Hamiltonian 
    function $H(x,y;t)$ such that  in this chart  the system  reads as follows:
    \begin{equation}
        F(x,y;t) x' = \frac{\partial H}{\partial y}, 
        \quad 
        F(x,y;t) y' = - \frac{\partial H}{\partial x}.
    \end{equation}
\end{definition}
 
When the coefficient function $F(x,y;t)$ does not have $t$-dependence, 
the following lemma ensures that a local Hamiltonian structure 
for a system on $E$ survives under changes of coordinates. 
The proof is by calculation, and we remark that the case when $F_1(x_1,y_1;t)=1$, $F_2(x_2,y_2;t)=1$  appears in  \cite{takano1}.
\begin{lemma} \label{symplecticlemma} 
    Consider the non-autonomous Hamiltonian system 
    \begin{equation}  \label{H1system}
        F_1(x_1,y_1) x_1' = \frac{\partial H_1}{\partial y_1}, 
        \quad  
        F_1(x_1,y_1) y_1' = - \frac{\partial H_1}{\partial x_1},
    \end{equation} 
    with Hamiltonian function $H_1(x_1,y_1;t)$ rational in $x_1,y_1$, with coefficients rational 
    in $t$ and regular on $B$.
    Let $\varphi \colon \C^3 \rightarrow \C^3$ be a transformation between copies of $\C^3$, 
    with coordinates $(x_1,y_1;t)$ and $(x_2,y_2;t)$ respectively, given by 
    \begin{equation}
        \varphi \colon  (x_1,y_1;t) \mapsto (x_2(x_1,y_1;t), y_2(x_1,y_1; t); t),
    \end{equation}
    and denote its restriction to $\C^2 \times \Set{t}$ by $\varphi_t$. 
    Suppose that it satisfies the condition:
    \begin{equation} \label{symplecticF}
        F_1(x_1,y_1) \, d_t x_1 \wedge d_t y_1 = 
        \varphi_t^*  \left(F_2(x_2,y_2) \, d_t x_2\wedge d_t y_2 \right),
    \end{equation}
    for rational functions $F_1$, $F_2$ whose coefficients are independent of $t$.
    Then, there exists $H_2(x_2,y_2;t)$, unique up to functions 
    of only $t$, such that
    \begin{equation}  \label{twoformsF}
        F_1(x_1,y_1) \,dx_1\wedge dy_1 + d H_1 \wedge dt 
        = 
        \varphi^* \left(F_2(x_2,y_2)\, dx_2\wedge dy_2 + dH_2\wedge dt\right).
    \end{equation}
    Further, the system \eqref{H1system}
    is transformed under $\varphi$ to 
    \begin{equation} 
        F_2(x_2,y_2) x_2' = \frac{\partial H_2}{\partial y_2},
        \quad  
        F_2(x_2,y_2) y_2' = - \frac{\partial H_2}{\partial x_2}.
    \end{equation}
\end{lemma}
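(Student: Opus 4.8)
The plan is to repackage each Hamiltonian system as a single two-form on the three-dimensional total space and to reduce the whole statement to the closedness of one canonical form. Set
\[
\Omega_i := F_i \, dx_i \wedge dy_i + dH_i \wedge dt, \qquad i = 1,2,
\]
on the copy of $\C^3$ with coordinates $(x_i,y_i;t)$. A direct interior-product computation shows that the system \eqref{H1system} is equivalent to requiring that the vector field $V_1 = \partial_t + x_1'\,\partial_{x_1} + y_1'\,\partial_{y_1}$ generating its flow lie in the kernel of $\Omega_1$, i.e. $\iota_{V_1}\Omega_1 = 0$ together with $\iota_{V_1}dt = 1$; indeed, reading off the $dx_1$- and $dy_1$-components of $\iota_{V_1}\Omega_1$ returns precisely the two equations \eqref{H1system}, while the $dt$-component vanishes automatically. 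With this reformulation, assertion \eqref{twoformsF} is the statement that $H_2$ can be chosen so that $\varphi^*\Omega_2 = \Omega_1$, and the transformed system will then follow from the kernel characterization applied to $\Omega_2$.

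First I would expand $\varphi^*\Omega_2$ in the coordinates $(x_1,y_1;t)$. Splitting $dx_2$ and $dy_2$ into their fibre parts (in $dx_1,dy_1$) and their $dt$-parts, the fibre part of $\varphi^*(F_2\,dx_2\wedge dy_2)$ equals $(F_2\circ\varphi)\,J\,dx_1\wedge dy_1$, where $J$ is the Jacobian of $\varphi_t$; by hypothesis \eqref{symplecticF} this is exactly $F_1\,dx_1\wedge dy_1$, so the $dx_1\wedge dy_1$-coefficients of $\varphi^*\Omega_2$ and $\Omega_1$ already agree. Writing $\tilde H_2 := H_2\circ\varphi$ and collecting the remaining $dx_1\wedge dt$ and $dy_1\wedge dt$ components, the identity $\varphi^*\Omega_2 = \Omega_1$ becomes the pair
\[
\frac{\partial \tilde H_2}{\partial x_1} = \frac{\partial H_1}{\partial x_1} - (F_2\circ\varphi)\,A, \qquad \frac{\partial \tilde H_2}{\partial y_1} = \frac{\partial H_1}{\partial y_1} - (F_2\circ\varphi)\,B,
\]
where $A := \partial_{x_1}x_2\,\partial_t y_2 - \partial_{x_1}y_2\,\partial_t x_2$ and $B := \partial_{y_1}x_2\,\partial_t y_2 - \partial_{y_1}y_2\,\partial_t x_2$. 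This is exactly the $F\neq 1$ analogue of the correction formula \eqref{eq:hamiltonianscorrection}, and it defines $\tilde H_2$ (hence $H_2 = \tilde H_2\circ\varphi_t^{-1}$) once it is solvable.

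The crux is solvability of this pair, i.e. $d_t$-closedness of the one-form $\theta := (\partial_{x_1}H_1 - (F_2\circ\varphi)A)\,dx_1 + (\partial_{y_1}H_1 - (F_2\circ\varphi)B)\,dy_1$. The $H_1$-contribution is closed by equality of mixed partials, so everything reduces to $\partial_{y_1}((F_2\circ\varphi)A) = \partial_{x_1}((F_2\circ\varphi)B)$. I would obtain this for free: since $F_2$ depends only on $(x_2,y_2)$ and carries no $t$-dependence, the two-form $\alpha := F_2\,dx_2\wedge dy_2$ is closed on the target $\C^3$, and therefore so is $\varphi^*\alpha$. Expanding $d(\varphi^*\alpha) = 0$ and using that $F_1$ is likewise $t$-independent — so that $d(F_1\,dx_1\wedge dy_1) = 0$ — leaves exactly $\big[\partial_{x_1}((F_2\circ\varphi)B) - \partial_{y_1}((F_2\circ\varphi)A)\big]\,dx_1\wedge dy_1\wedge dt = 0$, which is the required integrability condition. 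This is precisely the step where the hypothesis that $F_1,F_2$ have coefficients independent of $t$ is used, and the differential-form argument sidesteps a lengthy chain-rule computation that would otherwise force one to differentiate the Jacobian constraint $F_1 = (F_2\circ\varphi)J$. Uniqueness of $H_2$ up to an additive function of $t$ is then immediate from connectedness of the fibres.

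It remains to deduce the transformed system. With $\varphi^*\Omega_2 = \Omega_1$ in hand and $\iota_{V_1}\Omega_1 = 0$, the naturality of the interior product under the (generically biholomorphic) map $\varphi$ gives $\varphi^*(\iota_{\varphi_*V_1}\Omega_2) = \iota_{V_1}\varphi^*\Omega_2 = \iota_{V_1}\Omega_1 = 0$, hence $\iota_{\varphi_*V_1}\Omega_2 = 0$; moreover $\iota_{\varphi_*V_1}dt = 1$ because $\varphi$ fixes the coordinate $t$. By the kernel characterization of the first paragraph applied to $\Omega_2$, the field $\varphi_*V_1$ is the Hamiltonian field of $H_2$, and reading off its $\partial_{x_2}$- and $\partial_{y_2}$-components yields precisely the claimed system $F_2 x_2' = \partial_{y_2}H_2$, $F_2 y_2' = -\partial_{x_2}H_2$. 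The only real obstacle is the integrability step above; once it is cast as the closedness of $\alpha$ together with the commutation $d\varphi^* = \varphi^* d$, the remainder is bookkeeping. The special case $F_1 = F_2 = 1$ recovers the result of \cite{takano1}.
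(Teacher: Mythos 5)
Your proof is correct, and it takes a more structural route than the paper, which simply declares that ``the proof is by calculation'' (i.e.\ a direct chain-rule verification of the correction formula \eqref{eq:hamiltonianscorrection} in its $F\neq 1$ form, generalising the computation of \cite{takano1}). Your three key moves all check out: (i) the characterisation of system \eqref{H1system} as $\iota_{V_1}\Omega_1=0$, $\iota_{V_1}dt=1$ for $V_1=\partial_t+x_1'\partial_{x_1}+y_1'\partial_{y_1}$ is exact, including the observation that the $dt$-component vanishes automatically once the other two do; (ii) the integrability of the pair of PDEs for $\tilde H_2$ follows from $d\left(F_2\,dx_2\wedge dy_2\right)=0$, the commutation $d\varphi^*=\varphi^* d$, and $d\left(F_1\,dx_1\wedge dy_1\right)=0$, which is precisely where both $t$-independence hypotheses enter; (iii) the transported system follows from naturality of the interior product, using that $\varphi_t$ is generically a local isomorphism (its Jacobian is generically nonzero by \eqref{symplecticF}, since $F_1,F_2$ are nonzero rational functions). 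What your packaging buys is a proof that avoids differentiating the Jacobian identity $F_1=(F_2\circ\varphi)\,J$ by hand and makes transparent why the $t$-independence of $F_1,F_2$ is the essential hypothesis; what the paper's (unwritten) direct calculation buys is that it is elementary and produces the explicit correction terms \eqref{eq:hamiltonianscorrection} along the way, which the paper reuses when describing gluings of Hamiltonians on the symplectic atlas. One small point of rigour worth making explicit: to conclude that $H_2=\tilde H_2\circ\varphi_t^{-1}$ is a function of $(x_2,y_2;t)$ of the regularity claimed in the statement, you should note that $\varphi_t$ is birational (as it is in all the paper's applications, where $\varphi$ glues charts), not merely generically of maximal rank; this is implicit in the word ``transformation'' but deserves a sentence.
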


Therefore for a local Hamiltonian structure to extend to the whole of $E$ we 
require an atlas such that the symplectic form is independent of $t$ in all charts.

\begin{definition}[Hamiltonian atlas for $E$]  A Hamiltonian atlas for $E$ is the datum of 
    an atlas $ \mathcal{U} $, and a choice of coordinates $(x_U,y_U;t)$, 
    for each $U\in\mathcal{U}$, such that on each $U\in\mathcal{U}$ the 
    symplectic form $\omega_t$ is written  as
    \begin{equation}
        \omega_t = F_U(x_U,y_U) d_t x_U \wedge d_t y_U,
    \end{equation}   
    with $F_U$ independent of $t$.    
\end{definition}

Having a Hamiltonian atlas guarantees that a local Hamiltonian structure of a system of ODEs 
in a single chart extends to all charts of the atlas by \Cref{symplecticlemma}.

\begin{definition}[Global Hamiltonian structure of a system of ODEs on $E$]
    Given a Hamiltonian atlas $\mathcal{U}$ on $E$, a global Hamiltonian 
    structure of a system on $E$ is a collection of Hamiltonian functions $H_U$ 
    on each $U\in\mathcal{U}$ defining a two-form $\Omega$ on $E$ via
    \begin{equation}
        \Omega = F_U(x_U,y_U) dx_U \wedge dy_U + dH_U \wedge dt.
    \end{equation} 
\end{definition}
 
In the case of the Painlev\'e equations, the divisor associated to the symplectic form $\omega_t$, i.e. the form with respect to which the Hamiltonian 
structure is defined,   is  the unique effective 
anti-canonical divisor $D_t\in |-\mathcal K_{X_t}|$, see~\Cref{subsec:haplhen}. After Okamoto's construction for $\pain{II}$, 
\ldots, $\pain{VI}$, Kyoichi Takano and his collaborators constructed symplectic 
atlases for $E$ in the sense of \Cref{def:symplecticatlas} \cite{takano1,takano2,takano3}.
The system extended from Okamoto's Hamiltonian form of the corresponding Painlev\'e equation 
then has a global Hamiltonian structure and all Hamiltonians $H=H_U(x_U,y_U;t)$, for 
$U\in \mathcal{U}$, are polynomial functions of $x_U,y_U$. Further, Takano's school showed 
that this is the unique holomorphic global Hamiltonian structure extending meromorphically 
to $X$, so in this sense the manifold $E$ determines the Painlev\'e equation uniquely.

We will present the space of initial conditions, symplectic atlas and global Hamiltonian 
structure for the sixth Painlev\'e equation in \Cref{appendix:standardP6}.

\section{The system on the hypersurface in the first parametrisation}
\label{sec:firstpar}

In this section we provide our first example of a system of two first-order differential equations governing the restriction of system \eqref{syst3D} to the hypersurface \eqref{eq:St}. The main result of this section is \Cref{Th1} where we show that the restriction considered possesses the Painlevé property, which is part of \Cref{thm:main}.

\subsection{The first parametrisation and the associated 2D system}

In this subsection we show how to derive a system of two first-order equations from \eqref{syst3D}. We adopt the notation $\tilde{x}(t)=x(t)=R_n(t) $ and $\tilde{y}(t)=R_n'(t)$. 

In order to eliminate all the derivatives but $\tilde y'(t)$, we start by differentiating the first equation of system \eqref{syst3D} with respect to $t$ and use other equations of the system to eliminate the derivatives. Now, $\tilde{y}'(t)$ is expressed as a function of $r_n(t)$, $y_n(t)$ 
and $\tilde{x}(t) $. The first equation of the system is also  linear 
in  $y_n(t)$, and we can use it to solve with respect to $y_n(t)$. Substituting it into the expression for 
$\tilde{y}'(t)$ writes $\tilde{y}'(t)$ only in terms of $r_n(t)$ and $\tilde{x}(t)$. Finally, from \eqref{eq:rnRnlin}, we get the following system of  two first-order differential  equations:

\begin{equation}\label{syst1}
\tilde{x}'=\tilde{y},\quad \tilde{y}' =\frac{\phi(\tilde{x},\tilde{y};t)}{2 (t-1)^2 t^2 \tilde{x} (\kappa -\tilde{x}) (\kappa +(t-1) \tilde{x}) },
\end{equation}
where, $\phi$ is the polynomial: 
\begin{equation}
\begin{gathered}
\phi=-(t-1) \tilde{x}^4 \left( 6\kappa^2- \beta ^2 +4 n \left(t^2-6
   t\right) (\kappa-n)+t^2 (\alpha +\beta +1) (\alpha +\beta +2 \gamma +1)
\right. \\ \left.
-t
   \left(2 \alpha  (\kappa-2n)+2 \beta  (\gamma +1)+2 \gamma +5(\kappa-2n)^2\right)\right)
  \\  
-\tilde{x}^2 \left(\kappa ^2 \left(-\kappa^2  +6 \beta ^2   
 +4 n t (\kappa-n)+t^2 \left( \beta ^2-\alpha ^2+1\right)+t \left( 2 \alpha  (\kappa-2n)
\right. \right. \right. \\ \left.  \left.  \left.
-5 \beta ^2+2 \beta  (\gamma +1)+2 \gamma
   \right)\right)+3 t^2 (t-1)^3 \tilde{y}^2+2 \kappa  t \left(t^3-5 t^2+6 t-2\right) \tilde{y}\right)
 \\ 
+2 (t-1)
   \tilde{x}^3 \left((t-1)^2 t \tilde{y}-2 \kappa  \left(\beta^2-\kappa^2 
  +2 n t ( \beta +\gamma+n )+\beta  \gamma  t+(\kappa-1 )(\alpha+1)t\right)\right)
  \\  
-2 \kappa  \tilde{x} \left(-\left((t-2)
   (t-1)^2 t^2 \tilde{y}^2\right)+\kappa  t \left(2 t^2-3 t+1\right) \tilde{y}+\beta ^2 \kappa ^2
   (t-2)\right)
\\
-\kappa ^2 \left(\beta ^2 \kappa ^2-(t-1)^2 t^2 \tilde{y}^2\right)-(t-1)^3 \tilde{x}^6+2 \kappa 
   (t-2) (t-1)^2 \tilde{x}^5 .
\end{gathered}    
\end{equation}
Note that we also have the parametrisation of the invariant hypersurface in terms of $\tilde{x}$  and $\tilde{y}$. 
 
\subsection{Space of initial conditions} \label{subsec:spaceofinitialconditionssys3.2}
We now construct a space of initial conditions for the system \eqref{syst1}. Moreover, by proceeding as in \cite{Stud} we identify it with the KNY Hamiltonian form \eqref{systfg} for the sixth Painlev\'e equation.  
As outlined in \Cref{subsec:spacesofinitialconditions}, we consider the system as a rational vector field on $\C^2 \times B$, where $B = \C \setminus \Set{0,1}$ has coordinate $t$. 

Now, to construct the space of initial conditions, see \Cref{subsec:spacesofinitialconditions}, we compactify the fibres of the canonical projection $\pi_B:\C^2\times B\to B$ to the product $\Pj^1_{[x_0:x_1]}\times \Pj^1_{[y_0:y_1]}$ by considering the following identification:
\begin{equation}
    \tilde{x} = \frac{x_0}{x_1},\quad \tilde{y} = \frac{y_0}{y_1},
\end{equation}
where  as usual we have omitted the $t$-dependence from $\tilde x,\tilde y$ and $x_i,y_i$ for $i=0,1$.

 Precisely, each fibre $(\mathbb{P}^1\times\mathbb{P}^1)_t$, over $t\in B$, is covered by the four coordinate charts:
\begin{equation}
    (\tilde{x},\tilde{y}), 
    \quad
    \left(\tilde{x},\frac{1}{\tilde{y}}\right) = \left(\frac{x_0}{x_1},\frac{y_1}{y_0}\right), 
    \quad
    \left(\frac{1}{\tilde{x}},\tilde{y}\right)= \left(\frac{x_1}{x_0},\frac{y_0}{y_1}\right),
    \quad 
    \left(\frac{1}{\tilde{x}},\frac{1}{\tilde{y}}\right)= \left(\frac{x_1}{x_0},\frac{y_1}{y_0}\right).    
\end{equation} 
After compactification the phase space for system \eqref{syst1} is then the trivial bundle $(\p^1 \times \p^1) \times B$ over $B$. 
We perform a sequence of blow ups of the fibres over each $t\in B$, of points where the rational vector field defining the system has indeterminacies.  
We perform the blow up at a point $(x,y)=(a, b)$ as prescribed in \cite[Propositions IV-21 \& IV-25]{EisenbudHarrisBook}. This procedure introduces two charts whose coordinates we denote by $(u,v)$ and $(U,V)$. In this setting the blow up map is given by
\begin{equation}\label{eq:coordblowp}
\begin{tikzcd}[row sep =tiny]
    (u,v)\arrow[r,mapsto]& (a+uv,b+v) =(x,y),\\
    (U,V)\arrow[r,mapsto]& (a+V,b+UV)=(x,y).
\end{tikzcd}
\end{equation}
In these coordinates, the exceptional divisor of the blow up has local equations $v=V=0$.

We initially find the following points of indeterminacy on $(\p^1 \times \p^1)_t$
for the system \eqref{syst1}:
\begin{equation}
\begin{gathered}
p_1=\left(\left[0:1\right],\left[\beta\kappa:t(1-t)\right]\right) ,\quad 
p_2=\left(\left[0:1\right],\left[ \beta\kappa:t(t-1)\right]\right) ,\\
p_3=\left(\left[\kappa:1-t\right],\left[ (1-\alpha )\kappa:(t-1)^2\right]\right) ,\quad 
p_4=\left(\left[\kappa:1-t\right],\left[ (1+\alpha )\kappa:(t-1)^2\right]\right) ,\\
p_5=\left(\left[\kappa:1\right],\left[\gamma\kappa:1-t \right]\right),\quad
p_6=\left(\left[\kappa:1\right],\left[\gamma\kappa:t-1 \right]\right), \\
p_7=\left(\left[1:0\right],\left[1:0 \right]\right).
\end{gathered}
\label{eq:MIp1p7}    
\end{equation} 

A direct computation shows that the indeterminacies at $p_1,\ldots,p_6$ are resolved after a single blow up. 
After blowing up $p_7$,  the system still has an indeterminacy point on the corresponding exceptional curve. Precisely, adopting as usual the coordinates coming from \cite[Propositions IV-21 \& IV-25]{EisenbudHarrisBook} the indetermincy point is 
\begin{equation}
  p_8:\, (U_7,\,V_7)=(0,0).  
  \label{eq:MIp8}
\end{equation}
Then, if we blow up also $p_8$ we find two further indeterminacies: 
\begin{equation}
     p_9:\, (u_8,\,v_8)=(-t,0), \quad  p_{10}:\, (u_8,\,v_8)=(t,0).
\end{equation}
Finally, the blow up of these two points regularises the system.  

Denote by $S_{t}$ the  blow up of the fibre $(\Pj^1\times\Pj^1)_t$ with center at the points $p_i$, $i=1,\ldots,10$. The inaccessible divisors on $S_t$ are the proper transforms of the lines 
\[
\Set{\tilde x = 0},\ \Set{\tilde x =\frac{\kappa}{1-t}},\ \Set{\tilde x =\kappa},\ \Set{\tilde x =\infty},\ \Set{\tilde y =\infty}, 
\]
and of the exceptional divisors over the points $p_8,p_7$, see \Cref{fig:surface:syst1}. Denote by $D_t$ their union. Then, system  \eqref{syst1} defines a uniform foliation of $S_t\setminus D_t$, assuming $D_t$ is inaccessible.
Rather than establishing this directly by analysis of equation \eqref{syst1} on $S_t$, we will deduce this via the transformation to the Hamiltonian form \eqref{systfg} of $\pain{VI}$.
 Therefore, we have constructed a space of initial conditions for system  \eqref{syst1}. 

\begin{remark} \label{rem:stricttransforms}
    We remark that, with abuse of notation, in \Cref{fig:surface:syst1} we have denoted by
    the same symbol $L_i$ the exceptional divisor over the point $p_i$, for $i=1,\ldots,10$, and its proper transforms under blow up.
\end{remark}

If we want to run the identification procedure, we need now to reach a minimal space of initial conditions, i.e. a Sakai surface (see \Cref{rem:SAKAIMINIMAL}). A direct computation shows that $S_t$ is not minimal as we can contract two curves without introducing new indeterminacies, see \Cref{rem:MINIMAL}.

The first is the proper transform in $S_{t}$ of the coordinate line $\Set{\tilde{x}=\infty}$, which has self-intersection $-1$ and is inaccessible. The second is the image under the first contraction of the exceptional line coming from the blow up of $p_7$, which similarly is inaccessible and it has self-intersection $-1$, see \Cref{lemma:eltransf}.
Denote by $D_t$ the union of the $(-2)$-curves on the resulting surface $X_t$. 
Again a direct computation shows that the induced system on $X_t$ has no indeterminacies and any contraction onto a smooth surface does.
Then we get a family $X$ over $B$, and after removal of $D_t$ from each fibre we have $\pi_X|_{E} : E \rightarrow B$ with fibre $E_t = X_t \setminus D_t$ such that the system \eqref{syst1} defines regular initial value problems everywhere on $E$ and a uniform foliation assuming $D_t$ is inaccessible as will be confirmed below (see \Cref{rem:inaccessible2}).
Summing up, this procedure leads to a minimal space of initial conditions $(X,D)$ in the sense of \Cref{def:spaceofinitialconditions} and \Cref{def:minimalSOIC}. 

We depict the configuration of $(-2)$-curves of $X_t$ in the bottom-right corner of
   \Cref{fig:surface:syst1}. Precisely, curves of self-intersection $-1,\,-2$ are coloured in red, blue respectively, see \Cref{appendix:eltransf}.

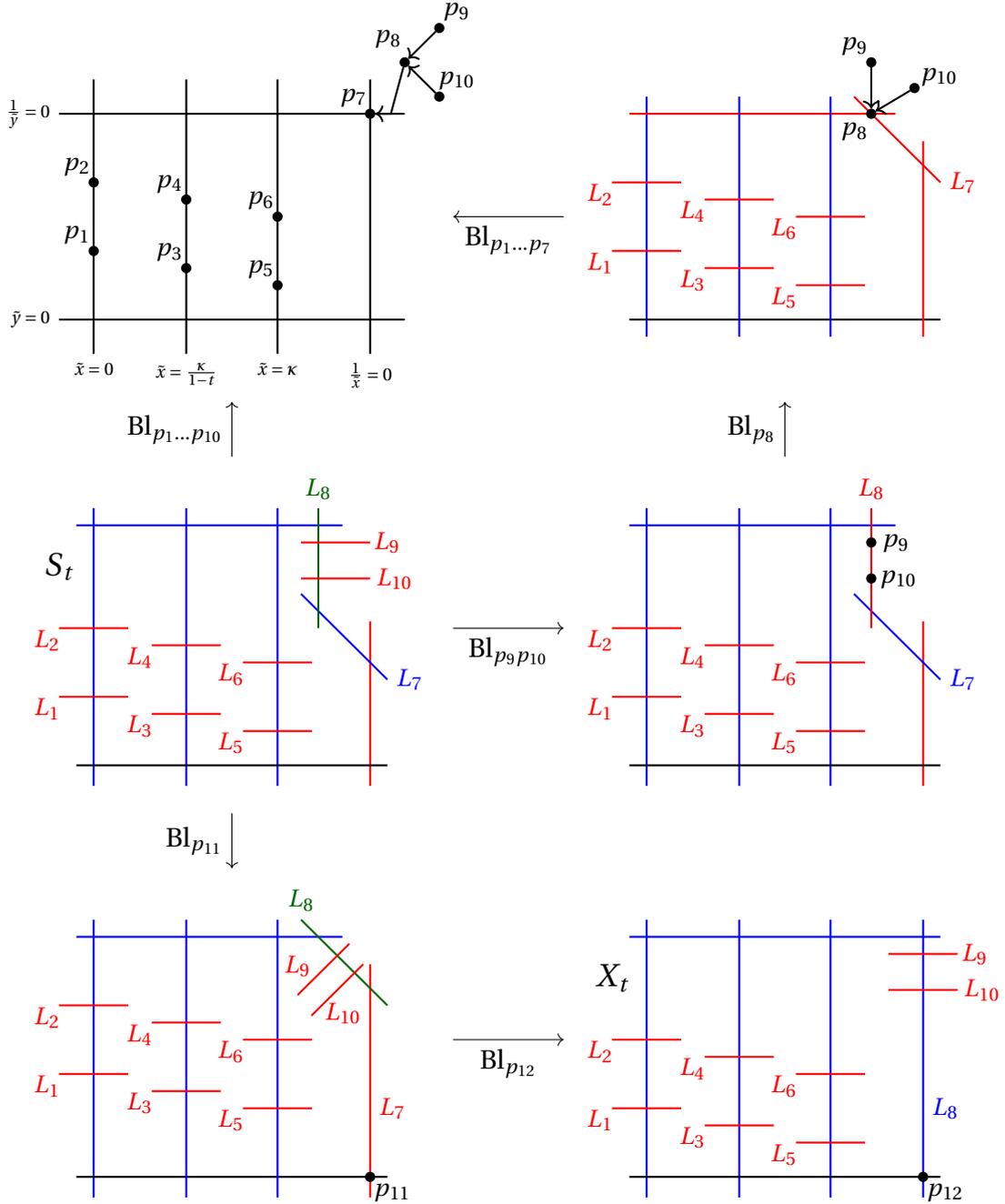
\begin{figure}[htb]
\centering
    \begin{tikzpicture}[basept/.style={circle, draw=black!100, fill=black!100, thick, inner sep=0pt,minimum size=1.2mm}]
    \begin{scope}[xshift = -4cm]
    \draw[thick] 
    (-2,-2)--(-2,2)
    (-2.5,-1.5)--(2.5,-1.5) 
    (2,-2)--(2,2) 
    (-2.5,1.5)--(2.5,1.5)
      (-.66,-2)--(-.66,2)  
      (.66,-2)--(.66,2)  
      ;     
      \node[left] at (-2.5,-1.5) {\tiny $\tilde{y}=0$};
      \node[left] at (-2.5,1.5) {\tiny $\frac{1}{\tilde{y}}=0$};     
      \node[below] at (-2,-2) {\tiny $\tilde{x}=0$};
      \node[below] at (2,-2) {\tiny $\frac{1}{\tilde{x}}=0$};
      \node[below] at (-.66,-2) {\tiny $\tilde{x}=\frac{\kappa}{1-t}$};
      \node[below] at (.66,-2) {\tiny $\tilde{x}=\kappa$};
      
	 \node (p1) at (-2,-.5) [basept,label={[xshift=-7pt, yshift = -3 pt] \small $p_{1}$}] {};
	 \node (p2) at (-2,+.5) [basept,label={[xshift=-7pt, yshift = -3 pt] \small $p_{2}$}] {};
	 \node (p3) at (-.66,-.75) [basept,label={[xshift=-7pt, yshift = -3 pt] \small $p_{3}$}] {};
	 \node (p4) at (-.66,+.25) [basept,label={[xshift=-7pt, yshift = -3 pt] \small $p_{4}$}] {};	 
      \node (p5) at (+.66,-1) [basept,label={[xshift=-7pt, yshift = -3 pt] \small $p_{5}$}] {};
	 \node (p6) at (+.66,+0) [basept,label={[xshift=-7pt, yshift = -3 pt] \small $p_{6}$}] {};
    \node (p7) at (2,1.5) [basept,label={[xshift=-7pt, yshift = -3 pt] \small $p_{7}$}] {};
      
	 \node (p8) at (2.5,2.25) [basept,label={[xshift=-7pt, yshift = 0 pt] \small $p_{8}$}] {};

    \node (p9) at (3,2.75) [basept,label={[xshift=7pt, yshift = -3 pt] \small $p_{9}$}] {};
    \node (p10) at (3,1.75) [basept,label={[xshift=7pt, yshift = -3 pt] \small $p_{10}$}] {};
    \draw[thick, ->] (p8) --(2.3,1.5)-- (p7);
    \draw[thick, ->] (p10) -- (p8);
    \draw[thick, ->] (p9) -- (p8);

    \end{scope}

 \draw [->] (.8,0)--(-.8,0) node[pos=0.5, below] {$\text{Bl}_{p_1\dots p_{7}}$};
  
    	\draw [->] (4,-3.5)--(4,-2.7) node[pos=0.5, left] {$\text{Bl}_{p_8}$};

    \begin{scope}[xshift = +4cm, yshift=-6cm]
    \draw[thick,blue] 
    (-2,-2.3)--(-2,1.75); 
        \draw[thick,red] 
    (2,-2.3)--(2,0.1) ;
    \draw[thick,blue] 
    (-2.25,1.5)--(1.6,1.5);
    \draw[thick,blue] 
      (-.66,-2.3)--(-.66,1.75);  
    \draw[thick,blue] 
      (.66,-2.3)--(.66,1.75); 
    \draw[thick]  (-2.25,-2)--(2.25,-2)  ; 
    \draw[red, thick] (-2.5,-1) -- (-1.5,-1) node[pos=0,xshift=-5pt,yshift=-5pt] {\small $L_1$};
    \draw[red, thick] (-2.5,0) -- (-1.5,0) node[pos=0,xshift=-5pt,yshift=-5pt] {\small $L_2$};
    \draw[red, thick] (-1.16,-1.25) -- (-.16,-1.25) node[pos=0,xshift=-5pt,yshift=-5pt] {\small $L_3$};
    \draw[red, thick] (-1.16,-.25) -- (-.16,-.25) node[pos=0,xshift=-5pt,yshift=-5pt] {\small $L_4$};
    \draw[red, thick] (+.16,-1.5) -- (+1.16,-1.5) node[pos=0,xshift=-5pt,yshift=-5pt] {\small $L_5$};
    \draw[red, thick] (+.16,-0.5) -- (+1.16,-0.5) node[pos=0,xshift=-5pt,yshift=-5pt] {\small $L_6$};
    
    \draw[blue, thick] (1.0,0.5) --(2.25,-0.75);      
    \draw[red, thick] (1.25, 0) -- (1.25,1.75);
      \node (p9) at (1.25,1.25) [basept,label={[xshift=10pt, yshift = -10 pt] \small $p_{9}$}] {};
      \node (p10) at (1.25,0.725) [basept,label={[xshift=11pt, yshift = -10 pt] \small $p_{10}$}] {}; 
    \node[above,red] at (1.25,1.75) {\small $L_8$};
    \node[right,blue] at (2.25,-0.75) {\small $L_7$};  
    \end{scope}

 \draw [<-] (.8,-12)--(-.8,-12) node[pos=0.5, below] {$\text{Bl}_{ p_{12}}$};
 \draw [<-] (.8,-6)--(-.8,-6) node[pos=0.5, below] {$\text{Bl}_{p_{9} p_{10}}$};

\draw [->] (-4,-3.5)--(-4,-2.7) node[pos=0.5, left] {$\text{Bl}_{p_1\dots p_{10}}$};
  \begin{scope}[xshift = +4cm]
    \draw[thick,blue] 
    (-2,-1.75)--(-2,1.75);
     \draw[thick,black]    
    (-2.25,-1.5)--(2.25,-1.5);
        \draw[thick,red] 
    (2,-1.75)--(2,1.1) ;
    \draw[thick,red] 
    (-2.25,1.5)--(1.6,1.5);
    \draw[thick,blue] 
      (-.66,-1.75)--(-.66,1.75);  
    \draw[thick,blue] 
      (.66,-1.75)--(.66,1.75)  ; 
      
    \draw[red, thick] (-2.5,-.5) -- (-1.5,-.5) node[pos=0,xshift=-5pt,yshift=-5pt] {\small $L_1$};
    \draw[red, thick] (-2.5,+.5) -- (-1.5,+.5) node[pos=0,xshift=-5pt,yshift=-5pt] {\small $L_2$};
    \draw[red, thick] (-1.16,-.75) -- (-.16,-.75) node[pos=0,xshift=-5pt,yshift=-5pt] {\small $L_3$};
    \draw[red, thick] (-1.16,+.25) -- (-.16,+.25) node[pos=0,xshift=-5pt,yshift=-5pt] {\small $L_4$};
    \draw[red, thick] (+.16,-1) -- (+1.16,-1) node[pos=0,xshift=-5pt,yshift=-5pt] {\small $L_5$};
    \draw[red, thick] (+.16,0) -- (+1.16,0) node[pos=0,xshift=-5pt,yshift=-5pt] {\small $L_6$};
    \draw[red, thick] (1.0,1.75) --(2.25,0.5); 
    \node[right,red] at (2.25,0.5) {\small $L_7$};      

      \node (p8) at (1.25,1.5) [basept,label={[xshift=-7pt, yshift = -17 pt] \small $p_{8}$}] {};
      \node (p9) at (1.25,2.25) [basept,label={[xshift=-7pt, yshift = -3 pt] \small $p_{9}$}] {};
      \node (p10) at (1.875,1.875) [basept,label={[xshift=10pt, yshift = -5 pt] \small $p_{10}$}] {};
    \draw[thick, ->] (p10) -- (p8);
    \draw[thick, ->] (p9) -- (p8);
    \end{scope}
    \begin{scope}[xshift =  4cm, yshift=-12cm]
    \node at (-2.5,0.9) {\large $X_t$};
    \draw[thick,blue] 
    (-2,-2.3)--(-2,1.75); 
        \draw[thick,blue] 
    (2,-2.3)--(2,1.75) ;
    \draw[thick,blue] 
    (-2.25,1.5)--(2.25,1.5);
    \draw[thick,blue] 
      (-.66,-2.3)--(-.66,1.75);  
    \draw[thick,blue]  (.66,-2.3)--(.66,1.75)  ;  
    \draw[thick]  (-2.25,-2)--(2.25,-2)  ; 
    \draw[red, thick] (-2.5,-1) -- (-1.5,-1) node[pos=0,xshift=-5pt,yshift=-5pt] {\small $L_1$};
    \draw[red, thick] (-2.5,0) -- (-1.5,0) node[pos=0,xshift=-5pt,yshift=-5pt] {\small $L_2$};
    \draw[red, thick] (-1.16,-1.25) -- (-.16,-1.25) node[pos=0,xshift=-5pt,yshift=-5pt] {\small $L_3$};
    \draw[red, thick] (-1.16,-.25) -- (-.16,-.25) node[pos=0,xshift=-5pt,yshift=-5pt] {\small $L_4$};
    \draw[red, thick] (+.16,-1.5) -- (+1.16,-1.5) node[pos=0,xshift=-5pt,yshift=-5pt] {\small $L_5$};
    \draw[red, thick] (+.16,-0.5) -- (+1.16,-0.5) node[pos=0,xshift=-5pt,yshift=-5pt] {\small $L_6$};

    \draw[red, thick] (1.5,1.25) -- (2.5,1.25) node[pos=1,xshift=7pt,yshift=0pt] {\small $L_9$};

    \draw[red, thick] (1.5,.725) -- (2.5,.725) node[pos=1,xshift=10pt,yshift=0pt] {\small $L_{10}$};
	 \node (p12) at (2,-2) [basept,label={[xshift=9pt, yshift = -16 pt] \small $p_{12}$}] {};  
    \node[blue,right] at (2,-1) {\small $L_8$}; 
    \end{scope}
    \begin{scope}[xshift = -4cm, yshift=-6cm]
    \draw[thick,blue] 
    (-2,-2.3)--(-2,1.75); 
        \draw[thick,red] 
    (2,-2.3)--(2,0.1) ;
    \draw[thick,blue] 
    (-2.25,1.5)--(1.6,1.5);
    \draw[thick,blue] 
      (-.66,-2.3)--(-.66,1.75);  
    \draw[thick,blue]  (.66,-2.3)--(.66,1.75)  ;  
    \draw[thick]  (-2.25,-2)--(2.25,-2)  ; 
    
    \draw[red, thick] (-2.5,-1) -- (-1.5,-1) node[pos=0,xshift=-5pt,yshift=-5pt] {\small $L_1$};
    \draw[red, thick] (-2.5,0) -- (-1.5,0) node[pos=0,xshift=-5pt,yshift=-5pt] {\small $L_2$};
    \draw[red, thick] (-1.16,-1.25) -- (-.16,-1.25) node[pos=0,xshift=-5pt,yshift=-5pt] {\small $L_3$};
    \draw[red, thick] (-1.16,-.25) -- (-.16,-.25) node[pos=0,xshift=-5pt,yshift=-5pt] {\small $L_4$};
    \draw[red, thick] (+.16,-1.5) -- (+1.16,-1.5) node[pos=0,xshift=-5pt,yshift=-5pt] {\small $L_5$};
    \draw[red, thick] (+.16,-0.5) -- (+1.16,-0.5) node[pos=0,xshift=-5pt,yshift=-5pt] {\small $L_6$};
    
    \draw[blue, thick] (1.0,0.5) --(2.25,-0.75);      
    \draw[black!60!green, thick] (1.25, 0) -- (1.25,1.75);
    \node[above,black!60!green] at (1.25,1.75) {\small $L_8$};

    \draw[red, thick] (1.0,1.25) -- (2,1.25) node[pos=1,xshift=7pt,yshift=0pt] {\small $L_9$};

    \draw[red, thick] (1.0,.725) -- (2,.725) node[pos=1,xshift=10pt,yshift=0pt] {\small $L_{10}$}; 
    \node[right,blue] at (2.25,-0.75) {\small $L_7$}; 

    \node at (-2.5,0.9) {\large $S_t$};
    \end{scope}

    \begin{scope}[xshift = -4cm,yshift = -12cm]
    \draw[thick,blue] 
    (-2,-2.3)--(-2,1.75);
     \draw[thick,black]    
    (-2.25,-2)--(2.25,-2);
        \draw[thick,red] 
    (2,-2.3)--(2,1.1) ;
    \draw[thick,blue] 
    (-2.25,1.5)--(1.6,1.5);
    \draw[thick,blue] 
      (-.66,-2.3)--(-.66,1.75);  
    \draw[thick,blue] 
      (.66,-2.3)--(.66,1.75)  ; 
    \draw[red, thick] (-2.5,-.5) -- (-1.5,-.5) node[pos=0,xshift=-5pt,yshift=-5pt] {\small $L_1$};
    \draw[red, thick] (-2.5,+.5) -- (-1.5,+.5) node[pos=0,xshift=-5pt,yshift=-5pt] {\small $L_2$};
    \draw[red, thick] (-1.16,-.75) -- (-.16,-.75) node[pos=0,xshift=-5pt,yshift=-5pt] {\small $L_3$};
    \draw[red, thick] (-1.16,+.25) -- (-.16,+.25) node[pos=0,xshift=-5pt,yshift=-5pt] {\small $L_4$};
    \draw[red, thick] (+.16,-1) -- (+1.16,-1) node[pos=0,xshift=-5pt,yshift=-5pt] {\small $L_5$};
    \draw[red, thick] (+.16,0) -- (+1.16,0) node[pos=0,xshift=-5pt,yshift=-5pt] {\small $L_6$};

    \node[black!60!green,above] at (1.0,1.75) {\small $L_8$};
    \draw[black!60!green, thick] (1.0,1.75) --(2.25,0.5);  
    \draw[red, thick] (0.95,0.65) --(1.7,1.4) node[pos=0,xshift=0pt,yshift=12pt] {\small $L_{9}$};   
    \draw[red, thick] (1.15,0.35 )--(1.9,1.1) node[pos=0,xshift=13pt,yshift=1pt] {\small $L_{10}$};  
	 \node (p11) at (2,-2) [basept,label={[xshift=9pt, yshift = -16 pt] \small $p_{11}$}] {}; 
    \node[red,right] at (2,-1) {\small $L_7$}; 
    \end{scope}
\draw [<-] (-4,-9.5)--(-4,-8.7) node[pos=0.5, left] {$\text{Bl}_{p_{11}}$}; 
\end{tikzpicture}
\caption{Sequence of blow ups and contraction from $(\Pj^1\times \Pj^1)_t$ to $X_t$. Curves of self-intersection $-1,-2,-3$ are coloured in red, blue, green respectively. See \Cref{rem:stricttransforms} for notation.
}
\label{fig:surface:syst1}
\end{figure}

\subsection{Relation between the first 2D system and $\pain{VI}$}

Following the procedure in \cite{Stud}, we can obtain a change of variables between the 
system~\eqref{syst1} and the Hamiltonian form \eqref{systfg} of $\pain{VI}$. This is the content of the following theorem and it is a part of \Cref{thm:main}. 
  
\begin{theorem}\label{Th1}
System \eqref{syst1} transforms to the Hamiltonian form \eqref{systfg} of the sixth Painlev\'e equation with parameters
\begin{equation} \label{rootvars}
a_0= -\gamma, \;a_1 =\kappa, \;  
a_2= -n - \alpha,\; a_3= -\beta, \; 
a_4=  
  \alpha,
  \end{equation}
via the transformation 
\begin{gather} 
    f = \frac{ \kappa + (t-1)\tilde{x}}{\kappa} ,\label{identsyst11}\\
    g =  \frac{(t-1)\left(\alpha- \beta - \gamma - 1\right)\tilde{x}^2 -  \kappa \left( 1- \alpha + 2 \beta + \gamma + (\alpha - \beta -1)t \right) \tilde{x}  + \kappa \left( \beta \kappa + (t-1) t \tilde{y} \right) }{2(t-1)(\tilde{x} - \kappa)\tilde{x}}.\label{identsyst12}
\end{gather}
The inverse transformation is given by
\begin{gather}
\tilde{x} = \frac{\kappa  (f-1)}{t-1},\\
\tilde{y} = \frac{\kappa \left[ (1-\alpha)t +  \left(  \alpha -\gamma - 1 + (\alpha - \beta  - 1 )t \right) f + \left(1- \alpha + \beta + \gamma \right) f^2 + 2 (f-1)(f-t) g\right]}{(t-1)^2 t} .
\end{gather}
\end{theorem}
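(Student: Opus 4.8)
The plan is to run the geometric identification procedure of \cite{Stud}. We have already realised the minimal space of initial conditions $(X,D)$ of system~\eqref{syst1} as a Sakai surface whose inaccessible divisor $D_t$ has the $\D_4^{(1)}$ configuration of \Cref{fig:surface:syst1}, and the KNY Hamiltonian form~\eqref{systfg} of $\pain{VI}$ carries a space of initial conditions of the same type (recalled in \Cref{appendix:standardP6}). The idea is to construct an explicit fibre-wise isomorphism between these two families of surfaces over $B$ by matching their anti-canonical configurations and blow-up data; restricting such an isomorphism to the distinguished affine chart produces a candidate change of variables, which one then confirms conjugates the flows by direct substitution.

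To produce the isomorphism I would first match the irreducible components of $D_t$ with those of the $\pain{VI}$ anti-canonical divisor, respecting the $\D_4^{(1)}$ Dynkin diagram. The correspondence of rulings forces the transformation to be triangular: the fibration by the vertical lines $\Set{\tilde x = \text{const}}$ must be carried to the fibration $\Set{f = \text{const}}$ of the $\pain{VI}$ surface, and the inaccessible section $\Set{\tilde y = \infty}$ to $\Set{g = \infty}$, so that $f$ is a Möbius transformation of $\tilde x$ with $t$-dependent coefficients and $g$ is affine in $\tilde y$. Matching the four vertical leaves $\Set{\tilde x = 0}$, $\Set{\tilde x = \kappa}$, $\Set{\tilde x = \kappa/(1-t)}$, $\Set{\tilde x = \infty}$ to the four special fibres $f = 1,\,t,\,0,\,\infty$ of $\pain{VI}$ determines $f$ as in~\eqref{identsyst11}. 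The coefficients of $g$ are then fixed by matching the positions of the base points~\eqref{eq:MIp1p7} on the corresponding leaves, which also yields the parameter correspondence~\eqref{rootvars}; as a sanity check one verifies that these values satisfy the normalising constraint $a_0 + a_1 + 2 a_2 + a_3 + a_4 = 1$.

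Having obtained the candidate map, I would confirm it by direct substitution. Computing $f'$ by the chain rule and using $\tilde x' = \tilde y$ gives $f' = \bigl((t-1)\tilde y + \tilde x\bigr)/\kappa$; substituting the inverse map and then the value of $\tilde y'$ from~\eqref{syst1} into $g' = \partial_t g + \tilde y\,\partial_{\tilde x} g + \tilde y'\,\partial_{\tilde y} g$ and simplifying should reproduce the right-hand sides of~\eqref{systfg} under~\eqref{rootvars}. The inverse transformation requires no new idea: \eqref{identsyst11} is affine in $\tilde x$ and hence invertible immediately, and \eqref{identsyst12} is affine in $\tilde y$, so solving for $\tilde y$ and inserting $\tilde x = \kappa(f-1)/(t-1)$ gives the stated formulas.

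The \textbf{main obstacle} is twofold. First, the $\D_4^{(1)}$ configuration has a large symmetry group --- the Dynkin diagram automorphisms together with the affine Weyl group acting by Bäcklund transformations --- so many a priori identifications of the components of $D_t$ are possible, each differing from the next by a Bäcklund transformation; the delicate part is selecting the one that lands on the standard KNY normalisation and checking that the induced parameter map is consistent on all components simultaneously. Second, the final verification of the $g'$ equation involves the cumbersome polynomial $\phi$, so it is realistically only feasible with computer algebra. It is precisely the geometric construction that makes the statement tractable: the surface isomorphism delivers the exact form of the map (triangular, with $f$ Möbius and $g$ affine in $\tilde y$) and the parameters, reducing the ODE-level computation to a consistency check rather than leaving it as the engine of the proof.
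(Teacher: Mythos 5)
Your proposal is correct and is essentially the paper's own argument: the paper likewise obtains the transformation by running the identification procedure of \cite{Stud}, matching the minimal $\D_4^{(1)}$ space of initial conditions from \Cref{subsec:spaceofinitialconditionssys3.2} with $X^{\operatorname{KNY}}$ of \Cref{appendix:standardP6}, and then confirming that the map (\ref{identsyst11}--\ref{identsyst12}), viewed as a birational map $\C^2_{\tilde{x},\tilde{y}}\times B \dashrightarrow \C^2_{f,g}\times B$, pulls back to an isomorphism $X\rightarrow X^{\operatorname{KNY}}$ under which $D$, $D^{\operatorname{KNY}}$ and the two rational vector fields coincide, verified by checking in charts. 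Your component and base-point matching, the parameter identification (with the constraint $a_0+a_1+2a_2+a_3+a_4=1$ as a check), and the final computer-algebra verification of the conjugated flow are exactly these steps, so there is no substantive difference.
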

This result comes from an isomorphism between $X$ constructed in subsection \ref{subsec:spaceofinitialconditionssys3.2} and the space of initial conditions $X^{\operatorname{KNY}}$ for the KNY form of $\pain{VI}$ in \Cref{appendix:standardP6}.
Interpreting the transformation (\ref{identsyst11}-\ref{identsyst12}) as a birational map $\C^2_{\tilde{x},\tilde{y}}\times B \dashrightarrow \C^2_{f,g}\times B$, this pulls back to a birational map $X\rightarrow X^{\operatorname{KNY}}$, which is verified to be an isomorphism by checking in charts.
Then $D$ and $D^{\operatorname{KNY}}$ coincide under the pullback, as do the rational vector fields defined by systems \eqref{syst1} and \eqref{systfg}.

\begin{remark}   \label{rem:inaccessible2}
    The identification in \Cref{Th1} also confirms that $D_t\subset X_t$ from the space of initial conditions constructed above for system \eqref{syst1} are inaccessible, since $D_t^{\operatorname{KNY}}$ are for system \eqref{systfg} on $X^{\operatorname{KNY}}$.
\end{remark}
In particular, the function $f$ defined in terms of the solution $(\tilde{x},\tilde{y})$ of the system \eqref{syst1} solves the sixth Painlev\'e equation \eqref{P6} with parameters 
\begin{equation}\label{par MinChen}
A = \frac{ \kappa^2}{2}, \;B =- \frac{ \alpha^2}{2},\; C = \frac{\beta^2}{2},\;  D= \frac{1}{2} (1 - \gamma^2).
\end{equation}
Thus, we recover \cite[Th. 3.5]{MinChen}. 

\begin{remark}
    We remark that the identification of the system \eqref{syst1} with the Hamiltonian form \eqref{systfg} of $\pain{VI}$ presented in \Cref{Th1} is not unique. Indeed, the 
    identification is up to the action of the extended affine Weyl group of B\"acklund 
    transformations leaving $\pain{VI}$ invariant. In \Cref{Th1} as a target representative
    we choose the same copy of $\pain{VI}$ as in \cite{MinChen}, i.e. with 
    parameters~\eqref{par MinChen}. We observe, that this is not the only possible choice.
    In particular, in the theory of orthogonal polynomials there might be additional considerations to choose a representative properly. For instance, 
    matching the evolution in $n$ of the recurrence coefficients with some standard example of a discrete Painlev\'e equation, or relating initial conditions for the recurrence coefficients with the solutions of the Riccati equation which solves the sixth Painlev\'e equation for a particular choice of parameters.
\end{remark}

\section{The system on the hypersurface in the second parametrisation} 
\label{sec:hypersurf}
In this section we analyse the system \eqref{syst3D} subject to the constraint \eqref{hypersurfaceconstraint} as its restriction to a Darboux surface. First, in \Cref{prop:darboux} we prove the existence of a Darboux surface and then, in \Cref{cor:secondsystemtosystfg}, we identify the restriction with $\pain{VI}$, so proving part of \Cref{thm:main}.

\subsection{The algebraic Darboux hypersurface \texorpdfstring{$S_t$}{}}
\label{ssec:darb}

In \Cref{sss:minchen} we recalled how the system \eqref{syst3D} and the invariant
algebraic surface $S_t$ \eqref{eq:St} arise from the theory of degenerate Jacobi unitary 
polynomials. In this subsection we show that the hypersurface $S_t$ \eqref{eq:St} is
defined by a \emph{Darboux polynomial} for the system \eqref{syst3D}. 

We recall that, given a system of (non-autonomous) first-order 
differential equations:
\begin{equation}
    \vec{x'} = \vec{F}_t(\vec{x}),
    \label{eq:firstord}
\end{equation}
a Darboux function $P_t=P_t(\vec{x})$ is a scalar function such that, on each solution  of \eqref{eq:firstord}, we have:
\begin{equation}
    P_t'(\vec{x}) = C_t(\vec{x})P_t(\vec{x}),
\end{equation}
for some function $C_t$,  
see~\cite{Darboux1878,PrelleSinger1983} and the book~\cite[Section 2.5]{Goriely2001} 
for a modern account of this theory.
The function $C_t$ is usually called the \emph{cofactor} of $P_t$. Moreover, when $P_t$ is a polynomial, 
it is usually called a \emph{Darboux polynomial}, and its zero locus
is said to be a \textit{Darboux hypersurface}.

\begin{remark}
    We observe that looking for Darboux polynomials is a non-trivial
    task. In the autonomous setting the problem of finding Darboux 
    polynomials of an assigned degree can be reduced to a problem of solving 
    some polynomial equations, which can be addressed 
    using tools from computational algebra, see e.g.~\cite{antonovetal2019}. 
    On the other hand, in the non-autonomous setting finding Darboux 
    polynomials of an assigned degree is a much harder problem involving 
    the solution of a system of algebraic differential equations.
\end{remark}

The following proposition guarantees the existence of a Darboux polynomial
for the system~\eqref{syst3D}, and it is part of \Cref{thm:main}.

\begin{proposition}
    \label{prop:darboux}
    The polynomial $h_t$ in \eqref{eq:h} is a Darboux polynomial for
    the system \eqref{syst3D}, i.e.:
    \begin{equation} \label{eq:diffeqforhyper}
        h_t' = C_t h_t, 
    \end{equation}
    where the cofactor $C_t$ is a rational function of $t$ and $\vec{x}$.
\end{proposition}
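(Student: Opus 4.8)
The plan is to verify the Darboux condition \eqref{eq:diffeqforhyper} by directly computing the total time derivative of $h_t$ along the flow of \eqref{syst3D} and then exhibiting the cofactor by a single polynomial division. Writing the rational vector field associated with \eqref{syst3D} as $x'=X$, $y'=Y$, $z'=Z$, the derivative along a solution is
\[
h_t' = \frac{\partial h}{\partial t} + \frac{\partial h}{\partial x}\,X + \frac{\partial h}{\partial y}\,Y + \frac{\partial h}{\partial z}\,Z,
\]
and the statement amounts to showing that this rational function equals $C_t\,h_t$ for some $C_t$ rational in $t$ and $\vec{x}$. Equivalently, after clearing denominators, one must show that $h_t$ divides the resulting numerator as a polynomial in $x,y,z$, and then read off $C_t$ as the quotient.

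First I would put the system in solved form. Dividing the three equations of \eqref{syst3D} by the factors $(t-1)t$, $(\kappa-1)t^2 x$ and $(t-1)$ respectively, and using the third equation $(t-1)z'=ty'$ to express $Z$ through the right-hand side of the second, yields
\[
X = \frac{P_1}{(t-1)t}, \qquad Y = \frac{P_2}{(\kappa-1)t^2 x}, \qquad Z = \frac{P_2}{(t-1)(\kappa-1)t\,x},
\]
where $P_1,P_2$ denote the polynomials on the right-hand sides of the first two equations of \eqref{syst3D}. Substituting these into the expression for $h_t'$ and putting everything over the common denominator $(t-1)(\kappa-1)t^2 x$ gives $h_t' = N/\big[(t-1)(\kappa-1)t^2 x\big]$ with numerator
\[
N = (t-1)(\kappa-1)t^2 x\,\frac{\partial h}{\partial t} + (\kappa-1)t x\,P_1\,\frac{\partial h}{\partial x} + (t-1)\,P_2\,\frac{\partial h}{\partial y} + t\,P_2\,\frac{\partial h}{\partial z},
\]
which is a genuine polynomial in $x,y,z$ with coefficients polynomial in $t$ and in the constants $\alpha,\beta,\gamma,n$.

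The crux is then the divisibility $h_t \mid N$. I would carry this out by treating both $N$ and $h_t$ as polynomials in the single variable $z$ over the field $\C(\alpha,\beta,\gamma,n,t,x,y)$: since $h_t$ is quadratic in $z$ with leading coefficient $(\kappa+(t-1)x)^2$ (nonzero generically), a Euclidean division in $z$ produces a remainder of degree at most one, and the claim reduces to verifying that this remainder vanishes identically. Once this is confirmed, the cofactor is
\[
C_t = \frac{N/h_t}{(t-1)(\kappa-1)t^2 x},
\]
manifestly rational in $t$ and $\vec{x}$, the factors $(\kappa+(t-1)x)$ and $x$ appearing in its denominator being permitted since $C_t$ need only be rational. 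I expect the divisibility check to be the main obstacle: although conceptually it is one polynomial division, the polynomials $h_t$, $P_1$ and $P_2$ are of moderate degree and carry four parameters, so the computation is large and is most safely performed with a computer algebra system. As a consistency check, the invariance of $S_t$ is already anticipated from \Cref{sss:minchen}, where the solutions constructed from the orthogonal-polynomial data are constrained to lie on $S_t$; this forces $\Set{h_t=0}$ to be swept out by a family of solution curves and hence flow-invariant, which is precisely the geometric content of $h_t$ being a Darboux polynomial.
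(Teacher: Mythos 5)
Your proposal is correct and follows essentially the same route as the paper, whose proof of \Cref{prop:darboux} is precisely a direct computation of $h_t'$ along the vector field of \eqref{syst3D} carried out with a computer algebra system, with the explicit cofactor omitted. The only difference is that you spell out the bookkeeping (clearing the denominator $(t-1)(\kappa-1)t^2x$ and checking divisibility by Euclidean division in $z$, using that $h_t$ is quadratic in $z$ with leading coefficient $(\kappa+(t-1)x)^2$), which the paper leaves implicit.
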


\begin{proof}
    The proof consists of a direct computation which can be 
    carried out with a computer algebra system, e.g.\ Maple~\cite{Maple} or 
    Mathematica~\cite{Mathematica}. For the sake of readability we omit the 
    explicit expression of $C_t$.
\end{proof}
 
We observe that the existence of a Darboux surface gives information about the solutions 
of the system \eqref{syst3D}.

First of all, let us notice that given  a solution $\vec{x}(t)=(x,y,z)(t)$ of the system \eqref{syst3D}, we can write the general solution of \eqref{eq:diffeqforhyper} as
\begin{equation}
    h_t = h_*  \eta_t, \quad \eta_t= \exp \left(\int^t C_\xi(\vec{x}(\xi)) d\xi\right), 
    \label{eq:htsol}
\end{equation}
where $h_* $ is a constant of integration. Clearly, if $h_* \equiv 0$ then the solution 
$\vec{x}(t)$ belongs to {the surface for all $t\in\C\setminus\Set{0,1}$}.

Second, the asymptotic behaviour near a movable singularity can be used to determine 
whether or not a solution stays in the surface. Indeed, let us assume that we have a 
solution $\vec{x}_*(t)$ of the system \eqref{syst3D}, admitting a pole-like 
movable singularity at $t=t_* $. Then, since $h_t$ is a polynomial we  have 
$h_t\sim (t-t_* )^\lambda$ for $\lambda\in\Z$ in a neighbourhood of $t_*$. So, if a 
solution of the system \eqref{syst3D} is devoid of essential singularities then 
the functions $h_t$ and $\eta_t$ are the same up to a potentially vanishing constant 
factor, i.e. $h_t\sim \eta_t$ or $h_t=0$. On the other hand, this forces 
$C_t = \mu (t-t_*)^{-1}+\mathrm{O}(1)$ with $\mu\in\Z$. 
If, near a movable singularity $t=t_*$, the cofactor $C_t$ has a different asymptotic behaviour, then $h_t\equiv 0$ along $\vec{x}_*(t)$.
 
For instance, the system \eqref{syst3D} admits a class of solutions $\vec{x}_*(t)$ with the following behaviour in a neighbourhood of a movable singularity at $t=t_* $:
\begin{equation} \label{eq:galinaseries}
    \begin{aligned}
        x_* (t) &= t_* (t-t_* )^{-1} + a_0 + A(a_0) (t-t_* ) + \mathrm{O}\left((t-t_* )^2\right),  \\
        y_* (t) &= \frac{t_* (t_* -1)^2}{\kappa-1} (t-t_* )^{-2} + \frac{2t_* (t_* -1)}{\kappa-1} (t-t_* )^{-1} + \mathrm{O}\left(1\right), \\
        z_* (t) &= \frac{t_* ^2(t_* -1)}{\kappa-1} (t-t_* )^{-2} + \frac{2t_* (t_* -1)}{\kappa-1} (t-t_* )^{-1} + \mathrm{O}\left(1\right).
    \end{aligned} 
\end{equation}
Here $a_0$ is an arbitrary constant and $A(a_0)$ is a known rational function of $a_0$ that can 
be computed with a computer algebra system, e.g. Maple~\cite{Maple} or 
Mathematica~\cite{Mathematica}, but whose 
expression we omit for the sake of readability. Then, in a neighbourhood of $t=t_*$ 
the cofactor $C_t$ with respect to the solution  \eqref{eq:galinaseries} behaves as:
\begin{equation}
    C_t = - \frac{\kappa-3}{\kappa-1} \frac{1}{t-t_* } + \mathrm{O}(1),
\end{equation}
implying $\eta_t\sim (t-t_* )^{-\frac{\kappa-3}{\kappa-1}}$. Since for generic values of $\kappa$ the behaviour of $\eta_t$  gives raise to a branch point, we conclude that, for the class of solutions in \eqref{eq:galinaseries} we have $h_* \equiv 0$. Hence, the family is entirely contained in the surface $S_t$.

As stated in the Introduction, in what follows we will prove that the
system~\eqref{syst3D} restricts to $S_t$ in such a way that the resulting
two-dimensional system possesses the Painlev\'e property.  Whether or not
the ``full'' system \eqref{syst3D} might possess the Painlev\'e property
remains an open problem, see \Cref{conj:3d}.  We postpone its study to
subsequent works. However, in \Cref{sec:autlim} we will present some strong
evidence for it. That is, we will prove that there exists an algebraically
integrable autonomous limit of the system \eqref{syst3D} whose dynamics is
associated to an elliptic fibration. This is a strong indication that the
system \eqref{syst3D} possesses the Painlev\'e property since the
two-dimensional Painlev\'e equations can be constructed as
de-autonomisations of two-dimensional systems admitting an invariant
elliptic fibration~\cite{sakaiODEsonrationalellipticsurfaces}.

\subsection{Parametrisation of the hypersurface }\label{subsec:parametrisation2} 
In this section, we produce a parametrisation for the surface $S_t$. The study of the singularities of $S_t$ is crucial in our construction. The issue in finding the parametrisation is that this surface is  singular making it hard to find a rational parametrisation. Our strategy to overcome this problem is to blow up the surface $S_t$ along a non-Cartier divisor in order to make it less singular and then to look for a parametrisation, see~\cite[Example IV-27]{EisenbudHarrisBook} 
and~\cite[App. A]{DeMarcoetal2024}. 

\begin{definition}
    Let $X$ be a quasi-projective variety. Recall that a Weil divisor is a formal $\mathbb{Z}$-linear combination of irreducible codimension-one subvarieties of $X$. 
\end{definition}

\begin{remark}
    Recall that, when a codimension-one subvariety $Y\subset X$ is Zariski locally defined by one equation it is
     a Cartier divisor, see \cite[Section II.6]{Hartshorne2013}. Similarly given a Cartier divisor $G=\Set{f_1^{m_1}\cdots f_s^{m_s}=0}$, with $f_i$ prime for all $i=1,\ldots,s$, we can cook up the Weil divisor
     \[
     D_G=\sum_{i=1}^sm_i\cdot \Set{f_i=0}.
     \]
     We will often implicitly make use of this identification. 
\end{remark}

\begin{remark}
    Whenever $X$ is smooth, the notions of Weil and Cartier divisors agree. For an example of a Weil divisor which is non-Cartier, consider the line $\Set{x=z=0}$ on the quadric cone $\Set{xy-z^2=0}\subset \mathbb{C}^3$.
\end{remark}

\begin{definition}\label{def:blowup weil}
    Let $Y\subset X$ be a variety and a Weil divisor. The blow up of $X $ centered at $Y$ is a birational projective morphism 
    of varieties
    \begin{equation}
    \begin{tikzcd} \Bl_YX\arrow[r,"\varepsilon" ]& X
    \end{tikzcd}    
    \end{equation}
    such that the preimage\footnote{Here the notion of preimage is purely algebraic and the precise definition can be found in \cite[Caution II-7.12.2]{Hartshorne2013}. However, we omit it as it is  unnecessary for our purpose.} $\varepsilon^{-1} Y $ is a Cartier divisor, called the \textit{exceptional divisor}, and such that any other morphism having the same property factors through $\varepsilon$.
\end{definition}  

\begin{remark}
    The blow up of a Weil divisor always exists and it is unique up to unique isomorphism, see \cite{EisenbudHarrisBook}. As for the basic case of the blow up of a singular point on a surface, the blow up of a non-Cartier divisor is a powerful tool in the theory of resolution of singularities.

    All the instances we consider will be blow ups of $X$ with center $Y$, where $Y$ is a  complete intersection in the ambient space where $X$ is located. In this case the blow up can be computed as explained in \cite[Propositions IV-21 \& IV-25]{EisenbudHarrisBook}.
\end{remark}

We start by considering a suitable compactification $\overline{S}_t$ of $S_t$. Namely, first we consider the inclusion
\begin{equation}
\begin{tikzcd}[row sep = tiny]
    S_t\arrow[r,hook]& \Pj^2_{[y_0:y_1:y_2]}\times \Pj^1_{[x_0:x_1]}\\
    (x,y,z)\arrow[r,mapsto] & ({[y:z:1]},{[x:1]})
\end{tikzcd}    
\end{equation}
and then we consider the closure $\overline{S}_t\subset \Pj^2_{[y_0:y_1:y_2]}\times \Pj^1_{[x_0:x_1]}$. 

\begin{remark}
    We stress that, as in previous sections, we omit the $t$-dependence implicitly assuming we are working with the fibres of a trivial $B$-bundle, e.g. in the case $(\mathbb C^2\times\Pj^1)\times B$, we write $\mathbb C^2\times\Pj^1 $ in place of $(\mathbb C^2\times\Pj^1)_t$ to denote the fibre over $t\in B$.
\end{remark}

In technical terms,  the equation of the surface $\overline{S}_t$ can be obtained by replacing 
\begin{equation}
    y=\frac{y_0}{y_2},\;z=\frac{y_1}{y_2},\;x=\frac{x_0}{x_1} 
\end{equation}
in $h_t$, see \eqref{eq:h}.  A direct check shows that the  hypersurface $\overline{S}_t$ has only isolated singularities.

We focus now on the affine chart
\begin{equation}
S_{1,1,t}=\Set{([y_0:y_1:y_2], [x_0:x_1])\in \overline{S}_t| x_1,y_1\not=0 }\subset \overline{S}_t.    
\end{equation}
The surface $S_{1,1,t}$  is singular at
\begin{equation}
    P_1=\Set{x_0+\frac{\kappa}{t-1}=\,y_0=y_2=0},\;P_2=\Set{x_0-\kappa=y_0-1=y_2=0}.
\end{equation}

We blow up now the non-Cartier divisor 
\begin{equation}
    Y=\Set{ \;x_0(1-t+t y_0)-\kappa=y_2=0}\subset S_{1,1,t}    
\end{equation}
passing through both $P_1$ and $P_2$. A direct check shows that this blow up resolves the 
two singularities $P_1,P_2\in S_{1,1,t}$. We also remark that the curve $Y$ is an 
indeterminacy of system  \eqref{syst3D}. The surface $S_{1,1,t}$ lives in the affine space 
$\mathbb C^3$ with coordinates $y_0,y_2,x_0$. Hence, the blow up $\Bl_Y S_{1,1,t}$ lives in $
\Bl_Y \mathbb C^3$ \cite[Propositions IV-21]{EisenbudHarrisBook}. We have 
\begin{equation}
 \Bl_Y \mathbb C^3 =\Set{ ((y_0,y_2,x_0)(t),[u :v](t))\in\mathbb \C^3 \times \Pj^1 | \rank\begin{pmatrix}
     u & v\\
     y_2 & x_0(1-t+t y_0)-\kappa
 \end{pmatrix}\le 1 }.    
\end{equation}

Since we look for a rational parametrisation, it is worth restricting to the chart $\tilde S _t\subset \Bl_Y S_{1,1,t}$ given by
\begin{equation}
\tilde S_t =\Set{(y_0,y_2,x_0,[u:v])\in \Bl_Y S_{1,1,t}| u\not=0}.    
\end{equation}
The equation of $\tilde S_t$ is derived  by replacing 
\begin{equation}
    y_2=v(x_0(1-t+ty_0)-\kappa)
\end{equation}
in the equation of $S_{1,1,t}$ and dividing by the equation of the exceptional divisor in $ \Bl_Y\mathbb C^3$.

Finally, we notice that  $y_0$ is a rational function of $x_0$ and $v$.
This gives us the surface parametrisation. Explicitly:
\begin{equation}
    y_0=\frac{(\kappa +(t-1) x_0 ) \left(1+n t v^2 x_0 (\gamma +n) \left(\kappa +(t-1) x_0\right){} -v   \left(\beta
    \kappa -x_0 (\beta +2 n t+\gamma  t)\right)\right)}{t x_0 \left(n t v^2 x_0 (\gamma +n) \left(\kappa +(t-1)
   x_0\right)+v \left(\kappa  (\kappa-\beta-1)+x_0 (-\alpha +(2 n +\gamma)(t-1))\right)+1\right)}.
    \label{eq:y0par}
\end{equation}

\subsection{The 2D system in the second parametrisation}

Now, we study the restriction, to the hypersurface $\tilde S_t$, of the lift to the blow up $\Bl_Y\mathbb
C^3 $ of the  system  \eqref{syst3D}.
Via the parametrisation obtained above, it consists of two equations in the variables $x_0$ and $v$. Namely, by
direct computation from equation \eqref{eq:y0par} the system reads:
\begin{equation} 
    \begin{aligned}
        x_0' &= \frac{v \left(  ((t-1) x_0 +\kappa)(x_0-\beta) -tx_0 (\alpha +1)
   \right)+2}{(t-1) t v},
       \\
        v' &=-\frac{\eta(x_0,v,t)}{(t-1) t x_0 \left(\kappa -x_0\right) \left(\kappa +(t-1)
   x_0\right)},
    \end{aligned}
    \label{second_system}%
\end{equation}
where
\begin{equation}
    \begin{aligned}
        \eta(x_0,v;t) &=\kappa ^2-n t v^2 x_0^2 (\gamma +n) \left(\kappa +(t-1) x_0\right){}^2
        \\
        &+v \left(x_0 \left(x_0 \left((t-1) x_0
   \left(-2 (\kappa +\beta)+t (2 \alpha +2 \beta +1)-(t-1) x_0\right) \right.\right.\right. 
        \\ 
        &\left. \left. \left.-\kappa
       (5 \beta +\kappa +t (\alpha  (t-2)+\beta  (t-6)-1) )\right) \right.\right. 
       \\
       &\qquad   \left. \left. -2 \beta  \kappa ^2 (t-2)\right)-\beta\kappa ^3\right)+2 \kappa  (t-2) x_0-3 (t-1) x_0^2.
    \end{aligned}
    \label{second_system_eta} 
\end{equation} 
\begin{remark}
    The birational transformation between $\tilde  S_t$ and $S_t$ is
    \begin{equation}
    \begin{tikzcd}[row sep=tiny]
        \tilde S_t \arrow[r,dashed] & S_t\\
        (x_0,v)\arrow[r,mapsto]& \left(x_0,\frac{-y_0}{v \left(\kappa +x_0 \left(t-1-t y_0\right)\right)},\frac{-1}{v
   \left(\kappa +x_0 \left(t-1-t y_0\right)\right)}\right)
    \end{tikzcd}
    \end{equation}
    and it has inverse
    \begin{equation}
    \begin{tikzcd}[row sep=tiny]
         S_t \arrow[r,dashed] &\tilde S_t\\
        (x,y,z)\arrow[r,mapsto]& \left(x_0,\frac{1}{x  (t y -t z +z )-\kappa    z }\right).
    \end{tikzcd}
    \end{equation}
\end{remark}
 
Now,   \Cref{Th1} implies that given a solution $(f,\,g)$ of system \eqref{systfg} with parameters chosen according to \eqref{rootvars}, the triple $x,y,z$ defined by 
    \begin{equation} \label{injectionfg}
        \begin{aligned}
            x &= \frac{\kappa(f-1)}{(t-1)}, \\
            y &= \frac{(f^2-g(1+t))\left(\alpha + n - g\right)^2 + t\left(\alpha - g \right)\left(\kappa - 1 - g\right) - g (\alpha +n-g)\left(  \gamma+\beta    \right) }{(\kappa-1)t}, \\
            z &= \frac{(f-1)\left(f^2\left(\alpha+n-g\right)^2 + t g \left(g-\alpha \right) - f\left(g-\alpha-n\right)\left( (n+\gamma )-(1-t)-t\alpha + (t+1) g\right)\right) }{(t-1)(\kappa-1)f},
        \end{aligned}
    \end{equation} 
    is a solution of the system  \eqref{syst3D} lying on $S_t$, and we have 
\begin{equation}\label{invinj}
f=\frac{\kappa+(t-1)x}{\kappa},\;\;g=n+\alpha+\frac{\kappa((\kappa+(t-1)x)z-t x (n+y))}{(t-1)(\kappa-x)x}.
\end{equation}

Note that the birational transformations 
\begin{equation}
    \begin{tikzcd}[row sep = tiny]
        \mathbb C^2_t\arrow[r,dashed,"\Psi"] & S_{t}\\
        (f,g)\arrow[r,mapsto] & (x,y,z)
    \end{tikzcd} \quad 
    \begin{tikzcd}[row sep = tiny]
       S_t \arrow[r,dashed,"\Phi"] &  \mathbb C^2_t\\
        (x,y,z)\arrow[r,mapsto] & (f,g)
    \end{tikzcd}  
\end{equation}
defined in \eqref{injectionfg} and \eqref{invinj}  are inverse to each
other, i.e. they satisfy
\begin{equation}
    \Phi\circ\Psi=\Id_{\C_t^2}\mbox{ and }
     \Psi\circ\Phi=\Id_{ S_t}.
\end{equation}
Summing up, we obtain the following equivalence between system
\eqref{second_system} and the Hamiltonian form of $\pain{VI}$,
which is part of \Cref{thm:main}.

\begin{theorem} 
    \label{cor:secondsystemtosystfg} 
    The systems \eqref{second_system} and \eqref{systfg} are related by the change of variables
    \begin{equation}\label{eq:from3.8to2.2}
        f=\frac{(t-1)x_0 + \kappa}{\kappa},\quad g= \frac{\kappa + x_0  v \left( (\alpha+n)(x_0(t-1) +\kappa)-\alpha t\kappa \right) }{(t-1) x_0 (x_0 - \kappa)v },
    \end{equation}
    with parameters identified in the same way as in   \Cref{Th1}, i.e. according to 
\begin{equation}
a_0= -\gamma, \quad a_1 =\kappa, \quad  
a_2= -n - \alpha,\quad a_3= -\beta, \quad 
a_4=  
  \alpha.
  \end{equation}
\end{theorem}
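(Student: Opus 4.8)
The plan is to deduce the result from \Cref{Th1} and the birational correspondences already in place, rather than by brute-force substitution of \eqref{eq:from3.8to2.2} into \eqref{second_system}. The key observation is that both system \eqref{second_system} and the flow on $S_t$ (which, under the map $\Phi$ of \eqref{invinj}, pushes forward to \eqref{systfg}) are two coordinate presentations of one and the same object: the restriction of the 3D system \eqref{syst3D} to the Darboux surface. Indeed, \eqref{second_system} is the restriction to $\tilde S_t$ of the lift of \eqref{syst3D} to $\Bl_Y\mathbb{C}^3$, while the $(x,y,z)$-flow on $S_t$ is its image under the blow-down morphism $\varepsilon\colon\Bl_Y\mathbb{C}^3\to\mathbb{C}^3$. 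Since $\varepsilon$ is a birational morphism through which the 3D vector field descends, the birational map $\tilde S_t\dashrightarrow S_t$ of the remark preceding the theorem intertwines \eqref{second_system} with the $S_t$-flow. Consequently it suffices to check that the composition $\tilde S_t\dashrightarrow S_t\xrightarrow{\Phi}\mathbb{C}^2_{f,g}$ coincides with \eqref{eq:from3.8to2.2}; the parameter identification \eqref{rootvars} is then inherited verbatim from \Cref{Th1}, since the very same map $\Phi$ is used.

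First I would treat the $f$-component, which is immediate: in the second parametrisation one has $x=x_0$, so the first line of \eqref{invinj} gives $f=(\kappa+(t-1)x_0)/\kappa$, matching \eqref{eq:from3.8to2.2}. The $g$-component is where the only genuine computation lies. Rather than inserting the cumbersome parametrisation \eqref{eq:y0par} into the second line of \eqref{invinj}, I would exploit the explicit inverse map $S_t\dashrightarrow\tilde S_t$, whose $v$-component is $v=\bigl(x(ty-tz+z)-\kappa z\bigr)^{-1}$. Taking reciprocals yields the single relation
\begin{equation}
\frac{1}{v}=t x y - z\bigl(\kappa+(t-1)x\bigr),
\end{equation}
so that the combination $(\kappa+(t-1)x)z-tx(n+y)$ appearing in \eqref{invinj} collapses to $-1/v-txn$, sidestepping \eqref{eq:y0par} entirely.

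Substituting $x=x_0$ into the resulting expression for $g$ then gives
\begin{equation}
g = n+\alpha - \frac{\kappa\,(1+t n v x_0)}{(t-1)(\kappa-x_0)x_0\,v}.
\end{equation}
A short rearrangement finishes the identification: clearing the denominator (using $(t-1)(\kappa-x_0)x_0v=-(t-1)x_0(x_0-\kappa)v$) and collecting the coefficients of $vx_0^2$, of $vx_0$, and the constant term, one checks that the coefficient of $vx_0^2$ is $(\alpha+n)(t-1)$, the coefficient of $vx_0$ reduces to $\kappa(\alpha+n-\alpha t)$, and the constant is $\kappa$, which is exactly the numerator of the $g$ in \eqref{eq:from3.8to2.2}. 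This confirms that \eqref{eq:from3.8to2.2} is precisely the composed change of variables.

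The conclusion then follows formally: since $\Phi$ carries the $S_t$-flow to \eqref{systfg} with parameters \eqref{rootvars}, and \eqref{eq:from3.8to2.2} is the composition of the flow-intertwining map $\tilde S_t\dashrightarrow S_t$ with $\Phi$, it must carry \eqref{second_system} to \eqref{systfg} with the same parameters. The main obstacle—had one instead attempted the direct route of substituting \eqref{eq:from3.8to2.2} into \eqref{second_system}—would be the simplification of the right-hand sides after substitution, governed by the unwieldy function $\eta$ in \eqref{second_system_eta} and the parametrisation \eqref{eq:y0par}. The observation that $1/v$ equals $txy-z(\kappa+(t-1)x)$ is exactly what dissolves this difficulty, reducing the whole verification to the elementary algebra indicated above.
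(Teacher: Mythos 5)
Your proposal is correct and follows essentially the same route as the paper: the theorem there is obtained precisely by composing the birational equivalence $\tilde S_t\dashrightarrow S_t$ (given in the remark preceding the statement) with the map $\Phi$ of \eqref{invinj} coming from \Cref{Th1}, so that the parameter identification is inherited verbatim. Your use of the relation $1/v = txy - z(\kappa+(t-1)x)$ to bypass the parametrisation \eqref{eq:y0par} is just an explicit (and correct) way of carrying out that composition, and the resulting algebra matches \eqref{eq:from3.8to2.2}.
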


\begin{remark}
    The change of variables inverse to \eqref{eq:from3.8to2.2} is 
\begin{equation}
    x_0=\frac{\kappa (f-1)}{(t-1)},\quad v=-\frac{1-t}{\kappa(f-1)\left( f(\alpha+n-g) + t(g-\alpha) \right)}.
\end{equation}

\end{remark}

\begin{remark}
    The transformations to the KNY Hamiltonian form \eqref{systfg} from system \eqref{syst1} in \Cref{Th1} and from system \eqref{second_system} in \Cref{cor:secondsystemtosystfg} lead to the following relation between the two parametrisations:
    \begin{equation}
        \tilde{x} = x_0, \quad \tilde{y} = \frac{2 + v\left( (t-1) x_0^2 + \left( \kappa +\beta - t(\alpha+\beta+1) \right)x_0 - \beta \kappa \right)}{t(t-1)v}.
    \end{equation}
\end{remark}

\subsection{Space of initial conditions}\label{subsec:soichyper}

We now turn to the problem of constructing the space of initial conditions
for the system \eqref{second_system}. For the sake of readability, first we
rescale the variable $x_0$ as $x_0\to \kappa x_0$ in system
\eqref{second_system}, and abuse notation  rather than introduce a new
symbol.  Then, for convenience we compactify the parametrisation to the
first \emph{Hirzebruch surface} $\BF_1$ \cite{Hirzebruch1951} instead of
the more common compactification to $\Pj^{1}\cross\Pj^{1}=\BF_{0}$, see
\Cref{appendix:eltransf}.

Following, for instance, \cite{Takenawa2001CommMathPhys} the surface $\BF_{1}$ can be
defined as the gluing of four affine charts with coordinates:
\begin{equation}
    \left\{
    \begin{aligned}
        (\xi_0,\upsilon_0) &=\left( \frac{1}{\xi_{1}}, \xi_{1} \upsilon_{1} \right)  = 
    \left( \frac{1}{\xi_{2}}, \frac{\xi_{2}}{\upsilon_{2}} \right)  = 
    \left(  \xi_3, \frac{1}{\upsilon_3} \right)
    \\
        (\xi_1,\upsilon_1)&= \left( \frac{1}{\xi_0}, \xi _0\upsilon _0\right)\\
        (\xi_2,\upsilon_2)&=\left( \frac{1}{\xi_0}, \frac{1}{\xi_0\upsilon_0} \right)\\
        (\xi_3,\upsilon_3)&=\left( \xi_0,\frac{1}{\upsilon_0} \right) .
    \end{aligned}
    \right.
    \label{eq:F1charts}
\end{equation} 
Notice that, in \eqref{eq:F1charts}, we have  given some   transition functions of the given atlas.  Note also that the lines $C_+=\Set{\upsilon_2=0}\cup\Set{\upsilon_3=0}$ and $C_-=\Set{\upsilon_0=0}\cup\Set{\upsilon_1=0}$  have self intersection 1 and -1 respectively, see \Cref{appendix:eltransf}. This can be seen by noting that the union of the first and the
second chart (resp. the third and the fourth chart) is isomorphic to the total space
of the line bundle $\mathcal{O}_{C_+}(1)$ (resp. $\mathcal{O}_{C_-}(-1)$), see \Cref{appendix:eltransf}.

Let us rewrite the system \eqref{second_system} in terms of the coordinates of the zeroth chart
of $\BF_{1}$ via $(x_0,v)=(\xi_0,\upsilon_0)$:
\begin{equation}
    \begin{aligned}
        \xi_0' &= \frac{\upsilon_{0} \left(\xi_0 \left(\beta+\kappa-t (\alpha +\beta +1)+(t-1) \xi_0\right)-\beta  \kappa
        \right)+2}{(t-1) t \upsilon_{0}},
       \\
       \upsilon_{0}' &=-\frac{\eta(\xi_0,\upsilon_{0},t)}{(t-1) t \xi_0
           \left(\kappa -\xi_0\right) \left(\kappa +(t-1)
   \xi_0\right)},
    \end{aligned}
    \label{second_systemF1}%
\end{equation}
where $\eta(\xi_0,\upsilon_{0},t)$ is the function in
\eqref{second_system_eta} evaluated at
$(x_{0},v,t)=(\xi_{0},\upsilon_{0},t)$.

Then, the system \eqref{second_systemF1} has the following indeterminacy points:
\begin{equation}
    \begin{gathered}
        \pi_{1}:\ (\xi_3,\upsilon_3)=(0,0), 
        \quad
        \pi_{2} :\ (\xi_3,\upsilon_3)= (0,\kappa \beta), 
        \\[4pt]
        \pi_{3} :\ (\xi_3,\upsilon_3)= (1,-tn),
        \quad
        \pi_{4} :\ (\xi_3,\upsilon_3)=
        (1, -(\gamma+n)t), 
        \\[4pt]
        \displaystyle
        \pi_{5}:\ (\xi_3,\upsilon_3)= \left(  -\frac{1}{t-1},0\right),
        \quad
        \displaystyle
        \pi_{6} :\ (\xi_3,\upsilon_3)=
        \left(  -\frac{1}{t-1},   -\frac{t\alpha}{t-1}\right)\\
    \pi_{7}:\ (\xi_1,\upsilon_1)=( 0, 0).
    \end{gathered}
    \label{eq:indsys2}
\end{equation}
The first six are contained in the third chart and resolved after a single blow up, while
indeterminacy point $\pi_7$ is located at the origin of the first chart and it requires one further blow up to be resolved.

Adopting the same notation as in \Cref{sec:firstpar}, we denote by  $(u_i,v_i)$ and 
$(U_i,V_i)$ the coordinate charts covering the exceptional divisor over the point $\pi_i$, for $i=1,\ldots,7$, see \eqref{eq:coordblowp}.
In these coordinates, the eighth indeterminacy point is given by:
\begin{equation}
    \pi_8:\  (u_7,v_7)=\left(\kappa(1-t), 0\right).
\end{equation}
After blowing up the point $\pi_8$ we get a surface $X _{t }$ 
whose configuration of $(-2)$-curves  $D_{t,i}$, for $i=0,\ldots,4$ is depicted in the bottom picture in  \Cref{fig:surface:syst2}. Precisely, they intersect according
to the $\D_4^{(1)}$ configuration associated with $\pain{VI}$ and their union $D_t=\bigcup_{i=0}^4D_{t,i}$ is the inaccessible divisor. 
After removing  $D_t $ from $X_t$, we get the space $E$ over $B$, with fibre $E_t = X_t \setminus D_t$, of which the system defines a uniform foliation.

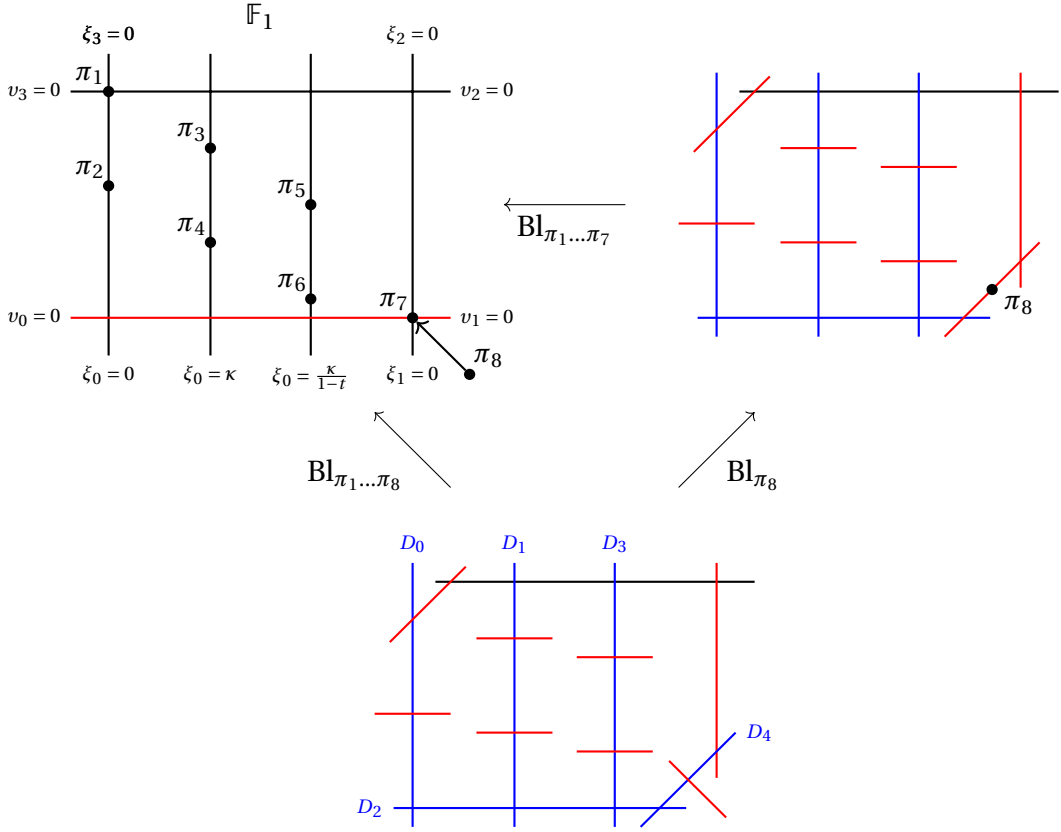
\begin{figure}[htb]
\centering
    \begin{tikzpicture}[basept/.style={circle, draw=black!100, fill=black!100, thick, inner sep=0pt,minimum size=1.2mm}]
    
     \begin{scope}[xshift = -4cm, yshift=+1cm]
    \node at (0,2.5) {$\mathbb{F}_1$};
    \draw[thick] 
    (-2,-2)--(-2,2)
    (2,-2)--(2,2) 
    (-2.5,1.5)--(2.5,1.5)
      (-.66,-2)--(-.66,2)  
      (.66,-2)--(.66,2)  
      ;
      \draw[thick, red]
          (-2.5,-1.5)--(2.5,-1.5) ;
      \node[left] at (-2.5,-1.5) {\tiny $\upsilon_0=0$};
      \node[right] at (2.5,-1.5) {\tiny $\upsilon_1=0$};
      \node[left] at (-2.5,1.5) {\tiny $\upsilon_3=0$};     
      \node[right] at (2.5,1.5) {\tiny $\upsilon_2=0$};     
      \node[below] at (-2,-2) {\tiny $\xi_0=0$};
      \node[above] at (-2,2) {\tiny $\xi_3=0$};
      \node[above] at (2,2) {\tiny $\xi_2=0$};
      \node[above] at (-2,2) {\tiny $\xi_3=0$};
      \node[below] at (2,-2) {\tiny $\xi_1=0$};
      \node[below] at (-.66,-2) {\tiny $\xi_0=\kappa$};
      \node[below] at (.66,-2) {\tiny $\xi_0=\frac{\kappa}{1-t}$};
      
	 \node (p1) at (-2,+1.5) [basept,label={[xshift=-7pt, yshift = -3 pt] \small $\pi_{1}$}] {};
	 \node (p2) at (-2,+.25) [basept,label={[xshift=-7pt, yshift = -3 pt] \small $\pi_{2}$}] {};
	 \node (p3) at (-.66,+.75) [basept,label={[xshift=-7pt, yshift = -3 pt] \small $\pi_{3}$}] {};	 
	 \node (p4) at (-.66,-.5) [basept,label={[xshift=-7pt, yshift = -3 pt] \small $\pi_{4}$}] {};
	 \node (p5) at (+.66,+0) [basept,label={[xshift=-7pt, yshift = -3 pt] \small $\pi_{5}$}] {};
      \node (p6) at (+.66,-1.25) [basept,label={[xshift=-7pt, yshift = -3 pt] \small $\pi_{6}$}] {};
    \node (p7) at (2,-1.5) [basept,label={[xshift=-7pt, yshift = -3 pt] \small $\pi_{7}$}] {};
      
	 \node (p8) at (2.75,-2.25) [basept,label={[xshift=7pt, yshift = -3 pt] \small $\pi_{8}$}] {};

    \draw[thick, ->] (p8) -- (p7);

    \end{scope}

 \draw [->] (.8,1)--(-.8,1) node[pos=0.5, below] {$\text{Bl}_{\pi_1\dots \pi_{7}}$};
 
    \begin{scope}[xshift = +4cm, yshift=+1cm]
   \draw[thick,blue] 
    (-2,-1.75)--(-2,1.75);
        \draw[thick] 
    (-1.7,1.5)--(2.5,1.5);
        \draw[thick,red] 
    (2,-1.1)--(2,1.75) ;
    \draw[thick,blue] 
    (-2.25,-1.5)--(1.6,-1.5);
    \draw[thick,blue] 
      (-.66,-1.75)--(-.66,1.75);  
    \draw[thick,blue] 
      (.66,-1.75)--(.66,1.75)  ;
    \draw[red, thick] (1.0,-1.75) --(2.25,-0.5);      

    \draw[red, thick] (-2.3,0.7) -- (-1.3,1.7) 
    ;
    \draw[red, thick] (-2.5,-.25) -- (-1.5,-.25) 
    ;
    \draw[red, thick] (-1.16,.75) -- (-.16,.75) 
    ;
    \draw[red, thick] (-1.16,-.5) -- (-.16,-.5) 
    ;
    \draw[red, thick] (+.16,.5) -- (+1.16,.5) 
    ;
    \draw[red, thick] (+.16,-.75) -- (+1.16,-.75) 
    ;
    \node (p8) at (1.625,-1.125) [basept,label={[xshift=10pt, yshift = -15 pt] \small $\pi_{8}$}] {};

    \end{scope}
    
 \draw [->] (1.5,-2.75)--(2.5,-1.75) node[pos=0.5, below right] {$\text{Bl}_{\pi_{8}}$};

    \begin{scope}[xshift = 0cm, yshift=-5.5cm]
   \draw[thick,blue] 
    (-2,-1.75)--(-2,1.75)node[pos=1,above] {\tiny $D_0$} ;
        \draw[thick] 
    (-1.7,1.5)--(2.5,1.5);
        \draw[thick,red] 
    (2,-1.1)--(2,1.75) ;
    \draw[thick,blue] 
    (-2.25,-1.5)--(1.6,-1.5)node[pos=0,left] {\tiny $D_2$} ;
    \draw[thick,blue] 
      (-.66,-1.75)--(-.66,1.75)node[pos=1,above] {\tiny $D_1$} ;  
    \draw[thick,blue] 
      (.66,-1.75)--(.66,1.75) node[pos=1,above] {\tiny $D_3$} ;
    \draw[blue, thick] (1.0,-1.75) --(2.25,-0.5) node[pos=1,right] {\tiny $D_4$} ;      

    \draw[red, thick] (-2.3,0.7) -- (-1.3,1.7) 
    ;
    \draw[red, thick] (-2.5,-.25) -- (-1.5,-.25) 
    ;
    \draw[red, thick] (-1.16,.75) -- (-.16,.75) 
    ;
    \draw[red, thick] (-1.16,-.5) -- (-.16,-.5) 
    ;
    \draw[red, thick] (+.16,.5) -- (+1.16,.5) 
    ;
    \draw[red, thick] (+.16,-.75) -- (+1.16,-.75) 
    ;
    \draw[red, thick]
    (1.375,-.875) -- (2.125,-1.625);

    \end{scope}
 \draw [->] (-1.5,-2.75)--(-2.5,-1.75) node[pos=0.5, below left] {$\text{Bl}_{\pi_1\dots\pi_{8}}$};
\end{tikzpicture}
\caption{Sequence of blow ups and surface for system \eqref{second_system} with $\mathbb{F}_1$ compactification, with $(-1)$-curves in red and $(-2)$-curves in blue.  
}
\label{fig:surface:syst2}
\end{figure} 
Note that the surface obtained after blowing up the eight points
    $\pi_i$, $i=1,\ldots,8$ is minimal, in the sense of \Cref{def:minimalSOIC}.
\Cref{cor:secondsystemtosystfg} provides an isomorphism between $X$ constructed from system \eqref{second_system} and $X^{\operatorname{KNY}}$ in \Cref{appendix:standardP6}.
\begin{remark}
     We observe that we could resolve the indeterminacy points of the system
    \eqref{second_system} considering the more common compactification $\BF_0 = \Pj^1\times\Pj^1$. However, in this particular case
    we have to underline that compactifying the system \eqref{second_system} inside $\BF_0$
    present an additional difficulty: \emph{a seemingly infinite cascade} of indeterminacy points.
    This fact underlines again the importance
    of the choice of compactification: albeit the final results will
    be equivalent upon minimisation, the number of steps to reach it can be
    different.
    See   \Cref{app:apparentsingularity} for an extended discussion of this phenomenon 
    and its solution. 
    \label{rem:whyf1}
\end{remark} 
 
\section{Hamiltonian structure of the 3D system on the hypersurface}
\label{sec:hamiltonianhyper}

In this section we construct the Hamiltonian structures of the system \eqref{syst3D} of differential equations restricted to the hypersurface \eqref{eq:St}, as considered in \Cref{sec:firstpar,sec:hypersurf}.  
In   \Cref{Th1} and   \Cref{cor:secondsystemtosystfg} we identified the systems \eqref{syst1} and \eqref{second_system} with a Hamiltonian form of the sixth Painlev\'e equation.
The systems having as spaces of initial conditions the ones constructed in \Cref{subsec:spaceofinitialconditionssys3.2,subsec:soichyper} and system \eqref{systfg} from \cite{KNY}  are related by a birational equivalences of the associated surfaces with $X^{\operatorname{KNY}}_{t}$ after  appropriate matching of parameters, see \Cref{appendix:standardP6}. 
Moreover, the birational transformations relating them allow us to pull back the Hamiltonian atlas in   \Cref{appendix:standardP6} from $E^{\operatorname{KNY}}$ to the respective manifolds.
Then, both systems have a global Hamiltonian structure, which is exactly that of the sixth Painlev\'e equation. We will present this structure in   \Cref{prop:hamsys} which is part of \Cref{thm:main} of the Introduction.

\begin{remark}
    The systems in the coordinates $(\tilde{x},\tilde{y})$ and $(x_0,v)$, in \eqref{syst1} and \eqref{second_system} respectively, do not have a local Hamiltonian structure, because the symplectic form associated with the anti-canonical divisor is $t$-dependent when written in these charts. We expand on this in the next section.
\end{remark}

\subsection{Local Hamiltonian structures}

In this subsection we provide a local Hamiltonian structure for the second system \eqref{second_system} by considering the compactification $\mathbb{F}_1$.
A similar approach also works for the first system \eqref{syst1}, as we will explain at the end of this subsection.

In the chart $(x_0,v)$ the rational two-form $\omega_t$ providing the anti-canonical divisor $D_t$ as its divisor of poles is given by
\begin{equation} \label{omegatx0v}
    \omega_t = \frac{k d_t x_0 \wedge d_tv}{x_0(x_0 - \kappa)( (1-t)x_0 - \kappa) v^2},
\end{equation}
uniquely up to the non-zero constant  $k\in\mathbb{C}^*$. 
Indeed, in these coordinates we cannot find a local Hamiltonian structure for the system \eqref{second_system} with respect to $\omega_t$, because of the $t$-dependence in \eqref{omegatx0v}.
That is, there is no function $H(x_0,v,t)$ rational in $x_0$ and $v$ with coefficients analytic in $t$ such that system \eqref{second_system} is written as 
\begin{equation} \label{hamformeqs:secondsystem}
    \frac{ \kappa x_0 ' }{x_0 (x_0 - \kappa)\left( (1-t)x_0 - \kappa\right) v^2} = \frac{\partial H}{\partial v}, \quad \frac{ \kappa v ' }{x_0 (x_0 - \kappa)\left( (1-t)x_0 - \kappa\right) v^2} = - \frac{\partial H}{\partial x_0}.
\end{equation}
Indeed, equations \eqref{hamformeqs:secondsystem} are incompatible when regarded as a pair of partial differential equations for $H$,  i.e. $\frac{\partial^2 H}{\partial x_0 \partial v} \neq \frac{\partial^2 H}{\partial v \partial x_0}$.
 
Therefore, for the system \eqref{second_system}  we aim to find coordinates in which the symplectic form is $t$-independent.

Recall the atlas for the Hirzebruch surface $\mathbb{F}_1\cong\Bl_p\Pj^2$ as in the previous section, i.e. the four charts $(\xi_i,\upsilon_i)$, for $i=0,1,2,3$, with gluing defined by  \eqref{eq:F1charts}.

Again regarding the variables $x_0$ and $v$ from the system \eqref{second_system} as the coordinates for the chart $(\xi_0,\upsilon_0)$ via $\xi_0=x_0$, $\upsilon_0=v$, the rational two-form \eqref{omegatx0v} is written in these coordinates as 
\begin{equation} \label{omegat}
    \omega_t = \frac{k d_t \xi_0 \wedge d_t \upsilon_0}{\xi_0(\xi_0 - \kappa)( (1-t)\xi_0 - \kappa) \upsilon_0^2},
\end{equation}
where $k$ is possibly $t$-dependent and will be chosen appropriately later. 
We need to remedy the $t$-dependence in the factor $( (1-t)\xi_0 - \kappa)$, and we do this by obtaining a change of our atlas for $\mathbb{F}_1$ such that the problematic $t$-dependent divisor is not visible in one of the new charts, and in this chart the system will have a local Hamiltonian structure.
We represent this procedure to find the coordinate change in \Cref{fig:coordchange}.

First contract the $(-1)$-curve  $C_-$  on $\mathbb{F}_1$, with local equations $\upsilon_0=0$, $\upsilon_1=0$, to get $\p^2$, denoting by $\pi : \mathbb{F}_1 \rightarrow \p^2$ the contraction, see \Cref{rem:sechirz}. 
For $\p^2$ take the following homogeneous coordinates 
\begin{equation}
    [Z_0:Z_1:Z_2] = [\xi_0 \upsilon_0 : \upsilon_0 : 1] = [\upsilon_1 : \xi_1 \upsilon_1 : 1]
\end{equation} 
so that the point $p = \pi(C_-)$ is at the origin in the coordinates 
\begin{equation}
    (r,s)=\left(\frac{Z_0}{Z_2},\frac{Z_1}{Z_2}\right) = \left( \xi_0 \upsilon_0, \upsilon_0 \right) = \left( \upsilon_1, \xi_1 \upsilon_1 \right).
\end{equation}
The two-form \eqref{omegat} pushed forward under $\pi$ is then given by
\begin{equation} \label{omegatP2}
    \pi_*(\omega_t) =  \frac{k d_t r \wedge d_t s}{r(r - \kappa s)( (1-t)r - \kappa s)}. 
\end{equation}
Consider the linear system of lines in $\p^2$ passing through $p$, which is written in coordinates
\begin{equation}
  \mathcal{L}=  \Set{ L_{\lambda} : \lambda_0 Z_0 + \lambda_1 Z_1 = 0 ~|~ \lambda = [\lambda_0 : \lambda_1] \in \p^1 }\cong\Pj^1.  
\end{equation} 
The poles of $\pi_*(\omega_t)$ then correspond to the lines 
\begin{equation}
    L_{[1:0]} =\Set{ Z_0 = 0}, \quad L_{[1:-\kappa]} =\Set{ Z_0 = \kappa Z_1}, \quad L_{[t-1:\kappa]}  =\Set{ (1-t)Z_0 =\kappa Z_1}.
\end{equation}
We look for a change of homogeneous coordinates for $\p^2$, say
\begin{equation}
    [Z_0:Z_1:Z_2] \leftrightarrow [\hat{Z}_0:\hat{Z}_1:\hat{Z}_2],
\end{equation}
corresponding to a M\"obius transformation of $\mathcal{L}$, that fixes the origin in the $(r,s)$ chart as well as the line at infinity $\Set{Z_2=0}$. 
We want this to be such that in the linear system
\begin{equation}
  \Set{ \hat{L}_{\hat{\lambda}} : \hat{\lambda}_0 \hat{Z}_0 + \hat{\lambda}_1 \hat{Z}_1 = 0 ~|~ \hat{\lambda} = [\hat{\lambda}_0 : \hat{\lambda}_1] \in \p^1 }
\end{equation}
the lines on which the two-form has poles are given by
\begin{equation}
      \hat{L}_{[1:0]} =\Set{ \hat{Z}_0 = 0}, \quad  \hat{L}_{[1:-1]}=\Set{ \hat{Z}_0 = \hat{Z}_1}, \quad  \hat{L}_{[0:1]} =\Set{ \hat{Z}_1 = 0}.
 \end{equation}
A direct calculation shows that this is achieved by a change of coordinates 
\begin{equation}
    \begin{aligned}
      Z_0 &= A \hat{Z}_0,\\
      Z_1 &= \tfrac{A}{\kappa} \left( (1-t) \hat{Z}_0 + t \hat{Z}_1 \right),\\
      Z_2 &= B \hat{Z}_2,
   \end{aligned} 
   \qquad 
    \begin{aligned}
      \hat{Z}_0 &= \tfrac{1}{A} Z_0 ,\\
      \hat{Z}_1 &= \tfrac{1}{A t} \left( (t-1) Z_0 + \kappa Z_1\right),\\
      \hat{Z}_2 &= \tfrac{1}{B} Z_2,
   \end{aligned} 
\end{equation}
where $A, B$ are nonzero and possibly $t$-dependent, and will be chosen appropriately below.
Now, taking the coordinates $(\hat{r},\hat{s})= \left(\frac{\hat{Z}_0}{\hat{Z}_2},\frac{\hat{Z}_1}{\hat{Z}_2}\right)$, we blow up the origin $(\hat{r},\hat{s})=(0,0)$ reaching a new copy of $\mathbb{F}_1$. 
The projection $\hat{\pi} : \mathbb{F}_1 \rightarrow \p^2$ is given in coordinates by $\hat{\pi}: (\hat{\xi}_0,\hat{\upsilon}_0) \mapsto (\hat{r},\hat{s}) = (\hat{\xi}_0 \hat{\upsilon}_0,\hat{\upsilon}_0)$, and we have the new coordinates $(\hat{\xi}_i,\hat{\upsilon}_i)$, $i=0,1,2,3$, defined by
\begin{equation} 
    \hat{\xi}_0 = \frac{t \xi_0}{\kappa + (t-1) \xi_0}, \quad \hat{\upsilon}_0 = \frac{B}{A t} \upsilon_0 \left(\kappa + (t-1)\xi_0\right),
\end{equation}
and the remaining $\hat{\xi}_i$, $\hat{\upsilon}_i$ with the same gluing as before, i.e. \eqref{eq:F1charts} with hats.

By design, the problematic $t$-dependent pole of $\omega_t$ is not visible in the chart $(\hat{\xi}_0,\hat{\upsilon}_0)$.
Indeed, we have 
\begin{equation}\label{omegathat}
    \omega_t = \frac{-B}{A \kappa t}\frac{ k d_t \hat{\xi}_0 \wedge d_t \hat{\upsilon}_0}{\hat{\xi}_0 (\hat{\xi}_0 - 1) \hat{\upsilon}_0^2},
\end{equation}
so we choose $A=1$, $B=\kappa t$. We also choose $k=\kappa$ for neater matching with the Hamiltonian system for $(f,g)$ through \Cref{cor:secondsystemtosystfg}, but this is without loss of generality.
Then the system becomes Hamiltonian in these coordinates as follows.
\begin{proposition} \label{prop:localhamsyst2}
    In the coordinates 
    \begin{equation} 
    \hat{\xi}_0 = \frac{t \xi_0}{\kappa + (t-1) \xi_0}, \quad \hat{\upsilon}_0 = \kappa \upsilon_0 \left(\kappa + (t-1)\xi_0\right),
    \end{equation}
    the system \eqref{second_system} has a local Hamiltonian structure with respect to $\omega_t$.
    Explicitly,
    \begin{equation}
        \begin{gathered}
            \frac{\kappa \hat{\xi}_0'}{\hat{\xi}_0 (1-\hat{\xi}_0) \hat{\upsilon}_0^2} = \frac{ \partial H}{\partial \hat{\upsilon}_0}, \quad \frac{\kappa \hat{\upsilon}_0'}{\hat{\xi}_0 (1-\hat{\xi}_0) \hat{\upsilon}_0^2} = - \frac{ \partial H}{\partial \hat{\xi}_0}, \\ 
            H = 
            \frac{\kappa( \beta \hat{\upsilon}_0 - \kappa)}{(t-1)\hat{\xi}_0 \hat{\upsilon}_0^2}
            +
            \frac{(\kappa+n \hat{\upsilon}_0)(\kappa + (n+\gamma)\hat{\upsilon}_0)}{t(t-1)(1-\hat{\xi}_0)\hat{\upsilon}_0^2}
            +\frac{\alpha \kappa}{t \hat{\upsilon}_0}. 
        \end{gathered}
    \end{equation}
\end{proposition}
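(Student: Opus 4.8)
The plan is to obtain the local Hamiltonian structure from the already-established identification with the KNY form \eqref{systfg} of $\pain{VI}$, rather than integrating Hamilton's equations by hand. First I would record the explicit birational map $(\hat\xi_0,\hat\upsilon_0)\leftrightarrow(f,g)$ obtained by composing the coordinate change $\hat\xi_0 = t\xi_0/(\kappa+(t-1)\xi_0)$, $\hat\upsilon_0=\kappa\upsilon_0(\kappa+(t-1)\xi_0)$ with the transformation of \Cref{cor:secondsystemtosystfg}. Using $\kappa+(t-1)\xi_0=\kappa f$ together with the inverse of \eqref{eq:from3.8to2.2}, one finds $\hat\xi_0 = t(f-1)/((t-1)f)$, a $t$-dependent M\"obius transformation in $f$ carrying the fixed singularities $f=1,t$ of $\pain{VI}$ to $\hat\xi_0=0,1$; this explains a posteriori why the asserted Hamiltonian has poles exactly along $\hat\xi_0=0$, $\hat\xi_0=1$ and $\hat\upsilon_0=0$.

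The key point is that the KNY system \eqref{systfg} is already of local Hamiltonian form \eqref{systfgham} with respect to the $t$-independent symplectic form $\omega_t^{\operatorname{KNY}}=f^{-1}\,d_t f\wedge d_t g$, while the target chart was engineered precisely so that $\omega_t$ acquires the $t$-independent shape \eqref{omegathat}. Accordingly I would check the compatibility hypothesis \eqref{symplecticF} of \Cref{symplecticlemma}, namely that $f^{-1}\,d_t f\wedge d_t g$ equals the pullback of $\kappa\,\hat\xi_0^{-1}(1-\hat\xi_0)^{-1}\hat\upsilon_0^{-2}\,d_t\hat\xi_0\wedge d_t\hat\upsilon_0$ under this map, both coefficient functions being independent of $t$. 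This is a single Jacobian computation, which I would perform with a computer algebra system. Granting it, \Cref{symplecticlemma} immediately yields a Hamiltonian $H(\hat\xi_0,\hat\upsilon_0;t)$, unique up to an additive function of $t$, for which the system takes the stated Hamiltonian form.

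It then remains to exhibit the closed form of $H$. The essential subtlety is that the change of variables is genuinely $t$-dependent, so $H$ is not simply the pullback of the KNY Hamiltonian \eqref{systf H - intro}: by \eqref{twoformsF} it differs from it by the non-autonomous correction built from the $t$-derivatives of the transformation, as in \eqref{eq:hamiltonianscorrection}. I would compute $H$ either by matching the two-forms in \eqref{twoformsF}, or, equivalently, by substituting the explicitly transformed system into the pair $\partial H/\partial\hat\upsilon_0 = \kappa\hat\xi_0'/(\hat\xi_0(1-\hat\xi_0)\hat\upsilon_0^2)$ and $\partial H/\partial\hat\xi_0 = -\kappa\hat\upsilon_0'/(\hat\xi_0(1-\hat\xi_0)\hat\upsilon_0^2)$ and integrating; either route, after simplification, reproduces the three-term expression of \Cref{prop:localhamsyst2}.

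The main obstacle I anticipate is bookkeeping rather than conceptual: correctly tracking the explicit $t$-derivatives $\partial_t\hat\xi_0$, $\partial_t\hat\upsilon_0$ in the non-autonomous chain rule when forming $\hat\xi_0'$ and $\hat\upsilon_0'$, and then recognising the resulting rational functions as exact partial derivatives of the compact $H$. The decisive structural check is the cross-derivative identity $\partial^2 H/\partial\hat\xi_0\,\partial\hat\upsilon_0 = \partial^2 H/\partial\hat\upsilon_0\,\partial\hat\xi_0$, whose failure in the $(\xi_0,\upsilon_0)$ chart was exactly the obstruction noted around \eqref{hamformeqs:secondsystem}; it is this integrability that \Cref{symplecticlemma} packages into the verifiable form condition \eqref{symplecticF}.
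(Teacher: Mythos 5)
Your proposal is correct, and it stays inside the paper's own framework, but it inverts the logical order relative to the paper's presentation. The paper first \emph{derives} the chart $(\hat{\xi}_0,\hat{\upsilon}_0)$ geometrically (contraction of $\mathbb{F}_1$ to $\p^2$, a M\"obius change of the pencil of lines through the contracted point moving the three pole lines of $\pi_*(\omega_t)$ to $t$-independent positions, then blowing back up), so that $\omega_t$ takes the $t$-independent form \eqref{omegathat} by construction; the proposition is then established by exhibiting $H$ and verifying Hamilton's equations directly in that chart, and only \emph{afterwards} is the coincidence with the KNY structure recorded through the map $\varphi$ and the two-form equality \eqref{twoformequality}. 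You instead take the chart as given, compose with the identification of \Cref{cor:secondsystemtosystfg} (your intermediate formula $\hat{\xi}_0 = t(f-1)/((t-1)f)$ agrees with the paper's), verify the compatibility condition \eqref{symplecticF} between $f^{-1}d_tf\wedge d_tg$ and $\kappa\,\hat{\xi}_0^{-1}(1-\hat{\xi}_0)^{-1}\hat{\upsilon}_0^{-2}\,d_t\hat{\xi}_0\wedge d_t\hat{\upsilon}_0$, and invoke \Cref{symplecticlemma} to obtain existence of $H$ before computing it. The computational core — a single Jacobian identity plus an integration, with the correction terms \eqref{eq:hamiltonianscorrection} accounting for the $t$-dependence of the gluing so that $H$ is not the naive pullback of \eqref{systf H - intro} — is identical in both routes. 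What your organisation buys is a priori integrability: the cross-derivative obstruction that killed the Hamiltonian form in the $(x_0,v)$ chart (cf.\ \eqref{omegatx0v}) is guaranteed to vanish by the lemma once \eqref{symplecticF} is checked, rather than being confirmed after the fact. What it loses is the explanation of \emph{why} these particular coordinates exist at all; you correctly rationalise the pole structure of $H$ a posteriori, but the paper's geometric construction is what produces the chart in the first place, and is the part of the argument your proof must simply assume.
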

This Hamiltonian structure coincides with that of the KNY form \eqref{systfg} of $\pain{VI}$ under the identification in \Cref{cor:secondsystemtosystfg}.
The identification induces a birational transformation $\varphi : \C^2 \times B \dashrightarrow \C^2 \times B$, for $B=\C\setminus \Set{0,1}$, defined in coordinates $(\hat{\xi}_0,\hat{\upsilon}_0;t) \mapsto (f,g;t)$ explicitly as
\begin{equation}
\hat{\xi}_0 = \frac{t (f-1)}{(t-1)f}, \quad  
\hat{\upsilon}_0 = \frac{ \kappa(t-1) f}{(f-1)\left( f (\alpha+ n - g) + t(g-\alpha)\right)},
\end{equation}
or conversely
\begin{equation}
    f= \frac{t}{t - (t-1)\hat{\xi}_0},
    \quad 
    g = \frac{t \kappa + \hat{\xi}_0 \left( \kappa(1-t) + (n + \alpha (t-1)(\hat{\xi}_0 -1) )\hat{\upsilon}_0\right)}{(t-1)(\hat{\xi}_0-1)\hat{\xi}_0\hat{\upsilon}_0}. 
\end{equation}
Under $\varphi$ we have the equality of rational two-forms
\begin{equation} \label{twoformequality}
    \varphi^* \left(\omega^{\operatorname{KNY}} + d H^{\operatorname{KNY}}_{\mathrm{VI}} \wedge dt \right) =  \omega + d H \wedge dt,
\end{equation}
where 
\begin{equation}
\omega^{\operatorname{KNY}} = \frac{df \wedge dg}{f},
\quad
\omega = \frac{ \kappa d \hat{\xi}_0 \wedge d \hat{\upsilon}_0}{\hat{\xi}_0 (1-\hat{\xi}_0) \hat{\upsilon}_0^2},
\end{equation}
and $H^{\operatorname{KNY}}_{\mathrm{VI}}$ is given by \eqref{systf H - intro}.

\begin{figure}[htb]
\centering
    \begin{tikzpicture}[basept/.style={circle, draw=black!100, fill=black!100, thick, inner sep=0pt,minimum size=1.2mm},scale=0.9]
    \begin{scope}[xshift=-4cm]
    \node at (0,3.5) {$\mathbb{F}_1$};
    \draw[thick,blue] (-2.5,-2)  -- (+2.5,-2) node[pos=0.5,above,black] {$C_-$} node[pos=0,left] {\small$y_0=0$} node[pos=1,right] {\small$y_1=0$};
    \draw[thick,black] (-2.5,+2) -- (+2.5,+2) node[pos=0,left] {\small$y_3=0$} node[pos=1,right] {\small$y_2=0$};
    \draw[thick,blue] (-2,-2.5) -- (-2,+2.5) node[pos=0,below] {\small$x_0=0$} node[pos=1,above,black] {\small$x_3=0$};
    \draw[thick,black] (+2,-2.5) -- (+2,+2.5)node[pos=0,below] {\small$x_1=0$} node[pos=1,above] {\small$x_2=0$};

    \draw[thick,blue] (-.66,-2.5) -- (-.66,+2.5) node[pos=0,below] {\small$x_0=\kappa$};
    \draw[thick,blue] (+.66,-2.5) -- (+.66,+2.5) node[pos=0,below] {\small$x_0=\tfrac{\kappa}{1-t}$};

    \end{scope}
    
    \draw[thick,black,<->] (-1,0)--(+1,0) node[midway,above] {$\sim$};
    
    \begin{scope}[xshift=+4cm]
    \node at (0,3.5) {$\mathbb{F}_1$};
    \draw[thick,blue] (-2.5,-2)  -- (+2.5,-2)  node[pos=0,left] {\small$ \hat{y}_0=0$} node[pos=1,right] {\small$ \hat{y}_1=0$};
    \draw[thick,black] (-2.5,+2) -- (+2.5,+2) node[pos=0,left] {\small$ \hat{y}_3=0$} node[pos=1,right] {\small$ \hat{y}_2=0$};
    \draw[thick,blue] (-2,-2.5) -- (-2,+2.5) node[pos=0,below] {\small$\hat{x}_0=0$} node[pos=1,above,black] {\small$\hat{x}_3=0$};
    \draw[thick,blue] (+2,-2.5) -- (+2,+2.5)node[pos=0,below] {\small$\hat{x}_1=0$} node[pos=1,above,black] {\small$\hat{x}_2=0$};
    \draw[thick,blue] (0,-2.5) -- (0,+2.5) node[pos=0,below] {\small$\hat{x}_0=1$};

    \end{scope}

    \draw[thick,black,->] (-4,-3.5)--(-4,-4.5) node[midway,left] {$\pi$};

    \begin{scope}[xshift=-4cm, yshift=-8cm]
    \node at (0,-3) {$\mathbb{P}^2$};
    \draw[thick,black] (-2.5,+2) -- (+2.5,+2)  node [pos=0,left] {\small$Z_2=0$} ;
    \draw[thick,blue] (-2.25,+2.5) -- (+.3,-2) node [pos=1,right,black] {\small$Z_0=0$} node[pos=0,above] {\small${L}_{[1:0]}$};
    \draw[thick,black] (+2.25,+2.5) -- (-.3,-2) node [pos=1,left] {\small$Z_1=0$}   node [pos=0,above] {\small${L}_{[0:1]}$};
    
    \draw[thick,blue] (+.09,-2) -- (-.66,+2.5) node [pos=1,above] {\small${L}_{[1:-\kappa]}$} ;
    \draw[thick,blue] (-.09,-2) -- (+.66,+2.5) node [pos=1,above,xshift=+5pt] {\small${L}_{[t-1:-\kappa]}$} ;
    
    \draw[very thick,red,->] (0,-1.5) -- (.35,-.9) node[pos=1,right] {\small$r$};
    \draw[very thick,red,->] (0,-1.5) -- (-.35,-.9) node[pos=1,left] {\small$s$};
    \end{scope}

    \draw[thick,black,->] (+4,-3.5)--(+4,-4.5) node[midway,right] {$\hat{\pi}$};

    \draw[thick,black,<->] (-1,-8)--(+1,-8) node[midway,above] {$\sim$};

    \begin{scope}[xshift=+4cm, yshift=-8cm]
    \node at (0,-3) {$\mathbb{P}^2$};
    \draw[thick,black] (-2.5,+2) -- (+2.5,+2) node [pos=1,right] {\small$\hat{Z}_2=0$} ;
    \draw[thick,blue] (-2.25,+2.5) -- (+.3,-2) node [pos=1,right,black] {\small$\hat{Z}_0=0$} node [pos=0,above] {\small$\hat{L}_{[1:0]}$};
    \draw[thick,blue] (+2.25,+2.5) -- (-.3,-2) node [pos=1,left,black] {\small$\hat{Z}_1=0$} node [pos=0,above] {\small$\hat{L}_{[0:1]}$};

    \draw[thick,blue] (0,-2) -- (0,+2.5) node [pos=1,above] {\small$\hat{L}_{[1:-1]}$} ;

    \draw[very thick,red,->] (0,-1.5) -- (.35,-.9) node[pos=1,right] {\small$\hat{r}$};
    \draw[very thick,red,->] (0,-1.5) -- (-.35,-.9) node[pos=1,left] {\small$\hat{s}$};
    \end{scope}
\end{tikzpicture}
\caption{Coordinate change for $\mathbb{F}_1$ in order to obtain a local Hamiltonian structure for system \eqref{second_system}, with locations of poles of $\omega_t$ in blue. 
}
\label{fig:coordchange}
\end{figure}
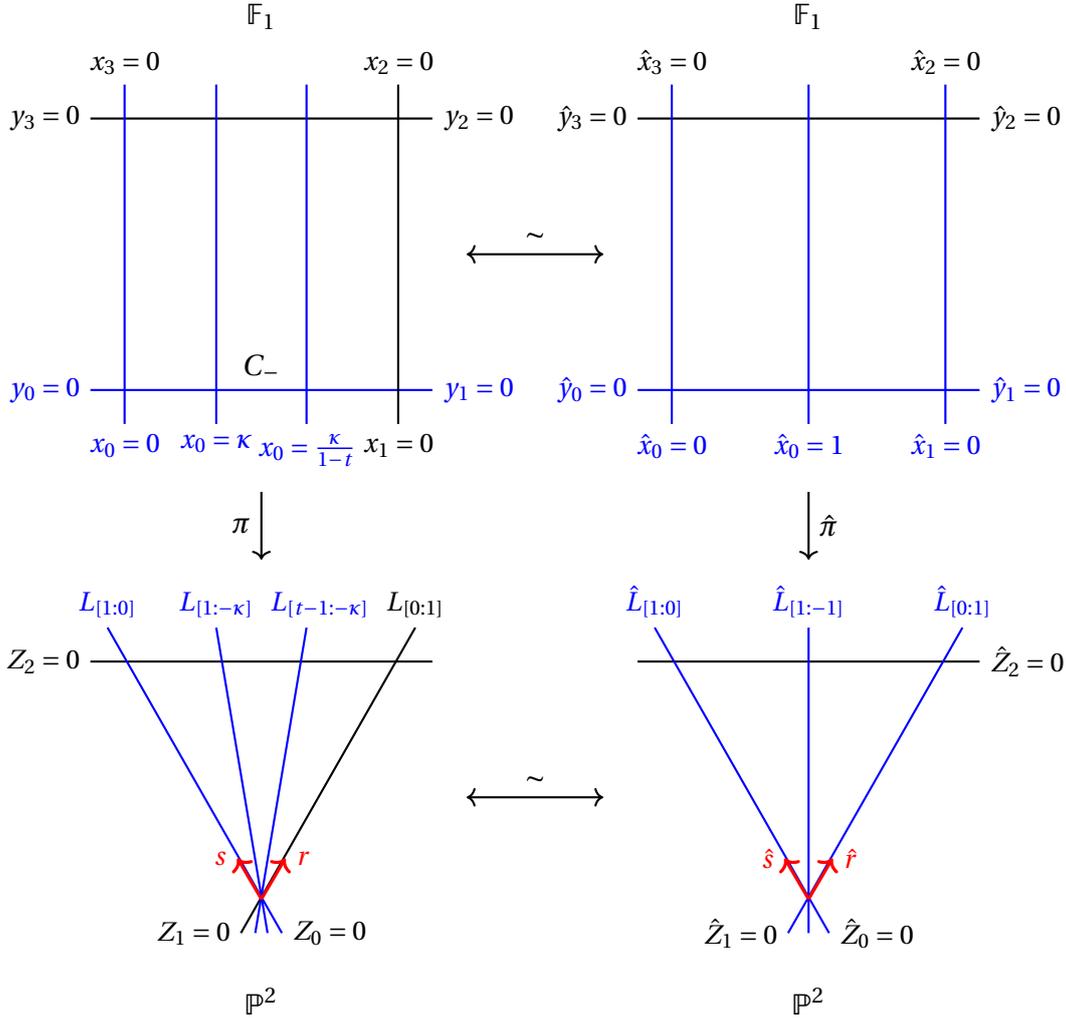

A similar trick can be employed in the case of system \eqref{syst1} in the first parametrisation studied in \Cref{sec:firstpar}, to obtain coordinates, related in a relatively simple way to $\tilde{x},\tilde{y}$, in which the system \eqref{syst1} has a local Hamiltonian structure.

Recall the sequence of blow ups and blow downs performed to construct a minimal space of initial conditions for system \eqref{syst1} from the $\p^1 \times \p^1$ compactification, as shown in \Cref{fig:surface:syst1}. 
In this case the rational two-form with respect to which the Hamiltonian structure should be defined is given in coordinates $\tilde{x},\tilde{y}$ by 
\begin{equation} \label{omegaxytilde}
    \omega_t = \frac{k d_t \tilde{x}\wedge \tilde{y}}{\tilde{x}(\tilde{x}-\kappa)((1-t)\tilde{x}-\kappa)},
\end{equation}
again unique up to the choice of the possibly $t$-dependent constant $k$. 
This is similar to the two-form considered above in the $(x_0,v)$-chart as in \eqref{omegatx0v}, but with the $\p^1 \times \p^1$ compactification this has not only poles along the lines $\{\tilde{x}=0\}$, $\{\tilde{x}=\kappa\}$ and $\{\tilde{x}=\tfrac{\kappa}{1-t}\}$, but also zeroes along $\{\tilde{x}=\infty\}$. 
These zeroes are an artifact of the contractions $\operatorname{Bl}_{p_{12}}\circ \operatorname{Bl}_{p_{11}}$ as in \Cref{fig:surface:syst1} required to arrive at a minimal space of initial conditions on which the two-form has only poles.

The problematic $t$-dependent factor $\left( (1-t)\tilde{x} - \kappa\right)$ in the two-form \eqref{omegaxytilde} can be dealt with by a combination of elementary transformations and the same trick as above.
After the blow up of $\p^1 \times \p^1$ centred at $p_7:(\tilde{x},\tilde{y})=(\infty,\infty)$, if we contract the proper transform of the line $\{\tilde{x}=\infty\}$ we arrive at a copy of $\mathbb{F}_1$, i.e. we have performed an elementary transformation $\p^1 \times\p^1 = \mathbb{F}_0 \dashrightarrow \mathbb{F}_1$ as in \Cref{fig:eltransf}.
Then we are in a similar situation to that considered above, with three lines along which the two-form has poles, one of which is $t$-dependent. We make a coordinate change for $\mathbb{F}_1$ then go back to $\p^1\times\p^1$ via another elementary transformation.
The result is the following counterpart to \Cref{prop:localhamsyst2}.

\begin{proposition} \label{prop:localhamsyst1}
    After making the birational change of variables from $(\tilde{x},\tilde{y})$ to $(\hat{x},\hat{y})$ according to
    \begin{equation}  \label{xytildehat}
    \hat{x} = \frac{t \tilde{x}}{\kappa + (t-1) \tilde{x}}, \quad \hat{y} = \frac{t(t-1)\tilde{y}}{2(\kappa+(t-1)\tilde{x})},
    \end{equation}
    the system \eqref{syst1} has a local Hamiltonian structure with respect to 
    \begin{equation}
        \omega_t = \frac{d_t\hat{x}\wedge d_t \hat{y}}{\hat{x}(\hat{x}-1)}.
    \end{equation}
    Explicitly, the system \eqref{syst1} after the change \eqref{xytildehat} reads
    \begin{equation}
        \frac{\hat{x}'}{\hat{x}(\hat{x}-1)} = \frac{\partial H}{\partial \hat{y}}, \quad  \frac{\hat{y}'}{\hat{x}(\hat{x}-1)} = -\frac{\partial H}{\partial \hat{x}},
    \end{equation}
    where $H=H(\hat{x},\hat{y};t)$ is a Hamiltonian function, rational in its arguments, that can be computed explicitly.
\end{proposition}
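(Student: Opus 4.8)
The plan is to mirror, in the first parametrisation, the geometric construction carried out for the second system in \Cref{prop:localhamsyst2}, and then to transport the resulting Hamiltonian structure via \Cref{symplecticlemma}. The obstruction to a local Hamiltonian structure in the chart $(\tilde{x},\tilde{y})$ is entirely the $t$-dependent factor $\left((1-t)\tilde{x}-\kappa\right)$ in the denominator of \eqref{omegaxytilde}; geometrically this is the pole line of $\omega_t$ whose position in the fibre moves with $t$. The strategy is therefore to produce a new atlas in which this pole is ``invisible'', i.e.\ pushed onto a coordinate axis that does not vary with $t$, so that the coefficient of the two-form becomes $t$-independent and \Cref{symplecticlemma} applies.

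First I would pass from the $\Pj^1\times\Pj^1$ compactification to $\mathbb{F}_1$ by the elementary transformation of \Cref{fig:eltransf}: blow up $p_7:(\tilde{x},\tilde{y})=(\infty,\infty)$ and contract the proper transform of $\{\tilde{x}=\infty\}$. Unlike the situation of \Cref{prop:localhamsyst2}, here \eqref{omegaxytilde} has not only the three pole lines $\{\tilde{x}=0\}$, $\{\tilde{x}=\kappa\}$, $\{(1-t)\tilde{x}=\kappa\}$ but also \emph{zeros} along $\{\tilde{x}=\infty\}$, which are artefacts of the contractions $\Bl_{p_{12}}\circ\Bl_{p_{11}}$ used in \Cref{fig:surface:syst1} to reach the minimal space of initial conditions. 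Propagating $\omega_t$ carefully through these elementary transformations, so as to land in a chart where the two-form has poles only, is the fiddly bookkeeping part; once on $\mathbb{F}_1$ the configuration of three pole lines is exactly that of \Cref{prop:localhamsyst2}. I would then contract $C_-$ to $\Pj^2$, apply the same M\"obius recoordinatisation of the pencil of lines through the contracted point that moves the $t$-dependent pole line to the fixed position, blow up again, and finally return to $\Pj^1\times\Pj^1$ by one more elementary transformation. Composing all these maps produces the explicit change of variables \eqref{xytildehat}.

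The two-form computation then confirms the choice \eqref{xytildehat} and is routine: using $\kappa+(t-1)\tilde{x}=-\left((1-t)\tilde{x}-\kappa\right)$ together with the identity $\hat{x}(\hat{x}-1)=t\tilde{x}(\tilde{x}-\kappa)\big/\left(\kappa+(t-1)\tilde{x}\right)^2$, one finds
\begin{equation}
    \frac{d_t\hat{x}\wedge d_t\hat{y}}{\hat{x}(\hat{x}-1)}
    = \frac{-\tfrac{1}{2}t\kappa(t-1)\, d_t\tilde{x}\wedge d_t\tilde{y}}{\tilde{x}(\tilde{x}-\kappa)\left((1-t)\tilde{x}-\kappa\right)},
\end{equation}
which is exactly \eqref{omegaxytilde} with the admissible (possibly $t$-dependent) choice $k=-\tfrac{1}{2}t\kappa(t-1)$; crucially, the coefficient of $d_t\hat{x}\wedge d_t\hat{y}$ is now independent of $t$.

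It remains to exhibit the Hamiltonian. The cleanest route is to observe that $\hat{x}$ coincides with the coordinate $\hat{\xi}_0$ of \Cref{prop:localhamsyst2} (both equal $t(f-1)/\left((t-1)f\right)$ under \Cref{Th1}), so the transformation from $(\hat{x},\hat{y})$ to the KNY coordinates $(f,g)$ pulls $\omega^{\operatorname{KNY}}$ back to a two-form whose coefficient is $t$-independent, i.e.\ it satisfies hypothesis \eqref{symplecticF} of \Cref{symplecticlemma}. That lemma then transports the global Hamiltonian structure of $\pain{VI}$ to the $(\hat{x},\hat{y})$-chart, yielding an equality of two-forms analogous to \eqref{twoformequality} and a Hamiltonian $H(\hat{x},\hat{y};t)$ that is rational in its arguments and unique up to a function of $t$. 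Equivalently, one may verify directly that the divergence $\partial_{\hat{x}}\!\left(\hat{x}'/(\hat{x}(\hat{x}-1))\right)+\partial_{\hat{y}}\!\left(\hat{y}'/(\hat{x}(\hat{x}-1))\right)$ vanishes and integrate to recover $H$. The main obstacle is not this final verification but the careful tracking of $\omega_t$ through the chain of elementary transformations in the presence of the spurious zeros along $\{\tilde{x}=\infty\}$, which are absent in the $\mathbb{F}_1$-based argument of \Cref{prop:localhamsyst2}.
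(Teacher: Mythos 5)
Your proposal is correct and follows essentially the same route as the paper: the elementary transformations $\mathbb{F}_0\dashrightarrow\mathbb{F}_1$, the contraction to $\Pj^2$ with the M\"obius recoordinatisation of the pencil through the contracted point, and the return trip that sends the $t$-dependent line $\Set{\tilde{x}=\tfrac{\kappa}{1-t}}$ to the point $\hat{p}_7$, exactly as in \Cref{fig:syst1trick}. Your explicit two-form computation (with $k=-\tfrac{1}{2}t\kappa(t-1)$) and the transport of the Hamiltonian via \Cref{symplecticlemma} using $\hat{x}=\hat{\xi}_0$ correctly fill in the verification the paper leaves as routine.
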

Let us give more details of the geometric meaning of the transformation \eqref{xytildehat}, as depicted in \Cref{fig:syst1trick}.
After arriving at the copy of $\mathbb{F}_1$ as explained above, we contract the $(-1)$-curve given in coordinates by $\Set{\tilde{y}=0}\cup\Set{\tfrac{\tilde{x}}{\tilde{y}}=0}$.
We then arrive at $\mathbb{P}^2$ with homogeneous coordinates $[Z_0:Z_1:Z_2]=[1:\tilde{x}:\tilde{y}]$, perform an appropriate change $[Z_0:Z_1:Z_2]\mapsto [\hat{Z}_0:\hat{Z}_1:\hat{Z}_2]$, then blow back up to $\mathbb{F}_1$, similarly to in \Cref{fig:coordchange}. 
This provides a genuine change of atlas for $\mathbb{F}_1$, with $(\hat{x},\hat{y})$ as set of affine coordinates corresponding to $(\xi_3,\upsilon_3)$ in the atlas as shown in the top-left picture from \Cref{fig:surface:syst2}. 
We then perform the same steps in reverse, i.e. blow up the point $(\tfrac{1}{\hat{x}},\tfrac{\tilde{y}}{\tilde{x}})=(0,0)$, then contract the proper transform of the curve $\Set{\tfrac{1}{\hat{x}}=0}$ to a point $\hat{p}_7:(\hat{x},\hat{y})=(\infty,\infty)$.
This leads to $\p^1 \times\p^1$ with $(\hat{x},\hat{y})$ as an affine chart, and we regard the transformation $(\tilde{x},\tilde{y})\mapsto(\hat{x},\hat{y})$ in \eqref{xytildehat} as a birational mapping between two copies of $\p^1 \times\p^1$.
The problematic $t$-dependent line $\Set{\tilde{x}=\tfrac{\kappa}{1-t}}$ is sent under this to $\hat{p}_7$, so it is not visible in the $(\hat{x},\hat{y})$ coordinates. 
This explains the form of the transformation \eqref{xytildehat}, in particular the denominators of the rational functions giving $\hat{x},\hat{y}$.

\begin{remark} The propositions above can also be obtained   analytically. However,  an appropriate Ansatz is needed. 
For the system \eqref{second_system} in the second parametrisation one can  search for a transformation in the form 
$$\hat{\xi}_0=\frac{f_1(t)\xi_0}{\kappa+(t-1)\xi_0},\;\;\hat{v}_0=f_2 (t)v_0(\kappa+(t-1)\xi_0)$$ with functions $f_1$ and $f_2$ to be determined. Notice that  choosing $f_1(t)=t$ causes the factor $( t \hat{\xi}_0-f_1)$  in the denominator of the rational expression for $\hat{v}_0'$ to factorise, and $f_2$ can be taken as a constant. 
Then the  system for $\hat{\xi}_0$ and $\hat{v}_0$ can be written in Hamiltonian form with respect to the  two-form as above. 
For the system \eqref{syst1} in the first parametrisation, taking a similar Ansatz $$\hat{x} = \frac{f_1(t) \tilde{x} }{\kappa+ (t - 1) \tilde{x} }, 
 \hat{y} =\frac{ f_2(t) \tilde{y} }{\kappa + (t - 1) \tilde{x}}$$ and similar considerations yield $f_1(t)=t$. 
 Then $f_2$ is chosen according to the compatibility condition of the pair of partial differential equations for the Hamiltonian function, which gives $(1 - 2 t) f_2 + (t-1) t f_2'=0$.
 \end{remark}

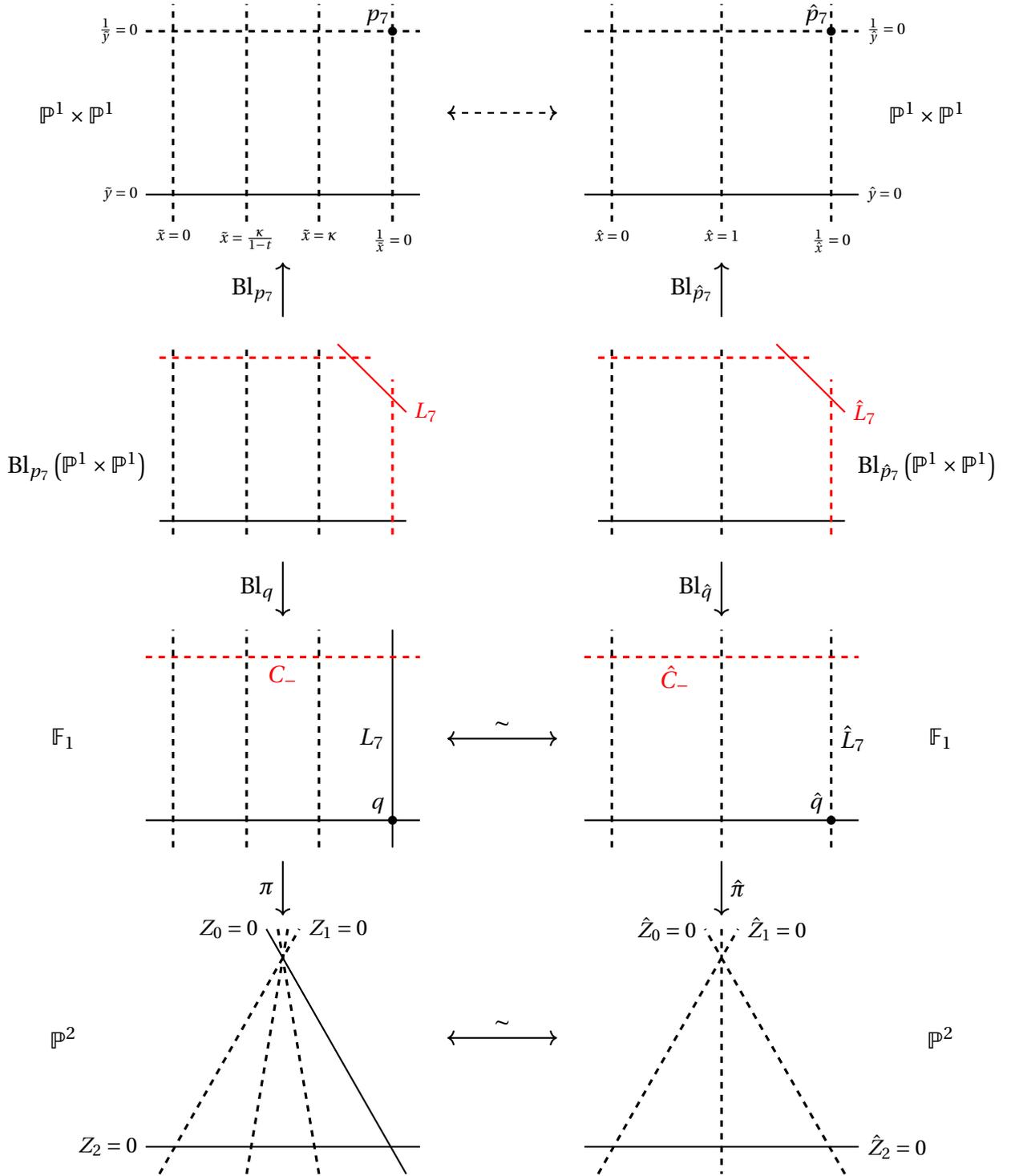
\begin{figure}[!h]
\centering
    \begin{tikzpicture}[scale=.9,basept/.style={circle, draw=black!100, fill=black!100, thick, inner sep=0pt,minimum size=1.2mm}]
    \begin{scope}[xshift = -4cm, yshift=+8cm]
    \draw[thick] 
     (-2.5,-1.5)--(2.5,-1.5) 
      ;     
     \draw[very thick, dashed]
           (-.66,-2)--(-.66,2)  
      (.66,-2)--(.66,2)  
          (-2,-2)--(-2,2)
          (-2.5,1.5)--(2.5,1.5)
          (2,-2)--(2,2) ;
    \node at (-3.75,0) {$\p^1\times\p^1$};
      \node[left] at (-2.5,-1.5) {\tiny $\tilde{y}=0$};
      \node[left] at (-2.5,1.5) {\tiny $\frac{1}{\tilde{y}}=0$};     
      \node[below] at (-2,-2) {\tiny $\tilde{x}=0$};
      \node[below] at (2,-2) {\tiny $\frac{1}{\tilde{x}}=0$};
      \node[below] at (-.66,-2) {\tiny $\tilde{x}=\frac{\kappa}{1-t}$};
      \node[below] at (.66,-2) {\tiny $\tilde{x}=\kappa$};
      
    \node (p7) at (2,1.5) [basept,label={[xshift=-7pt, yshift = -3 pt] \small $p_{7}$}] {};
    \end{scope}

    \draw[thick,black,<->, dashed] (-1,+8)--(+1,+8)
    ;

    \begin{scope}[xshift = +4cm, yshift=+8cm]
    \draw[thick] 
     (-2.5,-1.5)--(2.5,-1.5) 
      ;     
      \draw[very thick, dashed]
    (0,-2)--(0,2)  
          (-2,-2)--(-2,2)
          (-2.5,1.5)--(2.5,1.5)
          (2,-2)--(2,2) ;
     \node at (+3.75,0) {$\p^1\times\p^1$};
     \node[right] at (+2.5,-1.5) {\tiny $\hat{y}=0$};
      \node[right] at (+2.5,1.5) {\tiny $\frac{1}{\hat{y}}=0$};     
      \node[below] at (-2,-2) {\tiny $\hat{x}=0$};
      \node[below] at (2,-2) {\tiny $\frac{1}{\hat{x}}=0$};
      \node[below] at (0,-2) {\tiny $\hat{x}=1$};
    \node (p7) at (2,1.5) [basept,label={[xshift=-7pt, yshift = -3 pt] \small $\hat{p}_{7}$}] {};

    \end{scope}


\draw [thick,->] (+4,4.25)--(+4,5.25) node[pos=0.5, left] {$\text{Bl}_{\hat{p}_7}$};

\draw [thick,->] (-4,4.25)--(-4,5.25) node[pos=0.5, left] {$\text{Bl}_{p_7}$};
  \begin{scope}[xshift = -4cm, yshift=+2cm]
    \node at (-3.75,-.5) {$\operatorname{Bl}_{p_7}\left(\p^1\times\p^1\right)$};
     \draw[thick,black]    
    (-2.25,-1.5)--(2.25,-1.5);
     \draw[very thick, dashed] 
    (-2,-1.75)--(-2,1.75);
     \draw[very thick, dashed, red] 
    (-2.25,1.5)--(1.6,1.5);
    \draw[very thick, dashed, red] 
    (2,-1.75)--(2,1.1) ;
    \draw[very thick, dashed] 
      (-.66,-1.75)--(-.66,1.75);  
    \draw[very thick, dashed] 
      (.66,-1.75)--(.66,1.75)  ;     
    \draw[red, thick] (1.0,1.75) --(2.25,0.5); 
    \node[right,red] at (2.25,0.5) {\small $L_7$};      
    \end{scope}

  \begin{scope}[xshift = +4cm, yshift=+2cm]
    \node at (+3.75,-.5) {$\operatorname{Bl}_{\hat{p}_7}\left(\p^1\times\p^1\right)$}; 
     \draw[thick,black]    
    (-2.25,-1.5)--(2.25,-1.5);  
    \draw[very thick,dashed] 
    (-2,-1.75)--(-2,1.75);
    \draw[very thick,dashed, red] 
    (2,-1.75)--(2,1.1) ;
    \draw[very thick,dashed] 
      (0,-1.75)--(0,1.75);  
    \draw[very thick,dashed, red] 
    (-2.25,1.5)--(1.6,1.5);
    \draw[red, thick] (1.0,1.75) --(2.25,0.5); 
    \node[right,red] at (2.25,0.5) {\small $\hat{L}_7$};      
    \end{scope}
\draw [thick,->] (+4,-.25)--(+4,-1.25) node[pos=0.5, left] {$\text{Bl}_{\hat{q}}$};

\draw [thick,->] (-4,-.25)--(-4,-1.25) node[pos=0.5, left] {$\text{Bl}_{q}$};

   \begin{scope}[xshift=-4cm,yshift=-3cm]
    \node at (-4,-.5) {$\mathbb{F}_1$};
    \draw[thick,black] (-2.5,-2)  -- (+2.5,-2)  
    ;  
    \draw[thick,black] (+2,-2.5) -- (+2,+1.5)
    node [pos=0.5, left,black] {$L_7$}
    ; 
    \node (q7) at (2,-2) [basept,label={[xshift=-7pt, yshift = -3 pt] \small $q$}] {};
        \draw[very thick, dashed, red] (-2.5,+1) -- (+2.5,+1)
    node[pos=0.5,below,red] {$C_-$} ;
    \draw[very thick, dashed]
    (-.66,-2.5) -- (-.66,+1.5);
    \draw[very thick, dashed] (+.66,-2.5) -- (+.66,+1.5) ;
    \draw[very thick, dashed] (-2,-2.5) -- (-2,+1.5) ;

    \end{scope}
   
    \draw[thick,black,<->] (-1,-3.5)--(+1,-3.5) node[midway,above] {$\sim$};
   
    \begin{scope}[xshift=+4cm,yshift=-3cm]
    \node at (4,-.5) {$\mathbb{F}_1$};
    \draw[thick,black] (-2.5,-2)  -- (+2.5,-2) 
    ;  
    \node (q7) at (2,-2) [basept,label={[xshift=-7pt, yshift = -3 pt] \small $\hat{q}$}] {};
    \draw[very thick, dashed, red] (-2.5,+1) -- (+2.5,+1) 
    node[pos=0.33,below,red] {$\hat{C}_-$};
    \draw[very thick, dashed] (+2,-2.5) -- (+2,+1.5)
    node [pos=0.5, right,black] {$\hat{L}_7$};
    \draw[very thick, dashed] (0,-2.5) -- (0,+1.5) ;
    \draw[very thick, dashed] (-2,-2.5) -- (-2,+1.5);

    \end{scope}

    \draw[thick,black,->] (-4,-5.75)--(-4,-6.75) node[midway,left] {$\pi$};


    \begin{scope}[xshift=-4cm, yshift=-9cm]
    \node at (-4,0) {$\mathbb{P}^2$};
    \draw[thick,black] (-2.5,-2) -- (+2.5,-2)  node [pos=0,left,black] {\small$Z_2=0$} ;
    \draw[very thick,dashed] (-2.25,-2.5) -- (+.3,+2) node [pos=1,right,black] {\small$Z_1=0$} 
    ;
    \draw[thick,black] (+2.25,-2.5) -- (-.3,+2) node [pos=1,left] {\small$Z_0=0$}   
    ;
    
    \draw[very thick,dashed] (+.09,+2) -- (-.66,-2.5) 
    ;
    \draw[very thick,dashed] (-.09,+2) -- (+.66,-2.5) 
    ; 
    \end{scope}
    \draw[thick,black,->] (+4,-5.75)--(+4,-6.75) node[midway,right] {$\hat{\pi}$};

    \draw[thick,black,<->] (-1,-9)--(+1,-9) node[midway,above] {$\sim$};
    \begin{scope}[xshift=+4cm, yshift=-9cm]
    \node at (4,0) {$\mathbb{P}^2$};
    \draw[thick,black] (-2.5,-2) -- (+2.5,-2) 
    node [pos=1,right] {\small$\hat{Z}_2=0$} 
    ;
    \draw[very thick,dashed] (-2.25,-2.5) -- (+.3,+2) 
    node [pos=1,right,black] {\small$\hat{Z}_1=0$} 
    ;
    \draw[very thick,dashed] (+2.25,-2.5) -- (-.3,+2) 
    node [pos=1,left,black] {\small$\hat{Z}_0=0$} 
    ;

    \draw[very thick,dashed] (0,+2) -- (0,-2.5) 
    ; 
    \end{scope}
    
\end{tikzpicture}
\caption{Construction of the birational map $\p^1 \times \p^1 \dashrightarrow \p^1 \times \p^1$, $(\tilde{x},\tilde{y})\mapsto (\hat{x},\hat{y})$ in \Cref{prop:localhamsyst1}. $(-1)$-curves are coloured in red, and components of $\operatorname{div}(\omega_t)$ are indicated by dashed lines. 
}
\label{fig:syst1trick}
\end{figure} 
\subsection{Global Hamiltonian structure} \label{subsec:globalhamstructures}

As remarked above, the identification of the system on the hypersurface in either parametrisation with the Okamoto Hamiltonian form of $\pain{VI}$ means that it inherits a global Hamiltonian structure on the space of initial conditions.
Using the identifications in   \Cref{Th1} and \Cref{cor:secondsystemtosystfg}, the global Hamiltonian structure of $\pain{VI}$ in Appendix \eqref{appsubsec:globalhamstructure} is inherited by systems \eqref{syst1} and \eqref{second_system}. This is the content of \Cref{prop:hamsys} which is part of \Cref{thm:main} in the Introduction.

\begin{proposition} 
    The coordinates $(x_i,y_i)$, $i=0,1,\dots,5$,  of the symplectic atlas in   \Cref{prop:takanosymplecticatlas} for $E^{\operatorname{KNY}}$ from the KNY Hamiltonian form of $\pain{VI}$ can be pulled back under the identification in   \Cref{cor:secondsystemtosystfg} to provide a symplectic atlas for the space $E$ constructed in \Cref{sec:hypersurf} from the system in the second parametrisation \eqref{second_system}. 
    
    Similarly they can be pulled back under the identification in \Cref{Th1} to provide a symplectic atlas for the space constructed in \Cref{sec:firstpar} from the system in the first parametrisation \eqref{syst1}.

 Further, the system on the hypersurface in either parametrisation has a global Hamiltonian structure provided by the pullback of $\Omega^{\operatorname{KNY}}$ in \Cref{cor:HamiltonianstructureKNY} under the corresponding identification.

    \label{prop:hamsys}
\end{proposition}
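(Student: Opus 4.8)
The plan is to leverage the fact that the identifications in \Cref{Th1} and \Cref{cor:secondsystemtosystfg} are not merely birational maps between phase spaces but genuine isomorphisms of spaces of initial conditions. Writing $\psi_t \colon X_t \to X_t^{\operatorname{KNY}}$ for the fibrewise isomorphism induced by either identification, we know from \Cref{Th1} and \Cref{cor:secondsystemtosystfg} that $\psi_t$ carries the inaccessible divisor $D_t$ to $D_t^{\operatorname{KNY}}$ and intertwines the flow of the system on the hypersurface with that of the KNY form \eqref{systfg} of $\pain{VI}$. The whole argument then reduces to transporting the symplectic atlas and the global two-form across $\psi_t$, checking at each stage that the relevant structure is preserved.

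First I would verify that $\psi_t$ is a symplectomorphism up to a nonzero constant, i.e.\ that $\psi_t^* \omega_t^{\operatorname{KNY}} = \omega_t$ after an appropriate normalisation of the constant $k$ in the anti-canonical form. This is immediate from the fact that on a Sakai surface the anti-canonical linear system $|-\mathcal{K}_{X_t}|$ is zero-dimensional (see \Cref{subsec:haplhen}): the rational two-form with divisor $-D_t$ is unique up to scalar, so $\psi_t^* \omega_t^{\operatorname{KNY}}$, having divisor $-D_t$, must be a constant multiple of $\omega_t$, and the constant can be absorbed into $k$. For the second parametrisation this is in fact already established explicitly in \eqref{twoformequality} via \Cref{prop:localhamsyst2}, whose fibre part reads $\varphi_t^* \omega^{\operatorname{KNY}} = \omega$; for the first parametrisation the analogous identity follows from the two-form computed in \Cref{prop:localhamsyst1}.

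With this in hand, the first two claims are formal. Given the symplectic atlas $\{(x_i,y_i)\}_{i=0}^{5}$ of \Cref{prop:takanosymplecticatlas} for $E^{\operatorname{KNY}}$, I would define pulled-back charts on $E = X \setminus D$ by $(x_i \circ \psi_t, y_i \circ \psi_t)$. Since $\psi_t$ is an isomorphism these cover $E$ and form an atlas, and since $\omega_t^{\operatorname{KNY}} = d_t x_i \wedge d_t y_i$ in each KNY chart, the previous step gives $\omega_t = d_t (x_i\circ\psi_t) \wedge d_t (y_i\circ\psi_t)$ in each new chart. This is precisely the condition in \Cref{def:symplecticatlas}, so the pulled-back charts form a symplectic atlas for $E$ in both parametrisations.

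For the global Hamiltonian structure I would pull back the two-form $\Omega^{\operatorname{KNY}}$ of \Cref{cor:HamiltonianstructureKNY} along the full three-dimensional map $\psi$ (including the $t$-direction), setting $\Omega = \psi^* \Omega^{\operatorname{KNY}}$ and $H_U = H_U^{\operatorname{KNY}} \circ \psi$ in each pulled-back chart $U$. The crucial point, and the only place where the $t$-dependence of the gluing matters, is that the correction terms of \eqref{eq:hamiltonianscorrection} are already encoded in $\Omega^{\operatorname{KNY}}$; pulling back the single global object $\Omega^{\operatorname{KNY}}$ therefore automatically produces consistent Hamiltonians $H_U$ in every chart, exactly as in the fibrewise identity \eqref{twoformequality} but now extended to the total space. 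By \Cref{symplecticlemma} the resulting collection defines a global Hamiltonian structure for the system on the hypersurface, which by construction coincides with that of $\pain{VI}$. The main obstacle is really only the verification in the second paragraph that $\psi_t$ preserves the symplectic form; once the uniqueness of the anti-canonical form on a Sakai surface is invoked this is essentially automatic, and the remainder is bookkeeping carried across the isomorphism.
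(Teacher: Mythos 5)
Your proposal is correct, but it verifies the key step by a different mechanism than the paper. The paper's proof is a direct computational check: it composes the Takano chart functions with the identifications to obtain explicit formulas (the analogues of \eqref{symplecticatlasxytildetoxy0} and \eqref{symplecticatlasxiupsilontoxy0}) relating $(x_0,y_0)$ to $(\tilde{x},\tilde{y})$ and to $(\xi_0,\upsilon_0)$, and then verifies with computer algebra, chart by chart against the blow-up coordinates of \eqref{eq:coordblowp}, that these functions provide an atlas for the spaces $E$ constructed in \Cref{sec:firstpar,sec:hypersurf}; the global Hamiltonian structure then comes along for free. You instead take as input the fact, already recorded after \Cref{Th1} and \Cref{cor:secondsystemtosystfg}, that the identifications lift to isomorphisms $X\to X^{\operatorname{KNY}}$ carrying $D$ to $D^{\operatorname{KNY}}$, so that pulling back an atlas is formally an atlas; the only substantive point left is the symplectic condition, which you settle by the uniqueness (up to a scalar, absorbable into the normalisation $k$) of a rational two-form with prescribed divisor $-D_t$ on a projective surface — note this uniqueness needs only that the two forms share the same divisor, not the zero-dimensionality of $|-\mathcal{K}_{X_t}|$ per se, though that is what guarantees $D_t$ is canonically determined. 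Your route buys conceptual clarity and portability (it applies verbatim to any identification known to be an isomorphism of Sakai surfaces matching the anti-canonical divisors, with no new computation), at the cost of leaning entirely on the isomorphism property, which in this paper was itself established by checking in charts; the paper's computational route, by contrast, produces the explicit transition formulas that are the practically useful output of the proposition. Your final appeal to \Cref{symplecticlemma} is not actually needed — once you pull back the single global two-form $\Omega^{\operatorname{KNY}}$ along the map commuting with the projection to $B$, the consistency of the chartwise Hamiltonians is automatic — but this is a cosmetic rather than a mathematical issue.
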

\begin{proof}
    The proof consists of a direct check comparing the coordinates $(x_i,y_i;t)$ with those introduced according to the convention \eqref{eq:coordblowp}. This can be done using a computer algebra system, e.g.\ Maple~\cite{Maple} or 
    Mathematica~\cite{Mathematica}. 
      Concretely, note that the coordinates $(x_0,y_0)$ from the symplectic atlas for $E^{\operatorname{KNY}}$ in \Cref{prop:takanosymplecticatlas} are related to the variables $(\tilde{x},\tilde{y})$ from system \eqref{syst1} via the identification in   \Cref{Th1} by 
    \begin{equation} \label{symplecticatlasxytildetoxy0}
        \begin{aligned}
            x_0 &= 1 + \frac{(t-1)\tilde{x}}{\kappa}, \\ y_0&= \frac{\kappa \left[(t-1)(\alpha-\beta-\gamma-1)\tilde{x}^2 + \kappa(\beta \kappa + t(t-1)\tilde{y})- \kappa \left( 1 - \alpha+2\beta + \gamma + t(\alpha - \beta - 1) \right)\tilde{x}   \right]}{2(t-1)\tilde{x}\left(\tilde{x}-\kappa\right)\left(\kappa - (t-1)\tilde{x}\right)}, 
        \end{aligned}
    \end{equation}
    with the remaining coordinates $(x_i,y_i)$, $i=1,\dots,5$, related to $\tilde{x},\tilde{y}$ via the gluing in   \Cref{prop:takanosymplecticatlas} together with \eqref{symplecticatlasxytildetoxy0}. 
    It can be directly checked that these provide an atlas for $E$ as constructed in \Cref{sec:firstpar}.
    
Similarly the charts from the symplectic atlas are related to the variables 
$(\xi_0,\upsilon_0)$ in system \eqref{second_system} via \Cref{cor:secondsystemtosystfg} by 
\begin{equation} \label{symplecticatlasxiupsilontoxy0}
    \begin{aligned}
    x_0 &= 1 + \frac{(t-1)\xi_0}{\kappa}, \\
    y_0 &= \frac{\kappa \left[\kappa + \kappa (n+\alpha - t \alpha)\xi_0 \upsilon_0 + (t-1)(\alpha+n)\xi_0^2 \upsilon_0 \right]}{(t-1)\xi_0 (\xi_0 - \kappa)(\kappa + (t-1)\xi_0)\upsilon_0},
    \end{aligned}
\end{equation}
with the remaining coordinates $(x_i,y_i)$, $i=1,\dots,5$, related to $\xi_0,\upsilon_0$ via the gluing in   \Cref{prop:takanosymplecticatlas} together with \eqref{symplecticatlasxiupsilontoxy0}.
           It can be directly checked that these provide an atlas for $E$ as constructed in \Cref{sec:hypersurf}.

\end{proof}

\section{An autonomous limit of the system of three first-order differential equations}

\label{sec:autlim}

In this section we examine in details an autonomous limit of the three-dimensional system \eqref{syst3D} which reduces to   autonomous, i.e. time-independent, system of three ODEs and we prove part of \Cref{thm:main} in \Cref{prop:sys3d}. In particular, we show the existence of  an integrable autonomous limit whose orbits are elliptic curves. This suggests that the full three-dimensional system might possess the Painlev\'e property.

We recall that in the case of second-order Painlev\'e differential equations similar autonomous limits are integrable in terms of elliptic functions. The same is true in the case of discrete Painlev\'e equations whose autonomous limits, including QRT maps \cite{QRT1,QRT2}, have spaces of initial conditions consisting of rational elliptic surfaces \cite{tsuda,duistermaatbook,carsteatakenawaelliptic}. 

\begin{remark}
    We remark that also the opposite procedure is possible, i.e.\ the  
    deautonomisation of differential and difference equations. We recall 
    the Painlev\'e 
    $\alpha$-test~\cite{Weissetal1983,GrammaticosRamaniASIDE,painlevehandbook}
    for the continuous case, and the preservation of singularity patterns~\cite{Grammaticos1991,GrammaticosRamaniASIDE} in the discrete setting.
    We just mention  that those methods usually become unpractical for
    systems in dimension higher than two, due to the complexity of the calculations involved.
\end{remark}

We start this section by reviewing some key tools to build the associated Poisson tensor
and prove Liouville integrability for the autonomous system.

\subsection{Construction of a Poisson bracket from first integrals}

Recall that a system of (autonomous) first-order ODEs~\eqref{eq:firstord} is said 
to be \emph{volume preserving} if there exists a volume form $\Omega$, 
such that the Lie derivative of  $\Omega$ along the vector field defined by the
system vanishes. More explicitly, let $\vec{F}_t=\vec{F}=(F_1(\vec{x}),\ldots,F_{n}(\vec{x}))$ be the right-hand side of~\eqref{eq:firstord}. Then the associated
vector field is:
\begin{equation}
    \Gamma = \sum_{i=1}^n F_i(\vec{x})\partial_{x_i},
\end{equation}
and a volume form has the coordinate expression $\Omega = w(\vec{x}) d x_1 \wedge \ldots \wedge dx_n$. So, denoting by $\mathcal{L}_{\Gamma}$ the Lie derivative, the condition of being volume preserving is 
$\mathcal{L}_{\Gamma} \Omega = 0$. In coordinates, one has the expression:
\begin{equation}
    \mathcal{L}_{\Gamma} \Omega = 
    \sum_{i=1}^{n}\frac{\partial}{\partial x_i}(w(\vec{x})F_i(\vec{x}))= 0.
    \label{eq:LGammaOmega}
\end{equation}

\begin{remark}
    Throughout this section, when no confusion is possible, we omit the subscript $t$ on a function not depending explicitly  on time.
\end{remark}

In what follows, we will make extensive use of the following results regarding autonomous systems.

\begin{theorem}[{\cite[Proposition 3]{Byrnesetal1999}} ] 
    Consider a system of autonomous first-order ODEs. Assume that:
    \begin{itemize}
        \item the system preserves the volume form $\Omega$;
        \item there exist $I_1$, \ldots, $I_{n-2}$ functionally independent first integrals.
    \end{itemize}
    Then, choosing the $n$-multivector $\tau$ such that $\tau\lrcorner \Omega=1$,
    the 2-tensor:
    \begin{equation}
        J =
        \tau \lrcorner d I_1 \lrcorner \ldots \lrcorner d I_{n-2}, 
    \end{equation}
     is a Poisson tensor for the system of autonomous first-order ODEs. 
    \label{thm:byrnes}
\end{theorem}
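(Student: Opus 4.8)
The plan is to verify, in local coordinates, the three defining features of a Poisson structure adapted to the system: that $J$ is a bivector field, that $I_1,\dots,I_{n-2}$ are its Casimirs, and that the Jacobi identity holds, and then to exhibit the flow $\Gamma=\sum_i F_i\partial_{x_i}$ as a Hamiltonian vector field for $J$. First I would fix coordinates with $\Omega=\tfrac1w\,w\,dx_1\wedge\cdots\wedge dx_n$, so that $\tau=\tfrac1w\,\partial_{x_1}\wedge\cdots\wedge\partial_{x_n}$, and carry out the iterated interior products defining $J$ to obtain the components $J^{ij}=\tfrac1w\,\varepsilon^{\,i\,j\,a_1\cdots a_{n-2}}\,\partial_{a_1}I_1\cdots\partial_{a_{n-2}}I_{n-2}$, with $\varepsilon$ the Levi-Civita symbol. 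This is manifestly antisymmetric in $i,j$, so $J$ is a bivector. Contracting once more against $dI_m$ for any $1\le m\le n-2$ gives $\sum_j J^{ij}\partial_j I_m=0$, since the gradient of $I_m$ then appears twice against the totally antisymmetric $\varepsilon$; hence each $I_m$ is a Casimir of $J$.

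From the Casimir property, $\mathrm{Im}(J^\sharp)$ is annihilated by $dI_1,\dots,dI_{n-2}$, that is, it is tangent to the common level sets $\{I_1=c_1,\dots,I_{n-2}=c_{n-2}\}$, which are two-dimensional wherever the $dI_m$ are independent. This distribution is integrable because its annihilator is spanned by the closed (indeed exact) forms $dI_m$, so by Frobenius I would pass to adapted coordinates $(y_1,y_2,I_1,\dots,I_{n-2})$ in which $J=a\,\partial_{y_1}\wedge\partial_{y_2}$ for a single function $a$. For a rank-two bivector of this shape the Schouten bracket $[J,J]$ vanishes identically: in the cyclic Jacobiator the second-derivative terms cancel exactly as in two dimensions, and the remaining first-derivative terms reduce, after writing $[p,q]=p_1q_2-p_2q_1$ for planar vectors, to the Grassmann identity $[v,w]\,u+[w,u]\,v+[u,v]\,w=0$ paired against $da$. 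This proves that $J$ is a Poisson tensor.

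It remains to write the system in Hamiltonian form. Since $\Gamma(I_m)=0$, the field $\Gamma$ is tangent to the leaves, and from $dI_m\wedge\Omega=0$ one deduces $dI_m\wedge\iota_\Gamma\Omega=0$ for every $m$. Volume preservation gives $d(\iota_\Gamma\Omega)=\mathcal{L}_\Gamma\Omega=0$, so $\iota_\Gamma\Omega$ is a closed $(n-1)$-form divisible by each $dI_m$; the division lemma then yields a (locally defined) function $H$ with $\iota_\Gamma\Omega=dI_1\wedge\cdots\wedge dI_{n-2}\wedge dH$, which unwinds to $\Gamma=J^\sharp(dH)$. Thus $\Gamma$ is Hamiltonian for $J$ with Hamiltonian $H$, and together with the Casimirs $I_1,\dots,I_{n-2}$ this furnishes the Poisson description claimed.

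The step I expect to be the real obstacle is the Jacobi identity for general $n$, together with the fact that $H$ is in general only locally defined. For the three-dimensional system of interest both simplify dramatically: here $n-2=1$, so $J$ corresponds to the vector field $V\propto\nabla I_1$ and the Jacobi condition reads $V\cdot(\nabla\times V)=0$, which holds automatically because of the repeated gradient; moreover the Hamiltonian $H$ is simply a second independent first integral, so no global obstruction arises and Liouville--Poisson integrability follows on the two-dimensional symplectic leaves.
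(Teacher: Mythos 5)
The first thing to note is that the paper contains no proof of this statement: it is imported verbatim from \cite{Byrnesetal1999} (their Proposition 3) and is only used, via \Cref{cor:byrnes}, in the proof of \Cref{prop:sys3d}. So your argument can only be compared with the standard proof underlying the cited reference, and on that score it is correct and essentially the canonical route. The chain of steps is sound: the coordinate expression $J^{ij}=\tfrac{1}{w}\,\varepsilon^{ij a_1\cdots a_{n-2}}\partial_{a_1}I_1\cdots\partial_{a_{n-2}}I_{n-2}$ and its antisymmetry; the Casimir property from the repeated gradient against the Levi-Civita symbol; passage to coordinates $(y_1,y_2,I_1,\dots,I_{n-2})$, in which the Casimir property forces $J=a\,\partial_{y_1}\wedge\partial_{y_2}$; the Jacobi identity for such a bivector; and the Hamiltonian form of the flow from $d(\iota_\Gamma\Omega)=\mathcal{L}_\Gamma\Omega=0$ together with $dI_m\wedge\iota_\Gamma\Omega=0$, the division lemma and the leafwise Poincar\'e lemma, giving $\iota_\Gamma\Omega=dI_1\wedge\cdots\wedge dI_{n-2}\wedge dH$ and hence $\Gamma=J^{\sharp}(dH)$. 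Three points could be tightened. First, the Jacobi identity in adapted coordinates needs no Grassmann-identity computation: for $J=a\,\partial_{y_1}\wedge\partial_{y_2}$, any component of the cyclic sum $\sum_l\bigl(J^{il}\partial_l J^{jk}+J^{jl}\partial_l J^{ki}+J^{kl}\partial_l J^{ij}\bigr)$ with three distinct indices involves some index $k\notin\{1,2\}$, and then every term contains either an identically vanishing component $J^{jk}$, $J^{ki}$, $J^{kl}$ or a derivative of one, so the sum vanishes term by term; your planar-identity argument is correct but longer than necessary. Second, the adapted coordinates exist only near points where $dI_1,\dots,dI_{n-2}$ are linearly independent; since functional independence gives this on a dense open set, the vanishing of the Schouten bracket extends to the degenerate locus by continuity, a sentence worth adding. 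Third, in the three-dimensional application the Hamiltonian paired with $J=\tau\lrcorner dI_1$ is in general a function of the integrals rather than literally a second integral, which is exactly what the paper exhibits in the remark following \Cref{prop:sys3d}, where the Hamiltonians attached to $J_{a,1}^{(1)}$ and $J_{a,1}^{(2)}$ are proportional to $\mathcal{I}_2$ and to $\log\mathcal{I}_2$ respectively.
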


In particular, in the case of integrable three-dimensional systems we have the following statement.

\begin{corollary}[{\cite[Corollary 16]{Byrnesetal1999}} ]
    Consider a three-dimensional system of autonomous first-order ODEs. Assume that the system is 
    integrable, in the sense that it admits two functionally independent first 
    integrals $I_1$, $I_2$, and  a preserved volume $\Omega$. Then, the system is multi-Poissonian with Poisson structures $J_1 = \tau \lrcorner d I_1$ and
    $J_2 = \tau \lrcorner d I_2$ where the $n$-multivector $\tau$ satisfies  $\tau\lrcorner \Omega=1$.
    \label{cor:byrnes}
\end{corollary}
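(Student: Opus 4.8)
The plan is to recognise \Cref{cor:byrnes} as nothing more than \Cref{thm:byrnes} specialised to $n=3$ and invoked twice. For a three-dimensional system one has $n=3$, so the theorem requires only $n-2=1$ functionally independent first integral together with a volume form preserved by the flow. Both ingredients are available under the hypotheses of the corollary: the preserved volume $\Omega$ is given, and as the single integral demanded by \Cref{thm:byrnes} we are free to designate either $I_1$ or $I_2$.

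First I would apply \Cref{thm:byrnes} with the single first integral taken to be $I_1$. This yields directly that the bivector $J_1=\tau\lrcorner dI_1$, with $\tau$ the trivector fixed by $\tau\lrcorner\Omega=1$, is a Poisson tensor for the system. To identify the corresponding Hamiltonian I would contract $J_1$ with $dI_2$, obtaining $\tau\lrcorner dI_1\lrcorner dI_2$; since $\Gamma$ annihilates both $dI_1$ and $dI_2$ by conservation, $\Gamma$ is proportional to this vector, and the normalisation $\tau\lrcorner\Omega=1$ together with the volume-preservation identity \eqref{eq:LGammaOmega} pins down the proportionality constant, so that $\Gamma$ is the Hamiltonian vector field of $I_2$ with respect to $J_1$. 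Repeating the argument with the roles of $I_1$ and $I_2$ exchanged produces the second Poisson tensor $J_2=\tau\lrcorner dI_2$, for which the same computation identifies $-I_1$ as the Hamiltonian, the sign arising from the antisymmetry of the contraction.

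It then remains to argue that $J_1$ and $J_2$ are genuinely distinct, so that the system is \emph{multi}-Poissonian rather than Poissonian in a single way. Here I would invoke the functional independence of $I_1$ and $I_2$: it guarantees that $dI_1$ and $dI_2$ are linearly independent on a dense open subset, whence the bivectors $J_1=\tau\lrcorner dI_1$ and $J_2=\tau\lrcorner dI_2$ are non-proportional there.

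The hard part, indeed the only part requiring genuine care, will be the bookkeeping of signs and of the normalising factor $1/w$ in $\tau$, needed to confirm that the flow is reproduced exactly, and not merely up to a scalar multiple, by each Poisson structure paired with its claimed Hamiltonian. By contrast, the Jacobi identity $[J_i,J_i]=0$ for each structure needs no separate verification, being already furnished by \Cref{thm:byrnes}.
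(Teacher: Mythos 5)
Your proposal is correct and takes essentially the same route as the paper, which presents this corollary as an immediate specialization of \Cref{thm:byrnes} to $n=3$, invoked once with $I_1$ and once with $I_2$ (both the Jacobi identity and the compatibility with the flow being part of that theorem's conclusion, so nothing further is needed). One caveat on your supplementary Hamiltonian identification: volume preservation only forces the proportionality factor between $\Gamma$ and $\tau\lrcorner dI_1\lrcorner dI_2$ to be a \emph{function} of $(I_1,I_2)$, not a constant, so the Hamiltonian paired with $J_1$ is in general a primitive of that function rather than $I_2$ itself --- compare the remark after \Cref{prop:sys3d}, where the paper obtains Hamiltonians $\log\mathcal{I}_2$ and $-\mathcal{I}_1/\mathcal{I}_2$ for one choice of preserved volume --- but since the corollary only asserts the two Poisson structures, this does not affect the validity of your argument.
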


The most delicate point in~\Cref{thm:byrnes} is to find a preserved volume,
a problem for which, in general, there is no finite algorithm. However,
there are cases where an algorithmic construction is possible, as highlighted
in the following result.

\begin{lemma}
    Consider a system of autonomous first-order ODEs. 
    Let $P$ be a Darboux function whose cofactor satisfies:
    \begin{equation}
        C = \sum_{i=1}^{n} \frac{\partial F_i}{\partial x_i}.
        \label{eq:Com}
    \end{equation}
    Then, the volume form:
    \begin{equation}
        \Omega_P = \frac{1}{P(\vec{x})} d x_1 \wedge \ldots \wedge dx_n
    \end{equation}
    is preserved by the ODE system.
    \label{lem:volume}
\end{lemma}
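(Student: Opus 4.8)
The plan is to verify the volume-preservation criterion \eqref{eq:LGammaOmega} directly, taking the weight to be $w(\vec{x})=1/P(\vec{x})$. The first step is to rephrase the Darboux condition in the autonomous language of the vector field $\Gamma=\sum_{i=1}^{n}F_i\,\partial_{x_i}$. Since the system is autonomous, along any solution of \eqref{eq:firstord} the total time-derivative of $P$ coincides with the directional derivative $\Gamma(P)$, so the defining relation $P'=CP$ becomes the pointwise functional identity
\[
\Gamma(P)=\sum_{i=1}^{n}F_i\,\frac{\partial P}{\partial x_i}=C\,P,
\]
valid everywhere because a solution curve passes through each point of the phase space.

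The second step is to insert $w=1/P$ into the coordinate expression \eqref{eq:LGammaOmega} and expand the quotient derivative. A direct computation gives
\[
\sum_{i=1}^{n}\frac{\partial}{\partial x_i}\!\left(\frac{F_i}{P}\right)
=\frac{1}{P}\sum_{i=1}^{n}\frac{\partial F_i}{\partial x_i}
-\frac{1}{P^{2}}\sum_{i=1}^{n}F_i\,\frac{\partial P}{\partial x_i}.
\]
The final step is to substitute the two available identities: the hypothesis \eqref{eq:Com} gives $\sum_{i}\partial F_i/\partial x_i=C$, while the reformulated Darboux condition gives $\sum_{i}F_i\,\partial P/\partial x_i=C\,P$. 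These combine to produce the cancellation
\[
\frac{1}{P}\sum_{i=1}^{n}\frac{\partial F_i}{\partial x_i}
-\frac{1}{P^{2}}\sum_{i=1}^{n}F_i\,\frac{\partial P}{\partial x_i}
=\frac{C}{P}-\frac{C\,P}{P^{2}}=0,
\]
so $\mathcal{L}_{\Gamma}\Omega_P=0$, which is exactly the assertion that $\Omega_P$ is preserved.

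There is essentially no serious obstacle here: the result is a short computation, morally the statement that a Darboux function whose cofactor equals the divergence of $\vec{F}$ serves as an inverse integrating factor (a Jacobi last multiplier). The only point deserving care is the translation of the Darboux relation $P'=CP$, which is phrased along solutions, into the pointwise identity $\Gamma(P)=CP$ used above; in the autonomous case this is immediate, and one should also note that it is the reciprocal $1/P$, rather than $P$ itself, that plays the role of the density $w$. No analytic subtleties arise provided one works away from the zero locus $\{P=0\}$, where $\Omega_P$ has its pole.
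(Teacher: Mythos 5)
Your proof is correct and follows essentially the same route as the paper's: insert $w=1/P$ into the coordinate expression for $\mathcal{L}_{\Gamma}\Omega_P$, expand by the quotient rule, and cancel using the Darboux relation $\sum_i F_i\,\partial P/\partial x_i = CP$ together with the hypothesis $C=\sum_i \partial F_i/\partial x_i$. Your added remarks on the pointwise reformulation of the Darboux condition and on staying away from $\{P=0\}$ are sensible clarifications but do not change the argument.
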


\begin{proof}
    By the condition of volume invariance written as in
    equation \eqref{eq:LGammaOmega} we have:
    \begin{equation}
        \mathcal{L}_{\Gamma} \Omega_P = 
        \sum_{i=1}^{n}\frac{\partial}{\partial x_i} 
        \left(\frac{F_i(\vec{x})}{P(\vec{x})}\right)
        =
        \frac{1}{P^2(\vec{x})}\sum_{i=1}^{n}
        \left(
            P(\vec{x})\frac{\partial F_i}{\partial x_i} -
            F_i(\vec{x})\frac{\partial P}{\partial x_i}
        \right).
    \end{equation}
    Using the defining property of a Darboux function whose cofactor is given by
    \eqref{eq:Com} we obtain the vanishing of the last sum. This ends the proof.
\end{proof}

\subsection{Autonomous limit of $\pain{VI}$}

As noted above, constructing  autonomous limits is in general a tricky task, 
as there is no algorithm for it. So, before proceeding to discuss   autonomous limits
of system \eqref{syst3D}, we review some known autonomous limits of $\pain{VI}$. 

One such limit is given for instance in {\cite[Eqs. (14), (15)]{alves}}. In our notation, let the function $f(t)$ solve equation 
\eqref{P6} with parameters $A,\,B,\,C,\,D$. Let us denote, with abuse of notation,  by $f(s) $ the composition $f(a+\varepsilon s)$, for $a\not=0,1$. Then, fixing parameters
\begin{equation}
  A=\frac{(a-1)^2 a^2 b}{\varepsilon^2},\,\,B=\frac{ac(a-1)^2}{\varepsilon^2},\,\,C=\frac{a^2 d (a-1)}{\varepsilon^2},\,\,D=\frac{a e (a-1)}{\varepsilon^2}  
\end{equation}
and taking the limit $\varepsilon \to 0$, the function $f(s)$ solves the so-called  $I_{49}$ equation,
see \cite{InceBook}, given by
\begin{equation}\label{I49}
f''=\frac{1}{2}\left(\frac{1}{f-a}+\frac{1}{f}+\frac{1}{f-1}\right)(f')^2+f(f-1)(f-a)\left(b+\frac{c}{f^2}+\frac{d}{(f-1)^2}+\frac{e}{(f-a)^2}\right).
\end{equation}
\Cref{I49} can be solved in terms of elliptic functions.
This equation is variational with the following Lagrangian:
\begin{equation}
    L_{49} = \frac{(f')^2}{2f(f-1)(f-a)}
    -bf+\frac{e}{f-a}+\frac{d}{f-1}+\frac{c}{f}+\frac{d \Delta}{d t},
    \label{eq:L49}
\end{equation}
where $\Delta=\Delta(t,f)$ is an arbitrary function (gauge function).
It is worth mentioning that in \cite{DambrNuc2009} this Lagrangian is derived through the
Jacobi Last Multiplier.

We can argue similarly with the two-dimensional differential system \eqref{systfg}.
Indeed, taking the same change of the independent variable $t=a+\varepsilon s $, scaling
the dependent variables as $(f(t),g(t))\to (f(s), g(s)/\varepsilon)$, and rescaling the parameters as 
\begin{equation}
    \label{par scaling}
    \alpha\to\frac{\tilde{\alpha}}{\varepsilon},\;\;\beta\to \frac{\tilde{\beta}}{\varepsilon},\;\;\gamma\to\frac{\tilde{\gamma}}{\varepsilon},\;\;n\to\frac{\tilde{n}}{\varepsilon}
\end{equation}
in the limit $\varepsilon\to 0$ yields the following autonomous analogue of 
the two-dimensional differential system \eqref{systfg}:
\begin{equation}
    \label{systfg:aut}
    \begin{aligned}
f' &=\frac{f(\tilde{\alpha}(a+1)-a\tilde{\beta}-\tilde{\gamma}+(\tilde{\gamma}+\tilde{\beta}-\tilde{\alpha})f)+2(f-1)(f-a)g-\tilde{\alpha}a}{a(a-1)}
,\\
   g'&=-\frac{a g(g-\tilde{\alpha})+f^2(\tilde{n}+\tilde{\alpha}-g)(\tilde{n}+\tilde{\beta}+\tilde{\gamma}+g)}{a(a-1)f},
\end{aligned}
\end{equation} 
where the differentiation is with respect to $s$. 
System \eqref{systfg:aut} gives equation \eqref{I49} with parameters:
\begin{equation}
b=\frac{\tilde{\kappa }^2}{2 a^2(a-1)^2 },\;c=-\frac{\tilde{\alpha }^2}{2 a(a-1)^2  },\;d=\frac{\tilde{\beta }^2}{2 a^2
   (a-1)},\;e=-\frac{\tilde{\gamma }^2}{2a (a-1) },    
\end{equation}
where 
\begin{equation}
\tilde{\kappa }=\tilde{\alpha }+\tilde{\beta }+\tilde{\gamma }+2 \tilde{n}.  
\end{equation} 
In fact the system \eqref{systfg:aut} is still Hamiltonian with the same symplectic
form as the original symplectic structure, see equation \eqref{systfgham}:
\begin{equation}
    H_{49} = \frac{(g-\tilde{n}-\tilde{\alpha})(\tilde{n}+\tilde{\beta}+\tilde{\gamma}+g)f + ((\tilde{\alpha}-\tilde{\beta})a+\tilde{\alpha}-\tilde{\gamma}-(a+1)g)gf}{a(a-1)}
    +\frac{g (g-\tilde{\alpha})}{(a-1)f},
    \label{eq:hamauto2d}
\end{equation}
see for instance  \eqref{systf H - intro}.
We observe that this Hamiltonian formalism does not  arise as the one 
obtained from the variational formalism~\eqref{eq:L49} (up to the additional
gauge function).

\begin{remark}
    The autonomous limit proposed in \cite{alves} is not the only possible
    one for the $\pain{VI}$ equation. For instance, in \cite{JoshiP6} the exponential-like limit $t\to 0$ was considered  by replacing $t\to e^{\tau}$ 
    and taking   $e^{\tau}\to 0$. A similar reasoning can be done near $t=1$.
   It is possible to take similar limits of the 3D system \eqref{syst3D}, but we will not give a full 
    treatment of these two cases, since  they are \emph{singular}, i.e. the system
    collapses to a two-dimensional one. For the sake of completeness, a short description of the result 
    obtained in such singular cases, is reported in \Cref{app:clim}.
\end{remark}

\subsection{Integrable autonomous limit of system  \eqref{syst3D}}
Following the construction in \cite{alves} for $\pain{VI}$  with the scaling on the
dependent variables:
\begin{equation}
    \label{xyz scaling}
    \vec{x}(t)\to \frac{1}{\varepsilon} \vec{x}(s), \quad
    t = a + \varepsilon s,
\end{equation}
and the scaling \eqref{par scaling} of the parameters,  the system \eqref{syst3D}
becomes:
\begin{equation}
    \begin{aligned}
        x'&=\frac{ \left(\tilde{\kappa }+  \tilde{\beta } +2
   a y -a (\tilde{\alpha }+ \tilde{\beta })-2 (a-1) z\right)x-\tilde{\kappa } \left(\tilde{\beta }+2 z\right)+(a-1) x^2}{(a-1) a},
        \\
         y' &=
   \frac{\left[\begin{gathered}
   a\left((a-1)
   \tilde{n} \left(\tilde{\gamma }+\tilde{n}\right)+ \left((a-1) \tilde{\gamma }+2 (a-1) \tilde{n}-\tilde{\alpha
   }\right)y+a y^2\right) x^2  
   \\ 
    - \left((a-1) x^2 \left(a \left(\tilde{\gamma }+2 \tilde{n}\right)+\tilde{\beta }+2 a
   y\right)+\tilde{\beta } \tilde{\kappa }^2\right)z- \left(\tilde{\kappa }^2-(a-1)^2 x^2\right)z^2\    
   \end{gathered}\right]}{a^2 \tilde{\kappa
   }x},
   \\
    z'&=\frac{a y'}{a-1}.   
    \end{aligned}
    \label{syst3Daut}
\end{equation}
This system is clearly autonomous.
The associated hypersurface $S_t$ degenerates to the following time-independent 
surface:
\begin{equation}
    S_a = \Set{h_a(x,y,z)=0},
    \label{Sa}
\end{equation}
where
\begin{equation}
    \begin{aligned}
h_a &= a x y \left( \left(a \tilde{\gamma }+2 a \tilde{n}-\tilde{\alpha }-\tilde{\gamma }-2 \tilde{n}+a y-2 (a-1)
   z\right)x+\tilde{\kappa } \left(\tilde{\alpha }+\tilde{\gamma }+2 \tilde{n}-2 z\right)\right) 
   \\ 
   &+\left(\tilde{\kappa
   }+(a-1) x\right) \left(x \left(a \tilde{n} \left(\tilde{\gamma }+\tilde{n}\right)- \left(a \tilde{\gamma }+2 a
   \tilde{n}+\tilde{\beta }\right)z+(a-1) z^2\right)+ \tilde{\kappa } \left(\tilde{\beta }+z\right)z\right)=0.        
    \end{aligned}
    \label{ha}
\end{equation}
Moreover, its defining polynomial  $h_a$ is still a Darboux polynomial, and it has cofactor:
\begin{equation}
    C_a =
    \frac{ \left(\tilde{\kappa }+ \tilde{\beta }+2
   a y -a (\tilde{\alpha }+ \tilde{\beta })-2 (a-1) z\right)x-\tilde{\kappa } \left(\tilde{\beta }+2 z\right)+(a-1) x^2}{(a-1) a x}.
   \label{eq:Ca}
\end{equation} 
\begin{remark}
    We remark that in the singular limit $a\to0$, the Darboux surface $S_a$ becomes
    reducible:
    \begin{equation}\label{S0 limit}
        S_0 
        =
        \Set{z \left(\tilde{\beta }+z\right) \left(\tilde{\kappa }-x\right)^2=0}.
    \end{equation}
    This already highlights why one has to carefully deal with the
    value of the parameters in the autonomous limits. A similar occurrence
    will be displayed in \Cref{app:clim} while analysing other possible
    autonomous limits.
\end{remark}

The existence of the Darboux polynomial is crucial in  the proof of the  following proposition which constitutes part of \Cref{thm:main} in the Introduction.

\begin{proposition}
    The system~\eqref{syst3Daut} has the following properties:
    \begin{enumerate}
        \item it admits two functionally independent first integrals given by:
            \begin{equation}
                \mathcal{I}_1 = a y -(a-1)z,
                \quad
                \mathcal{I}_2 = \frac{h_a(x,y,z)}{x};
            \end{equation}
            \label{list:inv}
        \item it preserves the following volume forms:
            \begin{equation}
                \Omega_{a}^{(1)} =
                \frac{dx \wedge dy \wedge dz}{x},
                \quad
                \Omega_{a}^{(2)} =
                \frac{dx \wedge dy \wedge dz}{h_a(x,y,z)};
                \label{eq:sys3dvolume}
            \end{equation}
            \label{list:vol}
        \item the level curves of the invariants form an elliptic fibration;
            \label{list:ell}
        \item the system admits two pairs of Poisson structures, which 
            we denote by $J_{a,j}^{(i)}$, for $i,j= 1,2$.
            \label{list:poisson}
    \end{enumerate}
    \label{prop:sys3d}
\end{proposition}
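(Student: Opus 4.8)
The plan is to prove the four items in the stated order, in each case leveraging the Darboux structure recorded in \eqref{ha}--\eqref{eq:Ca}. For the first integrals in item~\ref{list:inv}, $\mathcal{I}_1$ is handled by the third equation of \eqref{syst3Daut}: since $z' = \tfrac{a}{a-1}y'$ one has $\mathcal{I}_1' = a y' - (a-1)z' = 0$ identically. For $\mathcal{I}_2 = h_a/x$ I would first note that comparing the first equation of \eqref{syst3Daut} with \eqref{eq:Ca} gives $x' = C_a\,x$, so that $x$ is itself a Darboux function sharing the cofactor $C_a$ of $h_a$. Then, using $h_a' = C_a h_a$,
\[
\mathcal{I}_2' = \frac{h_a'\,x - h_a\,x'}{x^2} = \frac{(C_a h_a)\,x - h_a\,(C_a x)}{x^2} = 0.
\]
Functional independence is visible directly: $d\mathcal{I}_1 = a\,dy - (a-1)\,dz$ has no $dx$-component, whereas $\partial_x\mathcal{I}_2 = (x\,\partial_x h_a - h_a)/x^2 \not\equiv 0$, so $d\mathcal{I}_1 \wedge d\mathcal{I}_2 \neq 0$ on a dense open set.

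For the volume forms in item~\ref{list:vol} I would collapse both statements to a single divergence identity. Writing $\Gamma$ for the vector field of \eqref{syst3Daut} and expanding the preservation condition \eqref{eq:LGammaOmega} with weight $1/x$, one finds that $\Omega_a^{(1)}$ is preserved precisely when $\operatorname{div}\Gamma = x'/x = C_a$; the same computation with weight $1/h_a$, together with $h_a' = C_a h_a$, shows that $\Omega_a^{(2)}$ is preserved under the identical condition $\operatorname{div}\Gamma = C_a$. Thus both reduce to the single verification $\operatorname{div}\Gamma = C_a$, a routine (if lengthy) computation that a computer algebra system dispatches immediately; equivalently, once this is checked one invokes \Cref{lem:volume} for the two Darboux functions $P=x$ and $P=h_a$, both with cofactor $C_a$.

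Item~\ref{list:ell} is the substantive part. Fixing $\mathcal{I}_1 = c_1$ lets me eliminate $z = (a y - c_1)/(a-1)$ and restrict the flow to the affine surface with coordinates $(x,y)$; imposing $\mathcal{I}_2 = c_2$, i.e. $h_a = c_2 x$, then cuts out a plane curve $\mathcal{C}_{c_1,c_2}$. A careful look at \eqref{ha} shows that the potential degree-four monomials cancel after the substitution (reflecting the presence of the invariant), leaving a curve of total degree three whose Newton polygon possesses exactly one interior lattice point. By the classical genus bound for curves with a prescribed Newton polygon, the geometric genus of $\mathcal{C}_{c_1,c_2}$ is therefore at most one, with equality when the curve is nondegenerate with respect to its Newton polygon. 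To confirm equality for generic fibres I would bring $\mathcal{C}_{c_1,c_2}$ to Weierstrass form by a standard birational reduction using a rational point, and check that the resulting discriminant does not vanish identically as a function of $(c_1,c_2)$ and the parameters. This exhibits the common level sets as smooth genus-one curves away from a proper subvariety of the $(c_1,c_2)$-base, i.e. an elliptic fibration.

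Finally, item~\ref{list:poisson} is a formal consequence of the previous two. For each fixed $i \in \{1,2\}$ the preserved volume $\Omega_a^{(i)}$ together with the two functionally independent integrals $\mathcal{I}_1,\mathcal{I}_2$ puts us in the hypotheses of \Cref{cor:byrnes}: choosing the trivector $\tau^{(i)}$ with $\tau^{(i)} \lrcorner \Omega_a^{(i)} = 1$, the tensors $J_{a,1}^{(i)} = \tau^{(i)} \lrcorner d\mathcal{I}_1$ and $J_{a,2}^{(i)} = \tau^{(i)} \lrcorner d\mathcal{I}_2$ are Poisson. Running this once for each of the two volume forms produces the two pairs $J_{a,j}^{(i)}$, $i,j = 1,2$. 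The main obstacle throughout is item~\ref{list:ell}: pinning the genus to exactly one forces one to control the degenerations of the pencil and to verify the non-vanishing of the discriminant, the one place where a genuine genericity argument (rather than a direct identity) is required, whereas items~\ref{list:inv}, \ref{list:vol} and \ref{list:poisson} follow cleanly from the Darboux property and the cited results.
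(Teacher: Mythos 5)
Your proposal is correct, and for items (1), (2) and (4) it coincides with the paper's own proof: $\mathcal{I}_1$ comes from integrating the third equation, $\mathcal{I}_2$ from the fact that $x$ and $h_a$ are Darboux polynomials with the common cofactor $C_a$ (your explicit remark that the first equation of \eqref{syst3Daut} reads $x'=C_a x$ is precisely the observation the paper relies on), the volume forms follow from \Cref{lem:volume} once the single identity $\operatorname{div}\Gamma=C_a$ is checked, and the Poisson pairs come from \Cref{cor:byrnes} applied to the two volume forms and the two integrals. The one genuine divergence is item (3). Both you and the paper eliminate $z$ via $\mathcal{I}_1=\iota_1$, impose $\mathcal{I}_2=\iota_2$, and land on the plane cubic \eqref{eq:SIGMAcurve}; your cancellation claim is accurate, since the quartic part of $h_a$ is exactly $x^2\bigl(ay-(a-1)z\bigr)^2$, which collapses to $x^2\iota_1^2$ under the substitution, and the would-be $xy^2$ terms also cancel, leaving the monomials $x^2y,\,x^2,\,xy,\,x,\,y^2,\,y,\,1$. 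At that point the paper simply runs the \texttt{genus} command of Maple's \texttt{algcurves} package and reports genus one, whereas you invoke Baker's Newton-polygon bound (the polygon with vertices $(0,0),(2,0),(2,1),(0,2)$ has the single interior lattice point $(1,1)$, so the genus is at most one) and then propose a Weierstrass reduction with a discriminant check to pin the genus to exactly one for generic $(\iota_1,\iota_2)$. Your route buys a conceptual, software-independent explanation of why genus one is the only non-rational possibility, but it still terminates in a computation (non-vanishing of the discriminant) of essentially the same nature as the paper's check; the paper's verification is shorter but opaque. Either way the conclusion — generic common level sets are elliptic curves, degenerating only over a proper subvariety of the $(\iota_1,\iota_2)$-base — is the same.
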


\begin{proof}
    The two invariants in point \eqref{list:inv} can be obtained as follows.
    
    For $\mathcal{I}_1$ integrate
    the third equation in \eqref{syst3Daut} with respect to $s$.  While, for $\mathcal{I}_2$
    note that the ratio of $h_a$ and $x$ is a first integral as both are Darboux polynomials with the same cofactor $C_a$, see equation \eqref{eq:Ca}. 

    Point \eqref{list:vol} follows from \Cref{lem:volume}. Indeed, the volume forms 
    $\Omega_{a}^{(i)}$, for $i=1,2$,  are obtained by noting that the cofactor $C_a$ satisfies  
    the condition of \Cref{lem:volume}. As a consequence, again  by \Cref{lem:volume}, the
    system preserves both $\Omega_{a}^{(1)}$ and $\Omega_{a}^{(2)}$.

    Let us now prove point \eqref{list:ell}. First, fix an admissible initial condition 
    $(x_0,y_0,z_0)$. To it is assigned a value of the two first
    integrals $\mathcal{I}_1=\iota_1$ and $\mathcal{I}_2=\iota_2$ where 
    $\iota_i=\iota_i(x_0,y_0,z_0)$. Since $\mathcal{I}_1$ is linear, we can solve it with 
    respect to $z$ and plug the solution into $\mathcal{I}_2$. After clearing the denominators
    we see that  the variables $(x,y)$ satisfy the following
    relation:
    \begin{equation}\label{eq:SIGMAcurve}
        \begin{aligned}[t]
        & \left(\tilde{n} (\tilde{n}+\tilde{\gamma})a^2-a\tilde{\kappa}y+\left(( \tilde{\kappa}+\iota_1 -\tilde{\alpha}\right)\iota_1
        -\tilde{n}(\tilde{n}+\tilde{\gamma})) a  \right)(a-1)^2 x^2
        \\
        &+2
        \begin{aligned}[t]
        &\left[\left((\tilde{\beta}+\tilde{\alpha}) a-2\iota_1- \tilde{\kappa}-\tilde{\beta} \right)  \frac{a\tilde{\kappa}}{2} y
        +( \tilde{n}+\tilde{\gamma})  \frac{\tilde{n}\tilde{\kappa}}{2} a^2
        \right.
        \\
        &+\left(
        (  \tilde{\kappa}- \tilde{\beta}) 
        \left( \tilde{n}+\frac{\tilde{\gamma}}{2}\right) \iota_1
        \right.
        -\frac{\iota_1 (\tilde{\beta}+\tilde{\alpha})+\tilde{n} ( \tilde{n}+\tilde{\gamma})}{2} \tilde{\beta}
        - \tilde{n}^3        
        \\
        &\left. \left.-\frac{(\tilde{\alpha}+3 \tilde{\gamma})\tilde{n}^2+ \tilde{\gamma} (\tilde{\alpha}+\tilde{\gamma})  \tilde{n} + \iota_2}{2}  
        \right) a 
        +\iota_1    \tilde{\kappa}(   \tilde{\beta}
        +     \iota_1)+\frac{\iota_2}{2}\right] (a-1) x 
        \end{aligned}
        \\
        &+(ay-\iota_1) (a \tilde{\beta}+a y-\tilde{\beta}-\iota_1) 
        \tilde{\kappa}^2=0,
        \end{aligned}
    \end{equation} 
depending on $\iota_i$, for $i=1,2$.    Using the command \texttt{genus} from the Maple package \texttt{algcurves}~\cite{Maple} we check that the curve \eqref{eq:SIGMAcurve} has genus one, i.e. it is an elliptic curve. As a consequence, generically the level curves of $\mathcal{I}_2$ are elliptic curves possibly degenerating to singular ones. Indeed, the orbits of the system~\eqref{syst3Daut} are exactly given as the intersections of the planes $\Set{a y -(a-1)z=\iota_1}$ with the hypersurface $S_a$.  The fibration is given by varying the values of $\iota_1,\iota_2$.

    Finally, the proof of point \eqref{list:poisson} follows from \Cref{thm:byrnes}
    and \Cref{cor:byrnes} applied to the two volume forms $\Omega_{a}^{(i)}$ and the 
    first integrals $\mathcal{I}_i$, for $i=1,2$. For instance, the Poisson structure
    associated to $\Omega_{a}^{(1)}$ and $\mathcal{I}_1$, has the following matrix form:
    \begin{equation}
        J_{a,1}^{(1)} =
        -\frac{x}{6}
        \begin{pmatrix}
            0 & a-1 & a
            \\
            1-a & 0 & 0
            \\
            -a & 0 & 0
        \end{pmatrix}.
    \end{equation}
    The other Poisson structures can be derived analogously and their explicit
    expressions are rather cumbersome. For this reason, and as their computation 
    consists of a standard computer routine, we omit them. This ends the proof of the proposition.
\end{proof}

\begin{remark}
    We give some final remarks on \Cref{prop:sys3d}.
    \begin{itemize}
        \item It is possible to prove points \eqref{list:inv} and \eqref{list:ell}
            also with a different approach. Indeed, after noting that the invariant
            $\mathcal{I}_1$ is trivial, one can use it to reduce immediately to a 
            two-dimensional system. It is then  straightforward  to show that the system is variational with a Lagrangian of the form:
            \begin{equation}
                L=\frac{\mu(x)}{2}(x')^2-V(x,\mathcal{I}_1).
            \end{equation}
            Through E.\ Noether's theorem~\cite{Noether1918} one immediately gets the other 
            invariant, and the statements follow.
        \item Using the construction in point \eqref{list:poisson}, it is possible
            to find multiple Hamiltonian functions for the system~\eqref{syst3Daut}. 
            That is, we have:
            \begin{equation}
                \vec{{x}}' = J_{a,j}^{(i)}\grad H_{j}^{(i)},
            \end{equation}
            where:
            \begin{equation}
                \begin{gathered}
                H_{1}^{(1)} = \frac{6}{a^2(a-1)\tilde{\kappa}} \mathcal{I}_2,
                \quad
                H_{1}^{(2)} = \frac{6}{a^2(a-1)\tilde{\kappa}} \log \mathcal{I}_2,
                \\    
                H_{2}^{(1)} = -\frac{6}{a^2(a-1)\tilde{\kappa}} \mathcal{I}_1,
                \quad
                H_{2}^{(2)} =
                -\frac{6}{a^2(a-1)\tilde{\kappa}} \frac{\mathcal{I}_1}{\mathcal{I}_2}.
                \end{gathered}
            \end{equation}
    \end{itemize}
\end{remark}

\subsection{Projection of the autonomous limit in 2D} 

We conclude this section showing that we can restrict the three-dimensional autonomous 
system~\eqref{syst3Daut} to an autonomous system in 2D. Indeed, the autonomous 
versions of the transformations \eqref{injectionfg} and \eqref{invinj}  are 
\begin{equation}
\begin{aligned}
    x&=\frac{\tilde{\kappa }(f-1) }{a-1},\\
    y&=\frac{
    \left[\begin{gathered}
        f^2(\tilde{n}+\tilde{\alpha}-g)^2
        +a(\tilde{\alpha}-g)(\tilde{\kappa}-g)
        -f(\tilde{n}+\tilde{\alpha}-g)(\tilde{n}(a+1)
        \\
        +\tilde{\alpha}(a+1)+\tilde{\beta}+a \tilde{\gamma}-(a+1)g)          
    \end{gathered}\right]}{a \tilde{\kappa}},
   \\
    z&=\frac{(f-1)
    \left[\begin{gathered}
    f^2(\tilde{n}+\tilde{\alpha}-g)^2
    -f(\tilde{n}+\tilde{\alpha}-g)((a-1)\tilde{n} 
    \\
    +a g(g-\tilde{\alpha})+a\tilde{\alpha}+(a-1)\tilde{\gamma}-(a+1)g)    
    \end{gathered}\right]
    }{\tilde{\kappa }(a-1) f }
\end{aligned}
\label{eq:redaut}
\end{equation}
and
\begin{equation}
    f=\frac{(a-1) x}{\tilde{\kappa }}+1, 
    \quad
    g=\tilde{n}+\tilde{\alpha}+\frac{\tilde{\kappa }((\tilde{\kappa}+(a-1)x )z-ax(\tilde{n}+y))}{(a-1) x \left(\tilde{\kappa }-x\right)},
    \label{eq:sysredaout}
\end{equation}
respectively. We stress the fact that they enstablish a birational equivalence between $S_a$ and the affine plane.  

The associated system is as follows:
\begin{equation}
    \begin{aligned}
        f' &=\frac{( \tilde{\beta}- \tilde{\alpha}+ \tilde{\gamma}+2 g) f^2
        +(( \tilde{\alpha}- \tilde{\beta}) a-2(a+1) g+ \tilde{\alpha}- \tilde{\gamma}) f-a ( \tilde{\alpha}-2 g)}{(a-1) a},
        \\
        g'&=\frac{(\tilde{n}+ \tilde{\alpha}-g) (\tilde{n}+ \tilde{\beta}+ \tilde{\gamma}+g) f^2-a g ( \tilde{\alpha}-g)}{(a-1)a f }.
    \end{aligned}
\end{equation}
The system keeps integrability, because the invariant $\mathcal{I}_1$ upon
substitution of~\eqref{eq:redaut} becomes (up to an inessential multiplicative
factor and an additive constant):
\begin{equation}
    \begin{aligned}
    (a-1)a\tilde{\mathcal{I}}_1
    &=
    (g-\tilde{n}- \tilde{\alpha})(\tilde{n}+ \tilde{\beta}+ \tilde{\gamma}+g)f
    -(a+1)g^2
    \\
    &+(a(\tilde{\alpha}-\tilde{\beta})+ \tilde{\alpha}- \tilde{\gamma})g+\frac{ag (g - \tilde{\alpha})}{f},    
    \end{aligned}
    \label{eq:redint}
\end{equation}
which is a first integral for the system~\eqref{eq:sysredaout}. Moreover,
the first integral \eqref{eq:redint} is a Hamiltonian function for the 
system~\eqref{eq:sysredaout} with respect to the same symplectic
form as the original symplectic structure, see equation \eqref{systfgham}.

\section{Conclusions}
\label{sec:concl}

 In this paper, we studied the 3D system of first-order differential
equations \eqref{syst3D}. This system is constrained on an invariant hypersurface
\eqref{eq:St} obtained from the results of \cite{MinChen} on the
asymptotic expansion as $z\to\infty$ of the coefficients of ladder
operators for degenerate Jacobi unitary polynomials.

We showed in \Cref{sec:firstpar,sec:hypersurf} that this system can be restricted to a 2D system of first-order differential
equations in two different ways. In both cases, we were able to reduce
these systems to the standard Hamiltonian form of the sixth Painlev\'e
equation. Then, we turned to the study of an autonomous version of the 3D system
and found out that it admits many properties akin to the equations of
Painlev\'e type. This justifies \Cref{conj:3d} about
the non-autonomous 3D system.

From our perspective, the main ideas behind our work are the following two: 
\begin{enumerate}
    \item in some cases, it can be easier to derive higher-dimensional
        differential equations from the theory of orthogonal polynomials,
        together with appropriate invariant surfaces, rather than looking
        directly for 2D systems;
    \item the relation between the obtained higher-dimensional systems and
        the Painle\-v\'e equations can be obtained through    parametrisation of the invariant surface followed by the application of the geometric theory of Painlev\'e equations.
\end{enumerate}

However, both points raise several further open and very interesting general questions,
which we will outline in the following paragraphs.

First of all, for our result it is crucial that we are able to parametrise
the hypersurface. As far we are aware, this is a very complicated problem in
algebraic geometry named the \textit{rationality problem}, see \cite{RationalityProbl}. For smooth surfaces in $\C^3$, there is a classical criterion,
known as the \emph{Castelnuovo criterion}~\cite{Castelnuovo1939}, which characterises
the rationality of the surface. Practical algorithms to construct the parametrisation
were introduced only recently in \cite{Schicho1999}. In the case of singular 
surfaces we are not aware neither of the existence of a general rationality criteria, nor
of algorithms capable of producing a parametrisation. These two are clearly challenging
problems that could be addressed in the context of algebraic geometry and computational
algebra. 

Another open problem, is whether the construction of a space of
initial conditions for the general 3D system \eqref{syst3D} is possible, see  \Cref{conj:3d}. 
Indeed, despite the success obtained with
autonomous limits, the problem in three dimensions is much harder. 
The difficulty
of this problem can be understood also by considering the step that we made
to obtain a parameterisation in \Cref{sec:hypersurf}. The choice of the 
``good'' compactification therein made was not evident at all. The same consideration
applies to the choice of compactification to make for the full system. From some
preliminary computations, we assess that choosing different compactifications
presents a sort of trade-off between the number of charts and the number of singularities.
To be more specific, compactifying the system~\eqref{syst3D} to
$\left(\Pj^1\right)^{\times3}$ gives rise
to a ``less singular'' system in comparison to the choice of compactifying it to $\Pj^{3}$. The picture is also complicated by the presence of indeterminacies of different
natures, i.e.\ points and potentially intersecting lines, rather than just points as in the two-dimensional
case. In \Cref{fig:sinrD} are depicted the configurations of the indeterminacies of the system with respect two different compactifications. Another complication is a consequence of the fact that the counterparts of minimal surfaces for threefolds are in general singular \cite{Reid}.
A complete geometric study of these singularities, their resolution, and the
precise relationship with the ones of the two-dimensional system will be the
object of further research. 

\begin{figure}[htb]
    \centering
    \begin{tikzpicture}[scale=1.7]
            
    \coordinate (p1) at (-1,-1,-1);
    \coordinate (p2) at (-1,1,1);
    \coordinate (p3) at (1,-1,1);
    \coordinate (p4) at (1,1,-1);
    
    \coordinate (q1) at (-1,-1,1);
    \coordinate (q2) at (-1,1,-1);
    \coordinate (q3) at (1,-1,-1);
    \coordinate (q4) at (1,1,1);
            
    \draw[thick,dashed] (p1)--(q3) -- (p3) -- (q1) -- (p2) -- (q2);
    \draw[thick,dashed] (p2) -- (q4) -- (p3);

    \draw[magenta,ultra thick] (q1) -- (p1);
    \draw[magenta,ultra thick] (q2)-- (p4) -- (q3);
    \draw[magenta,ultra thick] (p1) -- (-1,0.45,-1);
    \draw[magenta,ultra thick] (-1,0.55,-1) -- (q2);
    \draw[magenta,ultra thick] (q4) -- (p4);
    \draw[magenta,ultra thick] (-1/2,1,1) -- (-1/2,1,-1);
    \draw[magenta,ultra thick] (-1/2+0.3,1,1) -- (-1/2+0.3,1,-1);
    \draw[magenta,ultra thick] (-1,0,-1) -- (-1,0,1);

    \draw [teal,ultra thick] plot [smooth, tension=0.7] coordinates { (1,-1,1/2) (1,0,1/2) (p4)};
    \draw [teal,ultra thick] plot [smooth, tension=0.5] coordinates { (1,2/3,1) (1,1/2,1/2) (p4)};

    \node[anchor=south west] at (p1) {\small$(0,\infty,0)$};
    \draw[fill = black,draw=black] (p1) circle (1pt);
    \node[black,anchor=south east] at (p2) {\small$(0,0,\infty)$};
    \draw[fill = black,draw=black] (p2) circle (1pt);
    \node[black,anchor=north west] at (p3) {\small$(\infty,0,0)$};
    \draw[fill = black,draw=black] (p3) circle (1pt);
    \node[black,anchor=south west] at (p4) {\small$(\infty,\infty,\infty)$};
    \draw[fill = black,draw=black] (p4) circle (1pt); 
    \node[black,anchor=north east] at (q1) {\small$(0,0,0)$};
    \draw[fill = black,draw=black] (q1) circle (1pt);
    \node[black,anchor=south west] at (q2) {\small$(0,\infty,\infty)$};
    \draw[fill = black,draw=black] (q2) circle (1pt);
    \node[black,anchor=north west] at (q3) {\small$(\infty,\infty,0)$};
    \draw[fill = black,draw=black] (q3) circle (1pt);
    \node[black,anchor=south east] at (q4) {\small$(\infty,0,\infty)$};
    \draw[fill = black,draw=black] (q4) circle (1pt);
\end{tikzpicture}  
\quad
\tdplotsetmaincoords{170}{150}
\begin{tikzpicture}[scale=1.75,tdplot_main_coords]    
    \coordinate (e0) at (-1,1,-1);
    \coordinate (e1) at (1,1,1);
    \coordinate (e2) at (1,-1,-1);
    \coordinate (e3) at (-1,-1,1);

    \draw[dashed] (e1) -- (e2);
    \draw[dashed] (e0) --(e3) -- (e1);
    \draw[dashed] (e1) -- (e0);

    \newcommand{\td}{0.6}
    \coordinate (p0) at (1-2*\td, -1, -1+2*\td);
    \draw[cyan, ultra thick] (e3) -- (e2);
    \draw[cyan, ultra thick] (e2) -- ($0.65*(e2)+0.35*(e0)$);
    \draw[cyan, ultra thick] ($0.6*(e2)+0.4*(e0)$) -- (e0);
    \draw[cyan, ultra thick] (e0) -- (p0);
    \draw[cyan, ultra thick] (p0) --(e1);

    \coordinate (p1) at (-1, {1-2*\td}, {-1+2*\td});
    \draw[cyan, ultra thick] (e2) -- ($ 0.54*(e2) + 0.46*(p1) $);
    \draw[cyan, ultra thick] ($ 0.49*(e2) + 0.51*(p1) $) -- (p1);

    \node[black,anchor=south west] at (e0) {\small$e_0$};
    \draw[fill = black,draw=black] (e0) circle (1pt);    
    \node[black,anchor=south east] at (e1) {\small$e_1$};
    \draw[fill = black,draw=black] (e1) circle (1pt);    
    \node[black,anchor=north east] at (e2) {\small$e_2$};
    \draw[fill = black,draw=black] (e2) circle (1pt);    
    \node[black,anchor=north west] at (e3) {\small$e_3$};
    \draw[fill = black,draw=black] (e3) circle (1pt);
\end{tikzpicture}
    \caption{Configurations of the indeterminacies of the system with respect the compactifications $\left(\Pj^1\right)^{\times3}$ and $\Pj^3$.}
    \label{fig:sinrD}
\end{figure}

In general, we hope that this kind of research on the geometry of the 
three-di\-men\-sion\-al systems could pave the way to a theory similar to the one  of 
Okamoto-Sakai for two-dimensional systems. Indeed, a classification of Painlevé-like
equations in three or more components via spaces of initial conditions is currently lacking, 
though
sporadic examples are known in the literature, see for 
instance~\cite{noumiyamada1998,SuzukiFuji2012,FujiSuzuki2008,Cosgrove2000,Cosgrove2006Stud,Cosgrove2006JPhysA}.
The reason for 
this absence is that systems with three or more components exhibit a more
complicated and varied behavior than two-dimensional ones, as highlighted by
Chazy's equation, which possesses a movable singularity that serves as a
natural boundary for its solutions \cite{Chazy1909,Chazy1911}. In
particular, we expect that a geometric description of Painlevé-like
equations in higher dimensions may be used to classify differential 
equations reducible to Painlev\'e equations. For instance, treating equations 
similar to the $\sigma$-forms of the standard Painlevé equations. Looking at 
appropriate autonomous limits, as we did in \Cref{sec:autlim}, might help in the 
classification problem. 

We conclude by mentioning that another possible research direction is to 
determine if phenomena similar to the ones we unveiled in this paper can 
arise also in the case of discrete systems. For instance, given that the
notion of Darboux polynomials, and hence of Darboux surfaces, is available
also in the discrete setting~\cite{Celledoni_etal2019arXiv}, it would be 
interesting to see whether or not there are 3D non-autonomous discrete systems 
that can be put in relation with discrete Painlevé equations through restriction 
on Darboux surfaces.

\SkipTocEntry\section*{Acknowledgements}

 GF would like to thank Gleb Pogudin (Institute Polytechnique de Paris) for illuminating discussions.

 MG is member of the group GNSAGA of the Istituto
Nazionale di Alta Matematica   (INdAM)  from 2024.
 
\SkipTocEntry\section*{Statements and Declarations}

\subsection*{Competing interests}

On behalf of all authors, the corresponding author states that there is no conflict of interest.  

\subsection*{Funding}
 
GF acknowledges  support of the grant entitled ``Geometric approach to  ordinary differential equations'' (01.03.2023-29.02.2024) funded under New Ideas 3B  competition within Priority Research Area III    implemented under the “Excellence Initiative – Research University” (IDUB) Programme  (University of Warsaw) (nr  01/IDUB/2019/94). 
The work of GF is also partially supported by the project PID2021-124472NB-I00  funded by MCIN/AEI/10.13039/501100011033 and by  "ERDF A way of making Europe".

GG's research was partially supported by GNFM of the Istituto
Nazionale di Alta Matematica (INdAM), and the research project Mathematical Methods
in Non-Linear Physics (MMNLP) by the Commissione Scientifica Nazionale – Gruppo
4 – Fisica Teorica of the Istituto Nazionale di Fisica Nucleare (INFN). 

AS acknowledges the support of Japan Society for the Promotion of Science (JSPS) KAKENHI grant number 24K22843.

\subsection*{Author's contributions}
All authors equally contributed to the study conception, performed relevant computations  and analysed the results.  The first draft of the manuscript was written jointly and all authors commented on previous versions of the manuscript and edited it. All authors read and approved the final manuscript.

\subsection*{Data availability statement}
The manuscript has no associated or generated data.

\subsection*{Ethical conduct.}
The authors and the manuscript comply with all ethical standards and guidelines. The authors  provided correct citations as needed and have no conflict of interest.

\subsection*{Compliance with ethical standards}
The manuscript is not  submitted to more than one journal for simultaneous consideration. The submitted work is original and has not been published elsewhere in any form or language.  

\appendix

\section{Standard realisation of the space of initial conditions for the sixth Painlev\'e equation}
\label{appendix:standardP6}

In this appendix we recall the construction of the minimal space of initial conditions for $\pain{VI}$ coming from the standard model of Sakai surfaces of type $\D_4^{(1)}$, see \cite{KNY}.
Most of the contents were also provided in \cite[Appendix]{hypergeometric}, but we include this in order to make the present paper self-contained.
 
\subsection{Surfaces}

The usual construction of the space of initial conditions for system \eqref{systfg} proceeds as follows.
For the compactification, take $\p^1_{[f_0:f_1]} \times \p^1_{[g_0:g_1]}$ with $f,g$ providing an affine chart via $[1:f]=[f_0:f_1]$, $[1:g]=[g_0:g_1]$.
The standard atlas for $\p^1_{[f_0:f_1]} \times \p^1_{[g_0:g_1]}$ is then provided by the four charts $(f,g)$, $(F,g)$, $(f,G)$, $(F,G)$ defined by
\begin{equation}
    [f_0:f_1] = [1:f]=[F:1], \quad [g_0:g_1] = [1:g]=[G:1],  
\end{equation}
so $F = 1/f$ and $G=1/g$ on the overlaps of coordinate patches.
The eight indeterminacy points of the system are given in \eqref{eq:indetapp}, where we use arrows to indicate infinitely near points. 
In this case the points $b_4$ and $b_8$ lie on the exceptional divisors of the blow ups of $b_3$ and $b_7$ respectively. We express their locations according to the convention established in \Cref{subsec:spaceofinitialconditionssys3.2,subsec:soichyper} and given explicitly in equation \eqref{eq:coordblowp}.
\begin{equation}\label{eq:indetapp}
\begin{aligned}
    &b_1 : (F,g) = (0, -a_2),   & \quad   & \quad     &&b_2 : (F,g) = (0,-a_1 -a_2), \\
    &b_3 : (f,G) = (t, 0)      &\leftarrow& \quad    &&b_4 : (u_3,v_3) 
    = (t a_0, 0), \\  
    &b_5 : (f,g) = (0, 0),      & \quad   &  \quad    &&b_6 : (f,g) = (0,a_4), \\
    &b_7 : (f,G) = (1,0)       &\leftarrow& \quad     &&b_8  : (u_7,v_7) 
    = (a_3, 0).
\end{aligned}
\end{equation}

Blowing up these points, and denoting by $L_i$ the exceptional divisors over the point $b_i$, for $i=1,\ldots,8$, we get for fixed $\boldsymbol{a}=(a_0,a_1,a_2,a_3,a_4)$ the surface $X^{\operatorname{KNY}}_{t}$, as shown in the right hand side of  \Cref{fig:surface:systfg}. 
\begin{figure}[htb]
\centering
    \begin{tikzpicture}[basept/.style={circle, draw=black!100, fill=black!100, thick, inner sep=0pt,minimum size=1.2mm},scale=0.9]
    \begin{scope}[xshift = -4cm]
    \draw[thick] 
    (-2,-2)--(-2,2)
    (-2.5,-1.5)--(2.5,-1.5) 
    (2,-2)--(2,2) 
    (-2.5,1.5)--(2.5,1.5)
      ;     
      \node[left] at (-2.5,-1.5) {\tiny $g=0$};
      \node[left] at (-2.5,1.5) {\tiny $G=0$};     
      \node[below] at (-2,-2) {\tiny $f=0$};
      \node[below] at (2,-2) {\tiny $F=0$};

	 \node (b1) at (2,-1) [basept,label={[xshift=-7pt, yshift = -3 pt] \small $b_{1}$}] {};
	 \node (b2) at (2,+.5) [basept,label={[xshift=-7pt, yshift = -3 pt] \small $b_{2}$}] {};
	 \node (b3) at (-.8,+1.5) [basept,label={[xshift=-3pt, yshift = -18 pt] \small $b_{3}$}] {};
	 \node (b4) at (-.8,+2.5) [basept,label={[xshift=-7pt, yshift = -3 pt] \small $b_{4}$}] {};
	 \node (b5) at (-2,-1.5) [basept,label={[xshift=-7pt, yshift = -3 pt] \small $b_{5}$}] {};
	 \node (b6) at (-2,+0) [basept,label={[xshift=-7pt, yshift = -3 pt] \small $b_{6}$}] {};
	 \node (b7) at (+.8,+1.5) [basept,label={[xshift=-3pt, yshift = -18 pt] \small $b_{7}$}] {};
	 \node (b8) at (+.8,+2.5) [basept,label={[xshift=-7pt, yshift = -3 pt] \small $b_{8}$}] {};

    \draw[thick, ->] (b4) -- (b3);
    \draw[thick, ->] (b8) -- (b7);
    \node at (0,-2.5) {$\mathbb{P}^1\times \mathbb{P}^1$};

    \end{scope}

 \draw [->] (.8,0)--(-.8,0) node[pos=0.5, below] {$\text{Bl}_{b_1\dots b_{8}}$};

     \begin{scope}[xshift = +4.5cm]
    \draw[thick, blue] 
    (-2,-2)--(-2,2)
    (2,-2)--(2,2) 
    (-2.5,1.5)--(2.5,1.5)
      ;     
      \node[left, blue] at (-2.5,1.5) {\tiny $D_2$};     
      \node[below, blue] at (-2,-2) {\tiny $D_4$};
      \node[below, blue] at (2,-2) {\tiny $D_1$};

      \draw[red, thick] (1.25,-1) -- (2.75,-1) node[pos=1,xshift=+7pt,yshift=0pt] {\small $L_1$};
      \draw[red, thick] (1.25,+.5) -- (2.75,+.5) node[pos=1,xshift=+7pt,yshift=0pt] {\small $L_2$};
      \draw[blue, thick] (-.8,+1) -- (-.8,3) node[pos=0,xshift=-7pt,yshift=0pt] {\tiny $D_0$};
      \draw[red, thick] (-0.25,+2.5) -- (-1.5,2.5) node[pos=1,xshift=-10pt,yshift=0pt] {\small $L_4$};
      \draw[red, thick] (-2.75,-1.5) -- (-1.25,-1.5) node[pos=0,xshift=-7pt,yshift=0pt] {\small $L_5$};
      \draw[red, thick] (-2.75,0) -- (-1.25,0) node[pos=0,xshift=-7pt,yshift=0pt] {\small $L_6$};
      \draw[blue, thick] (+.8,+1) -- (+.8,3) node[pos=0,xshift=-7pt,yshift=0pt] {\tiny $D_3$};
      \draw[red, thick] (0.25,+2.5) -- (1.5,2.5) node[pos=1,xshift=+7pt,yshift=0pt] {\small $L_8$};
    \node at (0,-2.5) {$X^{\operatorname{KNY}}_{t}$};

    \end{scope}

\end{tikzpicture}
\caption{Graphic depiction of the construction of $X^{\operatorname{KNY}}$. In blue $(-2)$-curves, in red $(-1)$ curves.
}
\label{fig:surface:systfg}
\end{figure}
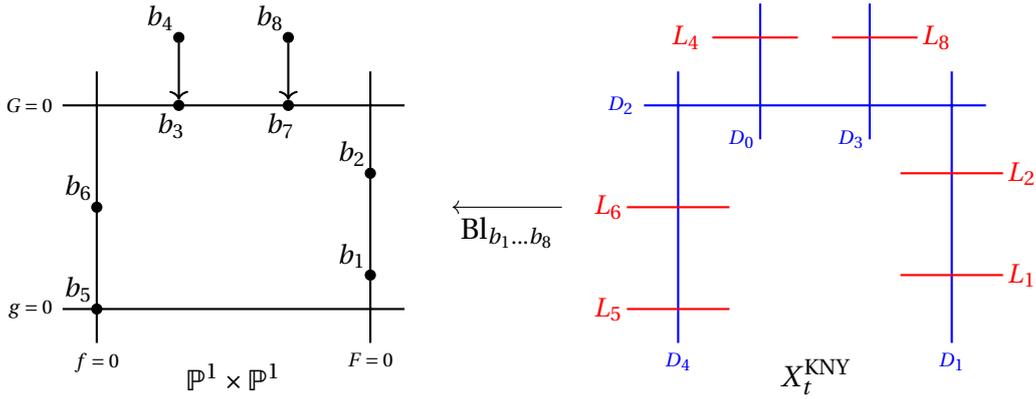

The unique effective anti-canonical divisor of the surface $X^{\operatorname{KNY}}_t$ has irreducible components $D_{0,t},D_{1,t}, D_{2,t} ,D_{3,t},D_{4,t}$,
where $D_{0,t}$ is the proper transform of $L_3$, $D_{1,t}$ is the proper transform of $\Set{f=\infty}$, $D_{2,t}$ is the proper transform of $\Set{g=\infty}$, $D_{3,t}$ is the proper transform of $L_7$ and $D_{4,t}$ is the proper transform of $\Set{f=0}$.  
This is the pole divisor of the rational two-form $\omega_t$ defined in the $(f,g)$-chart by $\omega_t = \frac{d_tf \wedge d_tg}{f}$.
We then get a bundle $E^{\operatorname{KNY}}$ over $B$ with fibre, over the point $t\in B$, the open surface $E^{\operatorname{KNY}}_t = X^{\operatorname{KNY}}_{t}\setminus D^{\operatorname{KNY}}_t$, where $D^{\operatorname{KNY}}_t = \bigcup_{i=0}^4D_{t,i}$ on which the system of differential equation \eqref{systfg} defines a uniform foliation.
Summarising all of this, we have the following.
\begin{lemma}
    We have a space of initial conditions for \eqref{systfg} in the sense of Definition \ref{def:spaceofinitialconditions} with
    \begin{itemize}
        \item $B=\C \setminus \Set{0,1} $ being the independent variable space of \eqref{systfg};
        \item $X=X^{\operatorname{KNY}}$ being the family of surfaces parametrised by $  B$, where $\pi_X^{-1}(t)=X_t^{\operatorname{KNY}}$, for $t\in B$;
        \item $D=D^{\operatorname{KNY}}$ being the locus in $X$ cut out by $D^{\operatorname{KNY}}_t$ in $X_t^{\operatorname{KNY}}$;
        \item $\varphi_X$ being the composition $\varphi_X|_{X_t^{\operatorname{KNY}}} : X^{\operatorname{KNY}}_t \rightarrow \p^1 \times \p^1 \dashrightarrow \C^2$, where the first map is the blow up of $b_1,\dots,b_8$ and the second corresponds to taking the affine coordinate chart $(f,g)$;
        \item $E=E^{\operatorname{KNY}}$ being the family of open surfaces $E^{\operatorname{KNY}}_t = X_t^{\operatorname{KNY}}\setminus D^{\operatorname{KNY}}_t$ over $B$, for $t\in B$.
    \end{itemize}
\end{lemma}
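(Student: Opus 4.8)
The plan is to verify directly that the data $(X^{\operatorname{KNY}}, D^{\operatorname{KNY}})$, together with the maps described, satisfies each condition of \Cref{def:spaceofinitialconditions}. The algebro-geometric conditions follow almost immediately from the blow-up construction; the only substantive point is the uniformity of the induced foliation on $E^{\operatorname{KNY}}$, which rests on the Painlev\'e property of $\pain{VI}$.

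First I would address the structural conditions. By construction $X^{\operatorname{KNY}}$ is obtained from the trivial bundle $(\p^1 \times \p^1)\times B$ by blowing up in each fibre the eight points $b_1,\dots,b_8$ of \eqref{eq:indetapp}. Since these depend holomorphically on $t\in B$ and are always centred at smooth points (either distinct, or infinitely near along a previously introduced exceptional curve), the centres are smooth subvarieties of the total space, so $X^{\operatorname{KNY}}$ is a smooth variety with proper projection $\pi_X$ whose fibres $X_t^{\operatorname{KNY}}$ are smooth rational projective surfaces. The map $\varphi_X$, being on each fibre the composite of the blow-down to $\p^1\times\p^1$ and passage to the affine chart $(f,g)$, is birational, as are its fibrewise restrictions $(\varphi_X)|_{X_t^{\operatorname{KNY}}}$ for every $t$.

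Next I would identify $D^{\operatorname{KNY}}$ as the pole divisor of the two-form $\omega_t = \frac{d_t f\wedge d_t g}{f}$, namely the five proper transforms $D_{0,t},\dots,D_{4,t}$ listed above, which constitute the unique effective anti-canonical divisor and form the $\D_4^{(1)}$ configuration of $(-2)$-curves, see \Cref{fig:surface:systfg} and \Cref{subsec:haplhen}. This is manifestly a non-empty codimension-one closed subvariety in each fibre, so $D^{\operatorname{KNY}}\hookrightarrow X^{\operatorname{KNY}}$ is a closed immersion with $D_t^{\operatorname{KNY}}\subset X_t^{\operatorname{KNY}}$ of the required type.

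The main step is the final condition, that the pullback of system \eqref{systfg} defines a uniform foliation of $E^{\operatorname{KNY}} = X^{\operatorname{KNY}} \setminus D^{\operatorname{KNY}}$, and I would split it in two. First, a direct computation in the charts introduced via the convention \eqref{eq:coordblowp} verifies that blowing up $b_1,\dots,b_8$ removes every indeterminacy of the rational vector field away from $D^{\operatorname{KNY}}_t$, so that the flow defines a nonsingular foliation of each $E_t^{\operatorname{KNY}}$ transverse to the fibres of $\pi_X$. Second, and this is the genuine obstacle, one must show that $D_t^{\operatorname{KNY}}$ is inaccessible and that leaves can be lifted along paths in $B$; as emphasised in \Cref{rem:inaccessible}, regularisation alone does not guarantee this. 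Here I would invoke the Painlev\'e property of $\pain{VI}$: since every solution of \eqref{systfg} continues to a single-valued meromorphic function on the universal cover of $B$, the associated leaves never reach the vertical $(-2)$-curves comprising $D_t^{\operatorname{KNY}}$ and can be continued along any path with prescribed starting point. This yields the two defining properties of a uniform foliation and completes the verification, recovering the classical construction recalled in \cite{KNY}.
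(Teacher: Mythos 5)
Your proposal is correct and takes essentially the same route as the paper: there the lemma is stated as a summary of the classical construction recalled immediately before it (blow up of $b_1,\dots,b_8$ in each fibre, identification of $D^{\operatorname{KNY}}_t$ as the unique effective anti-canonical divisor, i.e.\ the pole divisor of $\omega_t$), with regularisation checked in charts and the uniform foliation on $E^{\operatorname{KNY}}$ resting on the classical Painlev\'e property of $\pain{VI}$, exactly the two-step split you make. Your explicit separation of the inaccessibility of $D_t^{\operatorname{KNY}}$ from mere removal of indeterminacies matches the caveat the paper itself records in \Cref{rem:inaccessible}.
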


\subsection{Symplectic atlas and global Hamiltonian structure} \label{appsubsec:globalhamstructure}
If we introduce coordinates as usual on the blow up $\Bl_{b_1,\ldots,b_8}(\Pj^1\times\Pj^1)$, the resulting atlas for $E$ is neither Hamiltonian nor symplectic, which can be verified by direct calculation, see \Cref{subsec:HAMILTONIANSTRUC}.
We use instead the symplectic atlas constructed by Takano et al in \cite{takano1}. 
Rather than starting from the system \eqref{systfg} in coordinates $(f,g)$ and compactifying to $\p^1 \times \p^1$, this requires first changing variables to $(q,p) = (f,g/f)$.
The resulting system is, up to a relabelling of parameters, the Hamiltonian form of $\pain{VI}$ due to Okamoto \cite{OkamotoHams1}, explicitly 
\begin{equation}
    \label{systqpapp}
q'= \frac{\partial H^{\operatorname{Ok}}}{\partial p} , \quad p' = -\frac{\partial H^{\operatorname{Ok}}}{\partial q},
\end{equation}
where:
\begin{equation}
    \begin{aligned}
    H^{\operatorname{Ok}} &= \frac{q(q-1)(q-t)p^2  + \left( (a_1+2a_2) q (q-1) + a_3 (t-1)q + a_4 t (q-1) \right)p}{t(t-1)}
    \\
    &
     + \frac{a_2(a_1+1)q}{t(t-1)}.
    \end{aligned}
\end{equation}
We then have the following symplectic atlas for $E^{\operatorname{KNY}}$, which up to relabeling of the parameters is the same as that constructed by Takano et al \cite{takano1}.
\begin{proposition} \label{prop:takanosymplecticatlas}
The bundle    $E^{\operatorname{KNY}}$ admits the symplectic atlas $\mathcal{U} = \Set{ U_0,\ldots, U_5}$,
    where $U_i$ are coordinate patches with coordinates $(x_i,y_i;t)$, for $i=0,\ldots,5$, with gluing and relation to the original variables $f,g$ given as follows:
    \begin{equation}
        \begin{aligned}
            x_0 &= q = f, &\quad &y_0 = p = \frac{g}{f}, \\
            x_1 &= f ( g+a_2) = a_2 x_0 + y_0 , &\quad &y_1 = \frac{1}{f} = \frac{1}{x_0}, \\
            x_2 &= f ( g+a_1 + a_2) = (a_1+a_2) x_0 + y_0 , &\quad &y_2 = \frac{1}{f} = \frac{1}{x_0}, \\
            x_3 &= \frac{g\left( a_0 f + t g - f g \right)}{f^2} = \frac{y_0 \left(a_0 x_0^2 - x_0 y_0+ t y_0 \right)}{x_0^4}, &\quad &y_3 = \frac{f}{g} = \frac{x_0^2}{y_0}, \\
            x_4 &= \frac{g(a_4-g)}{f} = \frac{y_0 \left(a_4 x_0 -  y_0 \right)}{x_0^3}, &\quad &y_4 = \frac{f}{g} = \frac{x_0^2}{y_0}, \\
            x_5 &= \frac{g\left(g + a_3 f- f g \right)}{f^2} = \frac{y_0 \left(a_3 x_0^2 + y_0 - x_0 y_0  \right)}{x_0^4}, &\quad &y_5 = \frac{f}{g} = \frac{x_0^2}{y_0}.
        \end{aligned}
    \end{equation}
\end{proposition}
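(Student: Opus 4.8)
The plan is to verify directly that the data $\mathcal{U} = \Set{U_0,\ldots,U_5}$ meets the two requirements of \Cref{def:symplecticatlas}: that the six charts cover the open surface $E^{\operatorname{KNY}}_t = X^{\operatorname{KNY}}_t \setminus D^{\operatorname{KNY}}_t$, and that on each of them the anti-canonical form $\omega_t = \frac{d_t f \wedge d_t g}{f}$ is written in Darboux form $\omega_t = d_t x_i \wedge d_t y_i$. Since the atlas is, up to the parameter relabelling \eqref{rootvars}, the one of Takano et al.\ \cite{takano1}, the proof is essentially a verification, and the most efficient route is to phrase everything relative to the base chart $U_0$.

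First I would treat $U_0$. Substituting $x_0 = q = f$, $y_0 = p = g/f$ into $\omega_t$ and using $g = x_0 y_0$ gives $\omega_t = \frac{d_t x_0 \wedge (y_0\, d_t x_0 + x_0\, d_t y_0)}{x_0} = d_t x_0 \wedge d_t y_0$, so $U_0$ is symplectic; this also records that $(q,p)$ are the Okamoto canonical coordinates in which \eqref{systqpapp} is Hamiltonian. For each remaining chart I would then check that the transition $(x_0,y_0)\mapsto(x_i,y_i)$ is a fibrewise symplectomorphism, i.e.\ that its Jacobian with $t$ held fixed equals $1$, which by the change-of-variables formula is equivalent to $\omega_t = d_t x_i \wedge d_t y_i$. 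The transition maps are exactly the canonical coordinates adapted to the blow-ups of $b_1,\dots,b_8$: for a centre at infinity one uses coordinates of the shape $(1/q,\ q^2 p + \text{shift})$, which are manifestly canonical since $d_t(1/q)\wedge d_t(q^2 p) = d_t q \wedge d_t p$ and the additive shifts (encoding the parameters $a_j$) do not affect the wedge. This reduces the Darboux claim to a finite list of elementary wedge-product computations, one per chart, which can be done by hand or with a computer algebra system.

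The remaining, genuinely geometric, point is to show that $\Set{U_0,\dots,U_5}$ covers $E^{\operatorname{KNY}}_t$. Here I would use the description of the surface in \Cref{fig:surface:systfg}: the inaccessible divisor $D^{\operatorname{KNY}}_t = D_{t,0}\cup\cdots\cup D_{t,4}$ is the union of the five $(-2)$-curves, and its complement $E^{\operatorname{KNY}}_t$ is covered by the affine chart $U_0$ together with neighbourhoods of the accessible $(-1)$-curves produced by the blow-ups of $b_1,\dots,b_8$. I would verify, chart by chart, that each such $(-1)$-curve (minus its intersections with $D^{\operatorname{KNY}}_t$) lies in the domain of one of $U_1,\dots,U_5$ — identifying it as the locus $\Set{y_i=0}$ or $\Set{x_i=0}$ in the appropriate chart — and that the listed coordinate expressions are holomorphic and biregular on the overlaps, so that no accessible point is omitted.

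The main obstacle I anticipate is precisely this covering verification together with the bookkeeping of blow-up conventions: one must track how the infinitely near points $b_4,b_8$ sit on the exceptional curves of $b_3,b_7$ (cf.\ \eqref{eq:indetapp}), identify which proper transforms become the $D_{t,i}$ and which remain accessible, and confirm that the six Darboux charts leave no gap in $E^{\operatorname{KNY}}_t$. By contrast, the symplectic property in each individual chart, once the correct blow-up-compatible canonical coordinates are fixed, is a routine Jacobian check, and the only subtlety in transferring Takano's construction to the KNY normalisation is the parameter relabelling, which is purely bookkeeping.
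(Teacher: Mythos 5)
The paper itself contains no proof of this proposition: the atlas is simply imported from Takano et al.\ \cite{takano1} (``up to relabeling of the parameters''), so your direct verification is necessarily a different route, and its architecture is the right one — check that $U_0$ is canonical for $\omega_t$, check that each gluing is a fibrewise symplectomorphism (Jacobian $1$ at fixed $t$), and check that the six charts cover $E^{\operatorname{KNY}}_t$. Your $U_0$ computation is correct, and identifying the covering statement as the genuinely geometric part, where the bookkeeping of $b_1,\dots,b_8$ and of which proper transforms become the $D_{t,i}$ lives, is also correct.

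However, three concrete problems remain. First, a sign: $d_t(1/q)\wedge d_t(q^2p)=-\,d_tq\wedge d_tp$, not $+\,d_tq\wedge d_tp$; it is the proposition's ordering $(x_1,y_1)=\bigl(f(g+a_2),\,1/f\bigr)=\bigl(q^2p+a_2q,\,1/q\bigr)$ that yields $d_tx_1\wedge d_ty_1=+\,d_tq\wedge d_tp$. Second, the shape $\bigl(q^2p+\text{shift},\,1/q\bigr)$ describes only $U_1,U_2$: the charts $U_3,U_4,U_5$ sit over $b_3,b_4$, $b_6$, $b_7,b_8$, have $y_i=f/g=1/p$ with $x_i$ quadratic in $p$ (e.g.\ $x_3=a_0p+tp^2-qp^2$), and the gluing for $U_3$ is explicitly $t$-dependent. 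Each still satisfies $d_tx_i\wedge d_ty_i=d_tq\wedge d_tp$, but by separate computations which your ``shift'' argument does not cover; and the $t$-dependence of the $U_3$ transition is precisely the phenomenon (cf.\ \eqref{eq:hamiltonianscorrection}) that makes the symplectic-atlas notion nontrivial, so it should be addressed rather than subsumed under a uniform pattern. Third, and most importantly, a careful execution of your own plan would show that the statement cannot be verified as printed: the expressions of $(x_i,y_i)$ in terms of $(x_0,y_0)$ are inconsistent with $y_0=p=g/f$. For instance $f(g+a_2)=x_0^2y_0+a_2x_0\neq a_2x_0+y_0$, and for the printed $U_3$ formulas one computes $d_tx_3\wedge d_ty_3=x_0^{-2}\,d_tx_0\wedge d_ty_0$, so those charts fail the Darboux check — the printed right-hand sides are what one obtains from $y_0=fg$, a slip in the statement. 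The verification must therefore be run on the $(f,g)$-expressions (the first equalities), which do pass; your step of confirming that ``the listed coordinate expressions are holomorphic and biregular on the overlaps'' would not go through but would instead uncover this inconsistency, and a correct proof must say so and repair the gluing formulas accordingly.
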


\begin{corollary} \label{cor:HamiltonianstructureKNY}
    The system on $E^{\operatorname{KNY}}$ extended from \eqref{systfg} has a global Hamiltonian structure with respect to the two-form $\omega^{\operatorname{KNY}}_t$ on the fibre given in the original coordinates $(f,g)$ by  
    \begin{equation}
        \omega^{\operatorname{KNY}}_t = \frac{d_t f\wedge d_t g}{f}. 
    \end{equation}
     This is defined by the two-form $\Omega^{\operatorname{KNY}}$ on $E^{\operatorname{KNY}}$ written in each chart $U_i \in \mathcal{U}$ of the symplectic atlas from Proposition \ref{prop:takanosymplecticatlas} as
     \begin{equation}
         \Omega^{\operatorname{KNY}} = dx_i \wedge dy_i + dH_i \wedge dt,
     \end{equation}
     where 
     \begin{equation}
        \begin{aligned}
         H_0(x_0,y_0;t) &= \frac{x_0 (x_0-1)(x_0-t)y_0^2 +a_2(a_1+a_2)x_0}{t(t-1)}
         \\
         & + \frac{\left( x_0^2(a_1+2a_2)+ x_0 t (a_3+a_4) -x_0(a_1+2a_2-a_3)- t a_4\right) y_0}{t(t-1)},
        \end{aligned}
     \end{equation}
     and the remaining $H_i$ are determined according to Lemma \ref{symplecticlemma}.
\end{corollary}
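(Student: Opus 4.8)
The plan is to deduce the global Hamiltonian structure from the local one on the distinguished chart $U_0$ together with the gluing result of \Cref{symplecticlemma}, exploiting that the atlas $\mathcal{U}$ of \Cref{prop:takanosymplecticatlas} is symplectic. Since that proposition already guarantees $\omega_t = d_t x_i \wedge d_t y_i$ in every chart, the structure functions $F_U$ are all identically $1$, so each chart transition satisfies the hypothesis \eqref{symplecticF} of \Cref{symplecticlemma} with $F_1 = F_2 = 1$, and the lemma applies verbatim on every overlap.

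First I would establish the base case on $U_0$. Writing $(x_0,y_0) = (q,p) = (f, g/f)$, one has $g = pq$, hence $d_t g = p\, d_t f + f\, d_t p$, so that $\omega^{\operatorname{KNY}}_t = \tfrac{d_t f \wedge d_t g}{f} = d_t q \wedge d_t p$ and $U_0$ is a canonical chart for $\omega_t$. The same change of variables carries the KNY system \eqref{systfg} into the Okamoto Hamiltonian form \eqref{systqpapp}, and a direct (machine-assisted) computation identifies $H^{\operatorname{Ok}}$ with the function $H_0$ displayed in the statement, up to the relabelling of parameters recorded just before \eqref{systqpapp} and an inessential additive function of $t$. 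This furnishes a local Hamiltonian structure on $U_0$ with $F_{U_0} = 1$, which is the input required by \Cref{symplecticlemma}.

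Next I would propagate this structure to the remaining charts. Starting from $U_0$ and following the gluing maps of \Cref{prop:takanosymplecticatlas}, I would apply \Cref{symplecticlemma} to each transition $\varphi \colon (x_i,y_i;t) \mapsto (x_{i+1},y_{i+1};t)$: the condition \eqref{symplecticF} holds because $\mathcal{U}$ is symplectic, so the lemma produces a Hamiltonian $H_{i+1}$, unique up to a function of $t$, together with the two-form identity \eqref{twoformsF}. Fixing each additive ambiguity by compatibility on overlaps then assembles a single two-form $\Omega^{\operatorname{KNY}} = dx_i \wedge dy_i + dH_i \wedge dt$, well defined on all of $E^{\operatorname{KNY}}$, while the lemma simultaneously certifies that the system reads $x_i' = \partial_{y_i} H_i$, $y_i' = -\partial_{x_i} H_i$ in each chart.

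The main obstacle, and the only point requiring genuine care, will be verifying that the resulting $H_i$ are \emph{holomorphic} on the open surface $E_t = X_t \setminus D_t$ rather than merely rational, equivalently that $\Omega^{\operatorname{KNY}}$ has poles only along $D_t$. The function $H_0$ is singular precisely at the indeterminacy points $b_1,\ldots,b_8$, and one must check in the blow-up charts \eqref{eq:coordblowp} that the correction terms prescribed by \eqref{eq:hamiltonianscorrection} cancel these poles, so that each $H_i$ extends regularly across the exceptional divisors. This is a finite sequence of explicit expansions in the local coordinates of \Cref{prop:takanosymplecticatlas}, which I would carry out with a computer algebra system; since the atlas coincides, up to relabelling, with the one of Takano et al.\ \cite{takano1}, holomorphicity is guaranteed a priori and the computation serves only as a consistency check.
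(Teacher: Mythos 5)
Your proposal is correct and matches the paper's (implicit) proof: the paper offers no separate argument for this corollary, deriving it exactly as you do from the canonical chart $(x_0,y_0)=(f,g/f)$ in which the KNY system takes Okamoto's Hamiltonian form, from the symplectic atlas of \Cref{prop:takanosymplecticatlas} (so that all structure functions $F_U\equiv 1$ and condition \eqref{symplecticF} holds trivially), and from \Cref{symplecticlemma} to propagate the Hamiltonians chart by chart, with regularity of the $H_i$ on the charts inherited from the identification with Takano's construction in \cite{takano1}. No gap to report.
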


\section{Relation of the 3D system with the \texorpdfstring{$\sigma$}{}-form of the \texorpdfstring{$\pain{VI}$}{} equation}
\label{app:sigma}

In \cite[Prop. 3.2, Th. 3.3, Th. 3.4]{MinChen} a connection between the solution of the system \eqref{syst3D} coming from the degenerate Jacobi weight and the $\sigma$-form of the sixth Painlev\'e equation was established. 
The authors present only how $r_n(t)$ and $y_n(t)$ (that is, $z(t)$ and $y(t)$ in our notation) are related to an auxiliary function $H(t)$ and its derivatives (see formulas in \cite[Prop. 3.2]{MinChen}). 
This function satisfies a non-linear second-order differential equation of degree two \cite[Th. 3.3]{MinChen} which  is then  related to the $\sigma$-form of the sixth Painlev\'e equation \cite[Th. 3.4]{MinChen} via an affine transformation. We give more details below following \cite{MinChen}.

The $\sigma$-form of the sixth Painlev\'e equation is given by \cite[p. 346]{OkamotoP6}
\begin{equation} \label{sigma}
\begin{aligned}
&
\sigma'(t(t-1)\sigma'')^2+(2\sigma'(t\sigma'-\sigma)-(\sigma')^2-\nu_1\nu_2\nu_3\nu_4)^2\\
&\quad \quad -(\sigma'+\nu_1^2)(\sigma'+\nu_2^2)(\sigma'+\nu_3^2)(\sigma'+\nu_4^2)=0,
\end{aligned}
\end{equation}
where $\sigma=\sigma(t)$, $\sigma'=\tfrac{d}{dt}\sigma$, and $\nu_1,\dots,\nu_4$ are parameters.
In our notation, the parameters of the $\sigma$-form and the parameters in the weight are related by \cite{MinChen}:
\begin{equation}
    \nu_1=\frac{\alpha+\beta}{2},\;\;\nu_2=\frac{\beta-\alpha}{2},\;\;\nu_3=\frac{2n+\alpha+\beta}{2},\;\;\nu_4=\frac{2n+\alpha+\beta+2\gamma}{2}.
\end{equation}
The function $H=H(t)$ defined by
\begin{equation}\label{sigmatoH}
    \sigma=H - \left(n(n+\alpha+\beta+\gamma) + \frac{(\alpha+\beta)^2}{4}\right)t + \frac{2n(n+\alpha+\beta+\gamma)+(\alpha+\beta)\beta-(\alpha-\beta)\gamma}{4}
\end{equation}
satisfies a second-order second-degree differential equation \cite[Th. 3.3]{MinChen} of the form
\begin{equation} \label{seconddegreeH}
    \mathfrak{H}(H,H',H'';t) = 0,
\end{equation}
where $\mathfrak{H}$ is a polynomial functions of its arguments, of degree two in $H''$, which we do not
reproduce here for the sake of brevity. 
This is equivalent to the $\sigma$-form \eqref{sigma} via the relation \eqref{sigmatoH}.

If one differentiates  equation \eqref{seconddegreeH} with respect to $t$, the result is a non-autonomous third-order differential equation for $H$ of  degree one. 
We rewrite this as a first-order system by taking $(H_0,H_1,H_2)=(H,H',
H'')$, so $H_0'=H_1,\;H_1'=H_2$, and $H_2'=H'''$ is known from the third-order equation. 
We shall refer to this as the 3D system for $H$, and similarly refer to the differentiated $\sigma$-form, recast as a first-order system, as the 3D system for $\sigma$. 
For the sake of readability we do not present either of these here.

By construction the 3D system for $H$ has the first integral $\mathfrak{H}(H_0,H_1,H_2;t)$, with $\mathfrak{H}$ as in \eqref{seconddegreeH}, and the level sets 
\begin{equation}
\Gamma_c \defeq \Set{\mathfrak{H}(H_0,H_1,H_2;t)=c } \subset \mathbb{C}_t^{3},  \quad c \in \mathbb{C},
\end{equation} 
are invariant hypersurfaces for the 3D system for $H$.
The second-degree equation \eqref{seconddegreeH} corresponds to the restriction of the 3D system for $H$ to the particular level set $\Gamma_0$.

We will show that this restriction of the 3D system for $H$ to $\Gamma_0$, or equivalently the 3D system for $\sigma$ subject to \eqref{sigma}, can be identified with the system \eqref{syst3D} subject to the constraint \eqref{hypersurfaceconstraint}.

As remarked above, \cite{MinChen} provides expressions for $y$ and $z$ in terms of $H$ and its derivatives, so we just require the expression for $x$, which was not given in \cite{MinChen}. 
This can be obtained from the 3D system for $H$ in addition to the second-order second-degree equation \eqref{seconddegreeH}. 
However, the computations and the final expression is extremely cumbersome, so we only outline the steps involved so that the results can be reproduced in any computer algebra system. 
One needs to first substitute the expressions 
\begin{equation} \label{zyfromMinChen}
    z = \frac{H - t H' + n(n+\alpha+\gamma)}{2n+\alpha+\beta+\gamma}, \quad 
    y = \frac{H - (t-1) H' - n(n+\beta+\gamma)}{2n+\alpha+\beta+\gamma},
\end{equation}
from \cite[Prop. 3.2]{MinChen} into the system \eqref{syst3D}.
From the second of the resulting equations, one can obtain $x^2$  in terms of $x$, $H$, $H'$ and $H''$. 
Then one differentiates this expression with respect to $t$ and substitutes $H'''$ as a rational function of $H,H',H''$ according to the third-order equation for $H$.
Then restriction to the level set $\Gamma_0$ is imposed by replacing $(H'')^2$ by a polynomial in $H,H'$ with coefficients rational in $t$ according to \eqref{seconddegreeH}. 
Finally, using the first equation of the system \eqref{syst3D} for $x'$, with $y$ and $z$ replaced according to \eqref{zyfromMinChen}, and  replacing the powers of $x$ using the known expression for $x^2$ one can finally find the cumbersome rational expression for $x$ in terms of $H,H',H''$. 
It can be checked that this, together with the expressions \eqref{zyfromMinChen} for $y$ and $z$, provides a birational transformation between the copies of $\C^3$ with coordinates $x,\,y,\,z$ and 
$H_0, \,H_1,\,H_2$ respectively. 
This restricts to a birational transformation between the hypersurface \eqref{eq:St} and the particular level set $\Gamma_0$. 
Under this restriction the 3D system \eqref{syst3D} on the hypersurface \eqref{eq:St} is transformed to the 3D system for $H$ on $\Gamma_0$, or similarly via \eqref{sigmatoH} to the 3D system for $\sigma$ subject to \eqref{sigma}.
This provides an additional description of the connection to the $\sigma$-form of $\pain{VI}$ identified in \cite{MinChen}.

\section{An apparent singularity of the system in the second parametrisation with \texorpdfstring{$\p^1\times\p^1$}{} compactification} \label{app:apparentsingularity}
In this appendix we exhibit how the choice of the compactification might lead to  complications in the identification procedure.

Consider the system \eqref{second_system}, and choose as compactification $\p^1 \times \p^1$ rather than $\BF_1$ as was done in \Cref{subsec:soichyper}.
With this choice, we find the following eight indeterminacy points given in coordinates as follows:  
\begin{equation}
    \begin{gathered}
c_1=([0:1],[1:\kappa\beta]),\quad 
c_2=([\kappa:1],[1:-n\kappa t]),\;\\
c_3 =([\kappa:1],[1:-(n+\gamma)\kappa t]),\quad  
c_4 =([\kappa:1-t],[1-t:\alpha  \kappa t]), \;\\
c_5 =([0:1],[1:0]),\quad
c_6 =([\kappa:1-t],[1:0]),\;\\ 
c_7=([1:0],[0:1]),\quad
c_8 =([1:0],[1:-n(n+\gamma)t]),
\end{gathered}
\end{equation}
where, the standard affine chart centred at $([0:1],[0:1])$ has coordinates $(x_0,v)$.
 All of them but two, namely $c_7,\,c_8$, require only one blow up to resolve, i.e. that the system is free of indeterminacy points on the exceptional divisors of these blow ups introduced according to \eqref{eq:coordblowp}.
 The indeterminacy point $c_7$ is resolved after three blow ups in total, first of $c_7$ then of two further indeterminacy points given in coordinates by 
 \[
 c_9:\ (U_7,\,V_7)=(0,\,0), \quad c_{10}:\ (u_9,\,v_9)=\left(\frac{1}{1-t},\,0\right).
 \]
After blowing   up $c_8$,  if one proceeds to then repeatedly look for further indeterminacy points on the exceptional divisor and blow these up, one observes that this procedure does not terminate after a reasonable number of iterations. 
In fact, the indeterminacy point $c_8$ constitutes an \emph{apparent singularity} and should not be blown up at all, as we shall explain next.
We give a schematic representation of the surface obtained by the blowing up $c_1,\ldots,c_7,c_9,c_{10}$ and indicate the indeterminacy point $c_8$ in \Cref{fig:surface:secondsystem}.

Consider a non-autonomous 2D system of the form \eqref{systqpdef} as a rational vector field on $\mathbb{C}^2\times B$, and assume that this has the Painlev\'e property in the sense that it has a space of initial conditions.
Then, choose some compactification $\Sigma$ of the fibre $\mathbb{C}^2$ and consider the system as a rational vector field on $\Sigma \times B$.
The blow ups that must be performed in order to achieve a uniform foliation are centred at the points in the fibre $\Sigma$ through which infinitely many solutions pass.
Such points are always indeterminacy points of the system but the converse is not true.

It might be possible that an indeterminacy point does not require blowing up. Two possible reasons for this follow. The first is that there are actually no solutions passing through the point and it should be removed as part of the inaccessible divisors, see \cite{Stud} for such a case. 
The second is that there is only one solution passing through the point.  
The latter is what is happening in the case of $c_8$, which can be seen as follows.

Consider the proper transform on $\operatorname{Bl}_{c_1\dots c_{10}} \mathbb{P}^1 \times\mathbb{P}^1$ of the line $x_0=\infty$, on which the problematic indeterminacy point $\operatorname{Bl}_{c_1\dots c_{10}}^{-1}(c_8)$ lies. 
This is a $(-1)$-curve and we can contract it to a point, say $p$, at the origin of the chart with coordinates $(x,y)$ given by $x= \frac{1}{x_0 v}$, $y=\frac{1}{x_0}$. 
Pushing the rational vector field forward under the contraction to one in local coordinates $(x,y)$, we see that it becomes regular at $p$.
Then existence and uniqueness theorems for ordinary differential equations give a unique solution in $(x,y)$ coordinates passing through this point for a fixed $t_*$, which when lifted back up to $\operatorname{Bl}_{c_1\dots c_{10}} \mathbb{P}^1\times\mathbb{P}^1$ must be the single solution passing through the point $c_8$.
Indeed it can be verified by direct calculation in this case that the unique solution of the regular initial value problem for the system \eqref{second_system} transformed to coordinates $(x,y)=\left(\frac{1}{x_0 v},\frac{1}{x_0}\right)$ with initial condition $(x(t_*),y(t_*))=(0,0)$ corresponds to one passing through $c_8$ at $t=t_*$.

We must remark that care must be taken of apparent singularities such as $c_8$ only when performing blow ups based purely on consideration of indeterminacy points.

In his original work, Okamoto have chosen the centres of blow up by studying infinite families of solutions passing trough points. In this way there would be no risk of confusion regarding what needs to be blown up, but this requires a preliminary analysis.     

\begin{figure}[htb]
\centering
    \begin{tikzpicture}[basept/.style={circle, draw=black!100, fill=black!100, thick, inner sep=0pt,minimum size=1.2mm}]
    \begin{scope}[xshift = -4cm]
    \node at (0,2.5) {$\mathbb{P}^1\times \mathbb{P}^1$};
    \draw[thick] 
    (-2,-2)--(-2,2)
    (-2.5,-1.5)--(2.5,-1.5) 
    (2,-2)--(2,2) 
    (-2.5,1.5)--(2.5,1.5)
      (-.66,-2)--(-.66,2)  
      (.66,-2)--(.66,2)  
      ;     
      \node[left] at (-2.5,-1.5) {\tiny $v=0$};
      \node[left] at (-2.5,1.5) {\tiny $v=\infty$};     
      \node[below] at (-2,-2) {\tiny $x_0=0$};
      \node[below] at (2,-2) {\tiny $x_0=\infty$};
      \node[below] at (-.66,-2) {\tiny $x_0=\kappa$};
      \node[below] at (.66,-2) {\tiny $x_0=\frac{\kappa}{1-t}$};
      
	 \node (A1) at (-2,-1) [basept,label={[xshift=-7pt, yshift = -3 pt] \small $c_{1}$}] {};
	 \node (A5) at (-2,+1.5) [basept,label={[xshift=-7pt, yshift = -3 pt] \small $c_{5}$}] {};
	 \node (A2) at (-.66,-.5) [basept,label={[xshift=-7pt, yshift = -3 pt] \small $c_{2}$}] {};
	 \node (A3) at (-.66,+.75) [basept,label={[xshift=-7pt, yshift = -3 pt] \small $c_{3}$}] {};	 
      \node (A4) at (+.66,.25) [basept,label={[xshift=-7pt, yshift = -3 pt] \small $c_{4}$}] {};
	 \node (A6) at (+.66,+1.5) [basept,label={[xshift=-7pt, yshift = -3 pt] \small $c_{6}$}] {};
    \node (A7) at (2,-1.5) [basept,label={[xshift=-7pt, yshift = -3 pt] \small $c_{7}$}] {};
    \node (A8) at (2,-.25) [basept,magenta,label={[xshift=-7pt, yshift = -3 pt] \small $c_{8}$}] {};

	 \node (A9) at (2.7,-2.25) [basept,label={[xshift=+5pt, yshift = 0 pt] \small $c_{9}$}] {};

    \node (A10) at (3.2,-2.25) [basept,label={[xshift=+7pt, yshift = 0 pt] \small $c_{10}$}] {};
    \draw[thick, ->] (A9) --(2.3,-1.5)-- (A7);
    \draw[thick, ->] (A10) -- (A9);

    \end{scope}

 \draw [->] (.8,0)--(-.8,0) node[pos=0.5, below] {$\text{Bl}_{c_1\dots c_{7}}$};
 
    \begin{scope}[xshift = +4cm]
    \draw[thick,blue] 
    (-2,-1.75)--(-2,1);
     \draw[thick,red]    
    (-2.25,-1.5)--(1.625,-1.5);
        \draw[thick,red] 
    (2,-1.125)--(2,1.75) ;
    \draw[thick, blue] 
    (-1.5,1.5)--(2.25,1.5);
    \draw[thick,blue] 
      (-.66,-1.75)--(-.66,1.75);  
    \draw[thick,blue] 
      (.66,-1.75)--(.66,1)  ;
    \draw[red, thick] (-2.5,-1) -- (-1.5,-1) node[pos=0,xshift=-5pt,yshift=-5pt] {\small $C_1$}; 
    \draw[red, thick] (-1.0,1.75) --(-2.25,0.5) node[pos=1,xshift=-5pt,yshift=-5pt] {\small $C_5$};

    \draw[red, thick] (-1.16,-.5) -- (-.16,-.5) node[pos=0,xshift=-5pt,yshift=-5pt] {\small $C_2$};
    \draw[red, thick] (-1.16,+.75) -- (-.16,+.75) node[pos=0,xshift=-5pt,yshift=-5pt] {\small $C_3$};
    \draw[red, thick] (+.16,+.25) -- (+1.16,+.25) node[pos=0,xshift=-5pt,yshift=-5pt] {\small $C_4$};
    \draw[red, thick] (0.41,0.5) -- (+1.66,1.75)  node[pos=1,xshift=-7pt,yshift=+5pt] {\small $C_6$};      
    
    
    \draw[red, thick] (1.0,-1.75) --(2.25,-0.5);      
    \node (A8) at (2,-.25) [basept,magenta,label={[xshift=-7pt, yshift = -3 pt] \small $c_{8}$}] {};

      \node (A9) at (1.25,-1.5) [basept,label={[xshift=-7pt, yshift = +0 pt] \small $c_{9}$}] {};
      \node (A10) at (1.75,-2.25) [basept,label={[xshift=+7pt, yshift = -3 pt] \small $c_{10}$}] {}; 
    \draw[thick, ->] (A10) -- (A9);
    \end{scope}

    	\draw [->] (4,-3.5)--(4,-2.7) node[pos=0.5, left] {$\text{Bl}_{c_9}$};

    \begin{scope}[xshift = +4cm, yshift=-6cm]
    \draw[thick,blue] 
    (-2,-1.75)--(-2,1);
     \draw[thick,blue]    
    (-2.25,-1.5)--(1.4,-1.5);
        \draw[thick,red] 
    (2,0)--(2,1.75) ;
    \draw[thick, blue] 
    (-1.5,1.5)--(2.25,1.5);
    \draw[thick,blue] 
      (-.66,-1.75)--(-.66,1.75);  
    \draw[thick,blue] 
      (.66,-1.75)--(.66,1)   ;
    \draw[red, thick] (-2.5,-1) -- (-1.5,-1) node[pos=0,xshift=-5pt,yshift=-5pt] {\small $C_1$};
    \draw[red, thick] (-1.0,1.75) --(-2.25,0.5) node[pos=1,xshift=-5pt,yshift=-5pt] {\small $C_5$};

    \draw[red, thick] (-1.16,-.5) -- (-.16,-.5) node[pos=0,xshift=-5pt,yshift=-5pt] {\small $C_2$};
    \draw[red, thick] (-1.16,+.75) -- (-.16,+.75) node[pos=0,xshift=-5pt,yshift=-5pt] {\small $C_3$};
    \draw[red, thick] (+.16,+.25) -- (+1.16,+.25) node[pos=0,xshift=-5pt,yshift=-5pt] {\small $C_4$};
    \draw[red, thick] (0.41,0.5) -- (+1.66,1.75)  node[pos=1,xshift=-7pt,yshift=+5pt] {\small $C_6$};      
    
    
    \draw[blue, thick] (1.0,-0.75) --(2.25,+0.5);      
    \node (A8) at (2,.75) [basept,magenta,label={[xshift=-7pt, yshift = -3 pt] \small $c_{8}$}] {};
    \draw[red, thick] (+1.15,-1.75) -- (+1.15,-.4);
      \node (A10) at (1.15,-1.1) [basept,label={[xshift=+10pt, yshift = -5 pt] \small $c_{10}$}] {};
    \end{scope}

 \draw [->] (.8,-6)--(-.8,-6) node[pos=0.5, below] {$\text{Bl}_{c_{10}}$};

\draw [->] (-4,-3.5)--(-4,-2.7) node[pos=0.5, left] {$\text{Bl}_{c_1\dots c_7 c_{9} c_{10}}$};

    \begin{scope}[xshift = -4cm, yshift=-6cm]
    \draw[thick,blue] 
    (-2,-1.75)--(-2,1);
     \draw[thick,blue]    
    (-2.25,-1.5)--(1.4,-1.5);
        \draw[thick,red] 
    (2, 0)--(2,1.75) ;
    \draw[thick, blue] 
    (-1.5,1.5)--(2.25,1.5);
    \draw[thick,blue] 
      (-.66,-1.75)--(-.66,1.75);  
    \draw[thick,blue] 
      (.66,-1.75)--(.66,1)  ;
    \draw[red, thick] (-2.5,-1) -- (-1.5,-1) node[pos=0,xshift=-5pt,yshift=-5pt] {\small $C_1$};
    \draw[red, thick] (-1.0,1.75) --(-2.25,0.5) node[pos=1,xshift=-5pt,yshift=-5pt] {\small $C_5$};

    \draw[red, thick] (-1.16,-.5) -- (-.16,-.5) node[pos=0,xshift=-5pt,yshift=-5pt] {\small $C_2$};
    \draw[red, thick] (-1.16,+.75) -- (-.16,+.75) node[pos=0,xshift=-5pt,yshift=-5pt] {\small $C_3$};
    \draw[red, thick] (+.16,+.25) -- (+1.16,+.25) node[pos=0,xshift=-5pt,yshift=-5pt] {\small $C_4$};
    \draw[red, thick] (0.41,0.5) -- (+1.66,1.75)  node[pos=1,xshift=-7pt,yshift=+5pt] {\small $C_6$};      
    
    
    \draw[blue, thick] (1.0,-0.75) --(2.25,+0.5);      
    \node (A8) at (2,.75) [basept,magenta,label={[xshift=-7pt, yshift = -3 pt] \small $c_{8}$}] {};
    \draw[blue, thick] (+1.15,-1.75) -- (+1.15,-.4);
    \draw[red, thick] (1,-1.1) -- (2, -1.1)    node[pos=1,xshift=+7pt,yshift=+5pt] {\small $C_{10}$};
    \end{scope}
\end{tikzpicture}
\caption{Sequence of blow ups and surface for system \eqref{second_system} with $\p^1 \times \p^1$ compactification.
In red $(-1)$-curves and in blue $(-2)$-curves.
}
\label{fig:surface:secondsystem}
\end{figure}
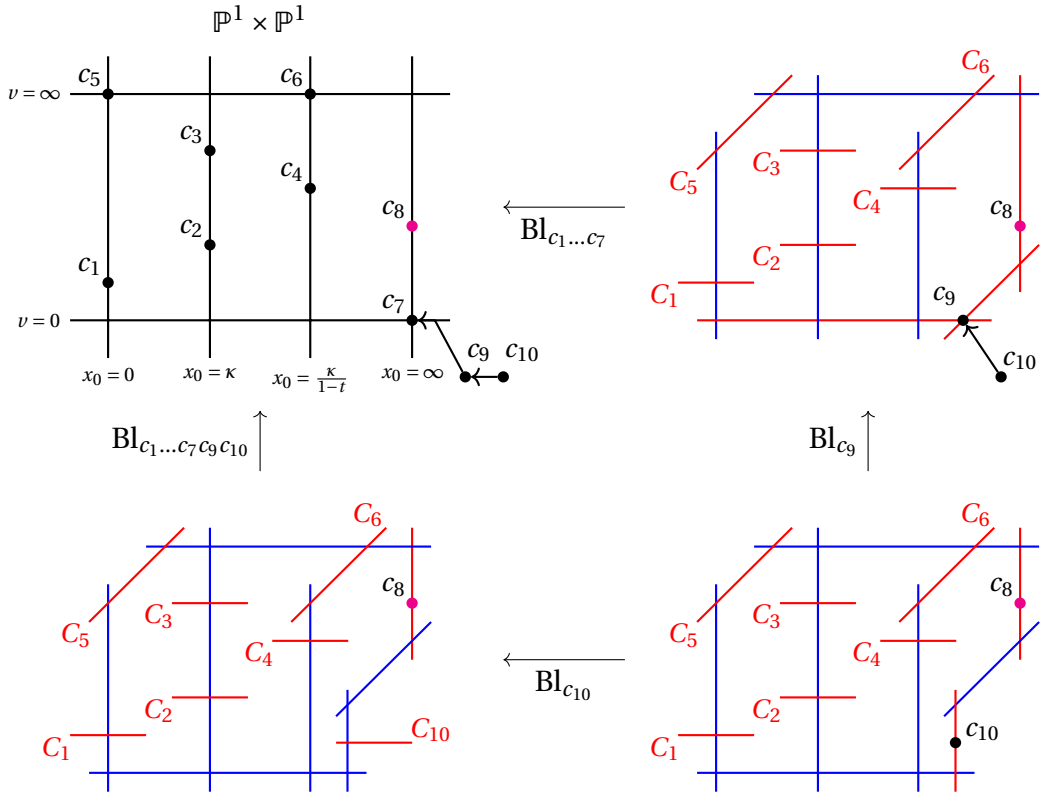 

\section{Additional autonomous limits for the 3D system}

\label{app:clim}

In this appendix we discuss the exponential-type autonomous
limits for the system~\eqref{syst3D}. We recall that, following \cite{JoshiP6},
this autonomous limits are obtained through the change of independent variable:
\begin{equation}
    t \to e^\tau + t_0, \quad t_0\in\Set{0,1},
    \label{eq:cvlim}
\end{equation}
then taking $e^\tau\to 0$.

\subsection{Exponential type autonomous limit as $t\to0$}

Consider the case $t_0=0$ in \eqref{eq:cvlim}. Then, in the limit 
above, system \eqref{syst3D} collapses to the following system of 
algebro-differential equations:
\begin{equation}\label{syst limit 0}
    \begin{aligned}
        x'=&(\kappa -x) (\beta -x+2 z),  \\
   z' =&\frac{z (\kappa -x) (\beta +z) (\kappa+x-2)}{x (\kappa-1)}=0,
    \end{aligned}
\end{equation} 
where differentiation is with respect to $\tau$. 
Note that we had to multiply  both sides  of the second equation of the non-autonomous system by $e^{\tau}$ and then let $e^{\tau}\to 0$. We see that the variable $y$ completely disappeared.

Then $z$ must be constant $z=z_*$ and the hypersurface
$S_t$ degenerates to a double plane:
\begin{equation}\label{S0}
S_0: \; z_*(z_*+\beta)(\kappa -x )^2=0,
\end{equation}
which is consistent with $S_a$ from the first subsection when $a=0$  and $\varepsilon\to 0$ by taking \eqref{xyz scaling} and \eqref{par scaling}, that is, with  \eqref{S0 limit}.

In the end, we see that this autonomous system is trivial, in the sense that it admits 
only the trivial solutions:
\begin{equation}
    \begin{gathered}
        x = \kappa,
        \quad z = z_*,
        \\
        x = -\alpha-\beta-\gamma-2n+1,
        \quad
        z = -\frac{1}{2}\alpha-\beta-\frac{1}{2}\gamma-n+\frac{1}{2},
        \\
        x = \frac{\kappa e^{(\beta-\kappa)(\lambda+\tau)}-\beta}{e^{(\beta-\kappa)(\lambda+\tau)}-1},
        \quad z_* = 0,
        \\
        x= -\frac{\beta e^{(\beta+\kappa)(\lambda+\tau)}+\kappa}{e^{(\beta+\kappa)(\lambda+\tau)}-1},
        \quad 
        z_* = - \beta,
    \end{gathered}
\end{equation}
where $\lambda$ is an arbitrary constant.

\subsection{Exponential type autonomous limit as $t\to1$}

Consider now $t_0=1$ in \eqref{eq:cvlim}. Then, in the limit described 
above the system \eqref{syst3D} becomes the following:
\begin{equation}
    \label{eq:sys3dcollaps}
\begin{aligned}
x'&=  x (\kappa -\alpha-1 +2 y)-\kappa  (\beta +2 z),\;\quad y'=0,  \\
(\kappa-1)xz'&= x^2 y (y-\alpha )
   + x\left(n  
   (\gamma +2 y-2 z+n) + y (\alpha +\gamma -2 z)-z (\beta
   +\gamma )\right)
   \\& \quad -z (\beta +z)  \left((  \kappa-1)^2-1\right),
\end{aligned}
\end{equation} 
where the differentiation is with respect to $\tau$. From the second equation of \eqref{eq:sys3dcollaps}, we see that $y$ is constant. This system has the following invariant algebraic hypersurface:
\begin{equation}
    S_1=\Set{h_1(x,y,z)=0},
\end{equation}
where
\begin{equation}
    h_1 = \kappa  x \left(n^2+n (\gamma +2 y-2 z)+y (\alpha +\gamma -2 z)-z (\beta +\gamma )\right)+x^2 y (y-\alpha )+\kappa ^2
   z (\beta +z).
\end{equation}

Note that the   hypersurface $S_1$ is obtained from the hypersurface $S_a$ in
\eqref{Sa} as $a\to1$, and $\varepsilon\to 0$ by taking \eqref{xyz scaling} and \eqref{par scaling}. Division by $x$ of the defining polynomial of the hypersurface does not give the first integral as before.

Let us now discuss the properties of the solutions of the system 
\eqref{eq:sys3dcollaps}. We observe, that using the constancy of $y$ we 
obtain a single second-order equation for $x$. This equation has the following
form:
\begin{equation}
    x'' +\frac{ (\kappa-2)(x')^2}{2  (\kappa-1)x}+V_3'(x)=0,
    \label{eq:redsys2}
\end{equation}
where $V_3$ is an assigned function, of which we omit the explicit expression 
for the sake of brevity.
From its shape, it is evident  that equation~\eqref{eq:redsys2} has the
first integral $\mathcal{I}$ defined by
\begin{equation}
\begin{gathered}
 x^{1-\frac{1}{\alpha +\beta +\gamma +2 n}} \mathcal{I}=  \beta ^2 \kappa^2+x^2 \left((\beta +\gamma )^2+4 n(n+\beta +\gamma )+4 y
   (\kappa-1)\right)\\-2 x \kappa \left(\beta  (\beta +\gamma ) +2 n (n+\beta
   +\gamma )+2 y (\kappa-1)\right)  -(x')^2.
\end{gathered}
\end{equation}
For generic values of the parameters the level curves of the function
$\mathcal{I}$ are not algebraic curves. For this reason, in general, the 
system~\eqref{eq:redsys2} will not possess the Painlev\'e property.

\end{document}